\documentclass[12pt]{article}

 \usepackage{graphicx} \usepackage{float} \usepackage{verbatim} \usepackage{url}  \usepackage{psfrag} \usepackage{multicol} \usepackage{multirow} \usepackage{bigstrut} \usepackage{enumerate} \usepackage{booktabs}  \usepackage{amsfonts} \usepackage{rotating} \usepackage{ragged2e} \usepackage{xcolor}

\usepackage{subcaption}
\usepackage{setspace}
\usepackage{fullpage}
\usepackage[utf8]{inputenc}
\usepackage[USenglish]{babel}
\usepackage{amssymb}
\usepackage{amsthm}
\usepackage{amsmath}
\usepackage{ulem}
\usepackage{hyperref}
\usepackage{xcolor}
\usepackage{booktabs}
\usepackage{enumitem}
\hypersetup{
   colorlinks=true,linkcolor=blue,citecolor=blue, filecolor=blue, citebordercolor=blue,
   linkbordercolor = blue
}
\usepackage{titlesec}
\titleformat*{\subsection}{\bfseries}

\usepackage{enumitem}
\usepackage{caption}
\usepackage{subcaption}
\usepackage{mwe}

\newtheorem{proposition}{Proposition}

\usepackage[authoryear]{natbib}
\renewcommand{\cite}{\citet}
\newtheorem{corollary}{Corollary}[section]

\newtheorem{definition}{Definition}[section]
\makeindex

\begin{document} 
\onehalfspacing
\renewcommand{\tablename}{{{Table}}}

\title{\vspace{-5ex}\LARGE{{Reviving Micro Real Rigidities: The Importance of Demand Shocks\thanks{The views expressed here are those of the authors and not necessarily those of the Federal Reserve Bank of Atlanta or the Federal Reserve System. The authors are grateful for the invaluable feedback they received on an earlier draft from Jón Steinsson. The authors would like to thank Nir Jaimovich (editor) and four anonymous referees, as well as Isaac Baley, Mark Bils, Laura Castillo-Mart\'{i}nez (discussant), Basile Grassi, Susanto Basu, John Haltiwanger, Pete Klenow, Francesco Lippi, Virgiliu Midrigan, Tobias Renkin, Luminita Stevens, Javier Tur\'{e}n, Rosen Valchev, and seminar participants at the NBER 2024 SI (ME), UC Berkeley, Boston College, the University of Maryland, the Federal Reserve Board of Governors, the Federal Reserve Banks of Atlanta, Boston, Chicago, and Richmond, ITAM, IMF, Banque de France, BIS, Bocconi, EIEF, CREI, the University of Iowa, Georgetown University, PUC-Chile, Michigan State University, the European Central Bank, the Deutsche Bundesbank, the European University Institute, North Carolina State University, City University of Hong Kong, and Shanghai University of Finance and Economics for useful comments and suggestions. Any remaining errors are the authors’ responsibility. }}}}
\author{S. Bora\u{g}an Aruoba\thanks{University of Maryland and NBER} \and Eugene Oue\thanks{Shandong University} \and Felipe Saffie\thanks{University of Virginia, Darden School of Business and NBER} \and Jonathan L. Willis\thanks{Federal Reserve Bank of Atlanta} }
\date{\today}

\maketitle

\vspace{-8ex}

\begin{center}\end{center}
\vspace{-4ex}
\begin{abstract}

We revisit micro real rigidities as a source of monetary non-neutrality in a menu-cost model with variable markups, using firm-level evidence to pin down key primitives. We embed a non-CES demand system in a quantitative monetary model and use firm-dynamics evidence to identify demand curvature and firm-level productivity and demand processes. The calibrated model matches untargeted micro pricing moments, the markup distribution, and cost pass-through, while generating comparable non-neutrality. The key innovation is an empirically supported placement of idiosyncratic demand shocks that shifts residual demand, thereby moving desired markups and prices under non-CES demand.  The broader implication is that this calibrated model provides a portable framework linking monetary economics with trade and IO evidence.


\end{abstract}
\begin{center}\end{center}
\vspace{-4ex}

\textbf{Keywords}: Menu costs, strategic complementarities, demand shocks, sticky prices, monetary non-neutrality.\vspace{1ex} 

\textbf{JEL Classifications}: E30, E52, L11\vspace{1ex}

\section{Introduction}

Modeling the response of output and prices to monetary policy shocks has long been central to macroeconomic research. Empirical studies consistently show that such shocks have persistent and substantial effects on real output \citep{christiano1999monetary, ramey2016macroeconomic}. These real effects arise because prices do not fully adjust to nominal disturbances. How much they adjust depends on two distinct forces: nominal frictions that prevent firms from changing prices, and structural features of demand and costs that limit how much firms want to adjust prices even when they can. \cite{ball1990real} show that nominal frictions alone are insufficient, and that a second force, which they termed real rigidities, is necessary to generate quantitatively significant monetary non-neutrality.\footnote{The term real rigidities is standard in the macroeconomic literature following \cite{ball1990real}. Closely related mechanisms, however, are widely studied in the industrial organization and international trade literatures under different terminologies, including incomplete cost pass-through, variable markups, demand curvature, and strategic complementarities in pricing. Thus, what macroeconomists refer to as real rigidities largely corresponds to features of demand systems, competitive structure, and markup adjustment that are central objects of study in these other fields.}

A key determinant of how much firms want to adjust prices is the shape of the demand they face. Under constant elasticity of substitution (CES) demand, the standard assumption in quantitative pricing models, desired markups are constant and firms pass cost changes to prices one for one. Deviations from CES, where demand becomes more or less price sensitive depending on a firm's market share, generate variable markups and incomplete pass-through of cost shocks, features that are quantitatively important across many fields of economics. Non-CES demand systems such as \cite{kimball1995quantitative} are widely used to study markup variability \citep{edmond2018costly, arkolakis2019elusive}, exchange rate pass-through \citep{gopinath2010frequency,gopinath2011search, amiti2019international, berger2019shocks}, and inflation dynamics \citep{smets2007shocks, harding2022resolving, harding2023understanding, fujiwara2022competition}. Yet in the quantitative monetary economics literature, their adoption has been limited.

Instead, the quantitative monetary economics literature has primarily relied on macro-level real rigidities to amplify the real effects of nominal shocks, including production networks with sticky intermediate inputs \citep{basu1995intermediate,nakamura2010monetary}, segmented labor markets \citep{woodford2003,gertler2008phillips}, and real wage rigidity \citep{blanchard2007real}. These mechanisms are effective at generating monetary non-neutrality, but they operate at the aggregate level and bypass the firm-level demand structure that governs pricing, markups, and pass-through. As a result, a natural connection between the monetary pricing literature and a large body of work in industrial organization, trade, and firm dynamics, where the shape of demand and its interaction with firm-level shocks are central objects of study, has remained largely detached.

This paper bridges this gap by bringing an empirically disciplined non-CES demand system into a quantitative monetary economics model. We make three contributions. First, we show theoretically how variable markups interact with idiosyncratic demand shocks to shape firms' desired prices, and we identify firm-level moments that discipline the key demand parameters. Second, we build a menu-cost model calibrated to firm-level evidence on productivity and demand processes from \cite{foster2008reallocation}, leaving standard pricing moments untargeted, and show that the model matches them well alongside the markup distribution, cost pass-through rates, and firm growth dynamics. Third, we show that the estimated demand system generates monetary non-neutrality comparable in magnitude to the macro-level mechanisms that the literature has pursued. The key innovation is to discipline and incorporate idiosyncratic demand shocks, which under non-CES demand move desired markups and prices.

The theoretical contribution, developed in Section \ref{sec:simple}, can be summarized as follows. Under CES demand, the elasticity a firm faces is constant, so the markup over marginal cost is constant, and only shocks to marginal cost, such as productivity shocks, pass through to prices. In this case, prices and productivity are perfectly negatively correlated, leading to a full pass-through of cost shocks to prices. However, a large body of empirical work in trade, industrial organization, and macroeconomics documents that the pass-through of cost shocks to prices is well below one, typically in the range of 20 to 50 percent. This is evidence that demand elasticities, or equivalently markups, are not constant, which is precisely what non-CES demand systems deliver. Under the standard parametrization of \cite{kimball1995quantitative}, for instance, a firm with a larger effective market share faces less price-sensitive demand, allowing it to charge a higher markup. 

This variable-elasticity structure has two implications. First, it generates incomplete pass-through of cost shocks, as changes in output move the firm along its demand curve and partially offset the incentive to adjust prices. Second, it opens the door for demand shocks to affect desired prices: a positive demand shock that raises a firm's effective market share shifts the residual demand conditions it faces and raises its desired markup and its desired price. The price-productivity correlation estimated at $-0.54$ by \cite{foster2008reallocation}, far from the $-1$ implied by CES, is direct evidence that demand shocks move desired markups in the data. Moreover, the distance from $-1$ and the estimate of the demand elasticity from the same study provide a strategy to identify the shape of the demand system.

We implement this identification strategy in a menu-cost model with a \cite{kimball1995quantitative} demand system, calibrated to match firm-level moments from \cite{foster2008reallocation}. This study is among the few that separately identify productivity and demand processes at the firm level using data on prices, quantities, and inputs for U.S. manufacturing firms. We target the persistence and cross-sectional dispersion of firm-level productivity and demand, the price-productivity correlation, an estimate of the demand elasticity, and the frequency of price changes. Importantly, we do not target any other pricing moments. Despite this, the calibrated model replicates the average size and direction of price changes, the dispersion of non-zero price adjustments, and a mildly downward-sloping hazard of price adjustment, all consistent with U.S. Consumer Price Index (CPI) microdata. The model also generates a cross-sectional distribution of markups that closely resembles estimates from Compustat data, and a cost pass-through rate of 43 percent, within the range documented in the empirical literature.\footnote{\cite{aruoba2025pricing} underscore the flexibility of the Kimball framework using Chilean microdata, incorporating additional features such as leptokurtic shocks and news shocks. While distinct from our approach, their findings further highlight the value of this framework for explaining pricing behavior.} These results show that a model disciplined by firm dynamics evidence can account for pricing patterns that the literature has traditionally targeted directly.

We assess the robustness of these results along several dimensions. We recalibrate the model using firm-level estimates from Colombian manufacturing data from \cite{eslava2024size}, who employ an alternative methodology that relaxes the orthogonality assumption between productivity and demand shocks. The qualitative patterns are similar. We also show that the results are stable when we allow correlated shocks, target a broader set of moments, or introduce leptokurtic demand shocks to match the kurtosis of price changes.

Turning to monetary non-neutrality, the calibrated model generates cumulative output responses to a nominal expenditure shock approximately 34\% larger than that produced by a comparable CES model. Two forces drive this result: the non-CES demand system, which induces strategic complementarity in pricing, and the inclusion of both productivity and demand shocks, which weakens selection with respect to the aggregate shock by increasing the share of adjusters responding to idiosyncratic conditions rather than the aggregate disturbance. The first force accounts for the larger share of the amplification. Notably, our micro-calibrated demand system more than doubles monetary non-neutrality even in a \cite{calvo1983staggered} model, where selection is absent by construction. The degree of non-neutrality generated by our model is comparable to that reported in frameworks relying on macro-level complementarities, such as the multi-sector production network model of \cite{nakamura2010monetary}.

\paragraph{Related literature.} Our paper contributes to the quantitative menu cost literature initiated by \cite{golosov2007menu}, who show that standard menu-cost models with CES demand generate limited monetary non-neutrality. A key reason non-CES demand was set aside in this literature is the finding by \cite{klenow2016real} that, in productivity-only calibrations with variable markups, matching the distribution of price changes requires very volatile idiosyncratic driving forces, with counterfactual implications for observable firm-level outcomes. We show that this tension is conditional on two features of their analysis: the absence of idiosyncratic demand shocks, and the treatment of demand curvature as given rather than estimated from firm-level data. In Section \ref{sec:ext_valid_robust}, we restate the analysis in terms of observable moments and show that, once curvature is disciplined and demand shocks are included, the tension disappears. 

Our work is complementary to \cite{mongey2021market}, who shows that strategic interactions in a dynamic oligopoly model can also overturn the \cite{klenow2016real} finding. However, here our approach is different: rather than proposing a different market structure, we aim to make the benchmark framework of monopolistic competition with non-CES demand, one commonly used elsewhere in economics, empirically disciplined and quantitatively usable in the monetary economics literature. In fact, \cite{wang2022dynamic} provide a bridge between these approaches by showing that, for standard monetary experiments, an oligopolistic model can be well approximated by a non-strategic benchmark with modified Kimball preferences, with most quantitative effects operating through residual-demand feedback rather than strategic interaction. This supports the use of a disciplined Kimball demand system as a parsimonious and portable reduced form across applications. \cite{beck2020price} provide micro-based estimates of demand elasticities and super-elasticities using European data, finding moderate curvature that on its own generates limited non-neutrality, which underscores the importance of incorporating demand shocks as we do. More broadly, our identification strategy is part of a growing effort to discipline pricing primitives using firm-level microdata rather than aggregate moments, as in \cite{gagliardone2025anatomy}.

Firms may also find relative price adjustments costly if they face increasing marginal costs. When marginal costs rise with output, per-unit profits fall as firms expand, restraining price cuts and generating strategic complementarity through the cost side. Cost-side mechanisms include decreasing returns to scale \citep{burstein2007prices} and segmented input markets \citep{woodford2003, gertler2008phillips}. We explore this alternative in Section \ref{sec:ext_valid_robust} and find that it can match several moments of the data, but requires very low returns to scale and does not replicate the empirical markup distribution as well as the demand-side specification.

The paper is organized as follows. Section \ref{sec:simple} develops a simple theoretical framework that formalizes the role of demand curvature and idiosyncratic demand shocks in price setting, and establishes the identification results summarized above. Section \ref{sec:quantitative} introduces the quantitative menu-cost model. Section \ref{sec:calibration} describes the calibration strategy and reports the fit to targeted and untargeted moments. Section \ref{sec:ext_valid_robust} discusses external validity and presents robustness exercises, including the Klenow--Willis variants and the cost-side alternative. Section \ref{sec:nonneutrality} examines the implications for monetary non-neutrality. Section \ref{sec:conclusion} concludes.

\section{Real Rigidities and Demand Shocks} \label{sec:simple}

This section develops a simple framework to isolate real rigidities -- forces that dampen desired price changes in response to changes in aggregate real demand.  To do so, this framework abstracts from any nominal pricing frictions that may prevent firms from  responding to shocks. The exposition follows \cite{ball1990real} and \cite{nakamura2008five} and adapts their framework to our setting.

We proceed in three steps. First, we formalize macro and micro real rigidities in a frictionless price-setting problem and connect micro rigidities to primitives through demand and cost curvature. Second, we introduce deviations from CES as a source of micro real rigidities. Third, we show how the placement of idiosyncratic demand shocks affects desired markups and the correlation between desired prices and idiosyncratic productivity.

\subsection{Micro and Macro Real Rigidities} \label{sec:simple_micro_macro}

Consider a static frictionless price-setting problem. The firm chooses a desired relative price to maximize profit:
\begin{equation} \label{eq:Pi}
\max_{p_i} \ \Pi \left(\frac{p_i}{P},\frac{S}{P},A_i\right),
\end{equation}
where ${p_i}/{P}$ is the firm's relative price, ${S}/{P} = Y$ is real aggregate demand ($S$ represents nominal aggregate demand), and $A_i$ collects idiosyncratic primitives. Note that the profit function includes only the technological cost of production and abstracts from nominal rigidities. The firm's profit-maximizing relative price ${p_i^*}/{P}$ satisfies the first-order condition
\begin{equation}
\Pi_1\left(\frac{p_i^*}{P},\frac{S}{P},A_i\right) = 0,
\end{equation}
where $\Pi_1$ denotes the partial derivative with respect to the firm's relative price.

Real rigidities refer to mechanisms that attenuate the responsiveness of firms’ desired prices to changes in aggregate shocks. To formalize this, define the responsiveness of the firm’s desired relative price with respect to real aggregate demand as
\begin{equation}\label{eq:RR}
\phi \equiv \frac{\partial \left(\dfrac{p_i^*}{P}\right)}{\partial \left(\dfrac{S}{P} \right)}
= - \frac{\Pi_{12}\left(\dfrac{p_i^*}{P},\dfrac{S}{P},A_i\right)}{\Pi_{11}\left(\dfrac{p_i^*}{P},\dfrac{S}{P},A_i\right)}.
\end{equation}
The second-order condition for profit maximization \(\Pi_{11}<0\), together with the local stability condition \(\Pi_{12}>0\), implies \(\phi>0\).\footnote{\label{fn:stability} To determine the sign of \(\Pi_{12}\), consider an exogenous increase in real aggregate demand \(dY>0\) with \(Y\equiv S/P\) (for example, a nominal spending shock \(dS>0\) at given \(P\)). If \(\Pi_{12}>0\), then \(\partial (p_i^*/P)/\partial Y>0\) at the firm level; in equilibrium the aggregate price level \(P\) rises and partially offsets the increase in \(Y\) (stabilizes \(Y\)). If \(\Pi_{12}=0\), desired relative prices are invariant to \(Y\), so \(P\) is unchanged and the shock passes one-for-one to \(Y\). If \(\Pi_{12}<0\), desired relative prices fall, \(P\) declines, and the disturbance is amplified, resulting in output instability.}
A smaller \(\phi\) corresponds to stronger real rigidity, as desired prices respond less to changes in real aggregate demand.

Real rigidities are closely linked to strategic interactions in pricing. Let 
\[\zeta \equiv \left.\dfrac{\partial \ln p_i^*}{\partial \ln P}\right|_{S,\,A_i}\]
denote the elasticity of a firm’s desired price with respect to the aggregate price level, holding nominal spending \(S\) and idiosyncratic primitives fixed. In a symmetric equilibrium with \(p_i^*=P\) and \(Y\equiv S/P\), the elasticity of the desired relative price with respect to real aggregate demand satisfies
\[
\left.\frac{\partial \ln \left(\dfrac{p_i^*}{P}\right )}{\partial \ln Y}\right|_{S,\,A_i}=1-\zeta.
\]
Thus, stronger real rigidity in desired relative prices corresponds to stronger strategic complementarity (larger \(\zeta\)); Appendix \ref{app:rigidity_sc} provides the derivation.

Following \cite{ball1990real}, it is useful to distinguish macro and micro sources of real rigidity. The specification in \eqref{eq:RR} provides a convenient way of doing so: stronger real rigidities (small $\phi$) can arise from either small $\Pi_{12}$ or large $|\Pi_{11}|$. Macro real rigidities are those that make $\Pi_{12}$ small: as $Y$ changes, the responsiveness of profits to the firm's choice of its own price does not change by much.\footnote{\cite{nakamura2010monetary} label the associated strategic complementarities as \(\Omega\)-type for macro sources and \(\omega\)-type for micro sources. Real wage rigidity and sticky input prices are two popular macro real rigidities. In both cases, when aggregate demand rises, marginal cost movements are muted and common across firms, attenuating the response of desired relative prices to changes in aggregate demand.} 

Micro real rigidities stem from curvature in the profit function as captured by $|\Pi_{11}|$. A high degree of curvature implies that the profit-maximizing relative price responds only modestly to shocks, as is immediate from \eqref{eq:RR}. The profit function can have high curvature, a large $|\Pi_{11}|$, either due to a more concave demand schedule or a more convex cost schedule. To connect $|\Pi_{11}|$ to primitives, consider a standard environment underlying \(\Pi(\cdot)\) in which profits take the form \((p_i-mc_i(y_i))\,y_i\), where \(y_i\) is residual demand and \(mc_i(\cdot)\) is marginal cost. More concave residual demand makes mispricing increasingly costly: moving away from \(p_i^*\) triggers a disproportionately large loss in quantity and revenue, raising \(|\Pi_{11}|\). More convex costs also raise \(|\Pi_{11}|\): price cuts that expand sales push up marginal cost and compress per-unit profits, dampening the gain from adjusting the relative price.\footnote{Cost-side mechanisms generating such rigidity include decreasing returns to scale in production \citep{burstein2007prices} and segmented input markets that yield upward-sloping cost curves at the firm level \citep{woodford2003,gertler2008phillips}.} We focus on demand curvature in the main text and present the parallel cost-side results in Appendix~\ref{app:cost_side_theory}.\footnote{
Differentiating the first-order condition with respect to an idiosyncratic primitive shows that the responsiveness of desired prices to idiosyncratic shocks is decreasing in $|\Pi_{11}|$. In environments with stronger micro real rigidities, achieving a given magnitude of price adjustment therefore requires larger disturbances to idiosyncratic driving forces, which is the basis of the critique in \cite{klenow2016real}.}

\subsection{Curvature of Demand: Deviations from CES} \label{sec:simple_nonCES}

In order to generate stronger micro real rigidities via curvature of demand, we focus on deviations from the standard CES demand structure. We start by defining an operating point and local elasticity.

\begin{definition}
The operating point for variety \(i\) is 
\[\varphi_i \equiv \dfrac{y_i}{Y}.\]
\end{definition}

\begin{definition}
The local own-price (residual) elasticity at \(\varphi_i\) is
\[
\sigma(\varphi_i) \equiv -\,\left.\frac{\partial \ln\!\left(\dfrac{y_i}{Y}\right)}{\partial \ln\!\left(\dfrac{p_i}{P}\right)}\right|_{P,\,Y}.
\]
\end{definition} 

That is, \(\sigma(\varphi_i)\) is the residual own-price elasticity holding the aggregate environment \((P,Y)\) fixed.

For concreteness, we organize the discussion around a homothetic demand class that nests the \cite{kimball1995quantitative} formulation, in which local demand curvature depends on the operating point \(\varphi_i\).\footnote{We work with the HDIA (homothetic direct implicit additivity) class, which nests Kimball. Appendix \ref{app:hiia} reports parallel derivations for the homothetic indirect implicit additivity (HIIA) class, see \citet{matsuyama2023non}.} Because \(y_i\) depends on prices and shocks, \(\varphi_i = y_i/Y\) varies with both idiosyncratic and aggregate conditions. Thus any change in \(p_i/P\) or in aggregate conditions \((P \text{ or } Y)\) that alters \(y_i/Y\) shifts \(\varphi_i\) and, in turn, \(\sigma(\varphi_i)\). For brevity we write \(\sigma(\varphi_i)\), suppressing other arguments.

\begin{proposition}[Lerner Rule]\label{prop:markup}
For a price-setting firm facing downward-sloping residual demand with local elasticity \(\sigma(\varphi_i)>1\) (taking \(P\) and \(Y\) as given), the optimal price $p^*_i$ is a markup over marginal cost $mc_i$
\[
p_i^* = \frac{\sigma(\varphi_i)}{\sigma(\varphi_i)-1}\, mc_i.
\]
\end{proposition}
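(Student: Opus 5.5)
The plan is to treat the firm's problem as a one-variable maximization in its own price, holding the aggregate environment \((P,Y)\) fixed. Write residual demand as \(y_i = Y\,D(p_i/P; A_i)\), so that \(y_i\) is a differentiable, strictly decreasing function of \(p_i\) and \(\varphi_i = y_i/Y\) is determined by \(p_i/P\) and \(A_i\). Profits are \(\Pi = p_i y_i - C_i(y_i)\), where \(C_i\) is the cost function and \(mc_i = C_i'(y_i)\). For the main statement we take \(mc_i\) as evaluated at the optimum. This covers both constant marginal cost and cost functions depending on output. The cost-side curvature affects only which \(y_i\) is selected, not the form of the first-order condition.

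First, I differentiate profits with respect to \(p_i\), holding \(P\) and \(Y\) fixed:
\[
\frac{\partial \Pi}{\partial p_i} = y_i + \bigl(p_i - mc_i\bigr)\frac{\partial y_i}{\partial p_i}.
\]
Second, I rewrite \(\partial y_i/\partial p_i\) using the local elasticity from the definition. Since \(Y\) and \(P\) are held fixed, \(\partial \ln y_i/\partial \ln p_i = \partial \ln(y_i/Y)/\partial \ln(p_i/P) = -\sigma(\varphi_i)\). Hence \(\partial y_i/\partial p_i = -\sigma(\varphi_i)\,y_i/p_i\).

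Third, I substitute this into the first-order condition at an interior optimum \(p_i^*\) with \(y_i>0\). Dividing by \(y_i\) gives
\[
1 - \sigma(\varphi_i)\,\frac{p_i^*-mc_i}{p_i^*} = 0 .
\]
Rearranging yields \(p_i^*(\sigma(\varphi_i)-1) = \sigma(\varphi_i)\,mc_i\). The assumption \(\sigma(\varphi_i)>1\) lets me divide by \(\sigma(\varphi_i)-1\), which gives the stated markup formula with a positive price.

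The main subtlety is that \(\sigma\) depends on \(\varphi_i\), which itself depends on \(p_i\). The formula is therefore a fixed-point characterization rather than a closed form. I will stress that the first-order condition involves only the first derivative of demand, so the dependence of \(\sigma\) on \(p_i\) through \(\varphi_i\) does not enter the first-order condition. It matters only for the second-order condition \(\Pi_{11}<0\), which I take as given from Section~\ref{sec:simple_micro_macro}. I will also remark on two boundary points. Interiority holds because \(\sigma>1\) rules out the incentive to raise the price without bound. The notation \(\sigma(\varphi_i)\) suppresses the other arguments, so the evaluation must be at the optimal \(\varphi_i^*=y_i(p_i^*)/Y\).
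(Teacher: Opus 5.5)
Your derivation is correct and follows the paper's proof: write profit as $p_i D - C(D)$, take the first-order condition in $p_i$ holding $(P,Y)$ fixed, substitute $\partial D/\partial p_i = -\sigma(\varphi_i)\,D/p_i$, and solve to obtain $p_i^* = \frac{\sigma(\varphi_i)}{\sigma(\varphi_i)-1}\,C'(D)$. One caution on a side remark: $\sigma>1$ is necessary for an interior optimum but does not by itself guarantee one (Kimball demand has a choke price, for example), so, as the paper implicitly does, you should assume an interior optimum with $y_i>0$ and the second-order condition rather than claim that $\sigma>1$ delivers them.
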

\begin{proof}
See Appendix \ref{app:lerner_proof}.
\end{proof}

The curvature of demand is summarized directly by how the local elasticity varies with the operating point, via \(d\ln\sigma(\varphi_i)/d\ln\varphi_i\). Using the Lerner rule in Proposition \ref{prop:markup}, and allowing \(\varphi_i\) to move with the firm’s choices, the cost pass-through in logs is
\begin{equation} \label{eq:section2_cost_pt}
    \frac{\partial \ln p_i^*}{\partial \ln mc_i}
=\frac{1}{\,1+\dfrac{d\ln\sigma(\varphi_i)}{d\ln\varphi_i}\cdot
\dfrac{1}{\sigma(\varphi_i)-1}\cdot
\dfrac{\partial \ln \varphi_i}{\partial \ln p_i^*}\,}.\footnote{See the full derivations in Appendix~\ref{app:generic_pt}. If technology exhibits constant returns so that \(\ln mc_i=-\ln z_i\), the productivity pass-through to desired prices is \(-\,\partial \ln p_i^*/\partial \ln mc_i\). }
\end{equation}
This shows that the cost pass-through to desired price depends on the interaction of demand curvature as summarized by \(d\ln\sigma(\varphi_i)/d\ln\varphi_i\), a markup wedge \(1/(\sigma(\varphi_i)-1)\), and the slope of residual demand \(\partial\ln \varphi_i/\partial\ln p_i^*\).

Under CES, \(d\ln\sigma(\varphi_i)/d\ln\varphi_i=0\), so cost pass-through to desired price is complete (\(\partial \ln p_i^*/\partial \ln mc_i=1\)). With variable elasticity and downward-sloping demand \linebreak (\(\partial \ln \varphi_i/\partial \ln p_i^*<0\)), cost pass-through becomes incomplete when \(d\ln\sigma(\varphi_i)/d\ln\varphi_i<0\). Furthermore, the more negative this term is -- that is, as consumers grow increasingly unwilling to absorb additional output along the demand schedule -- demand becomes more concave and cost pass-through falls further.

\subsection{Idiosyncratic Demand Shocks} \label{sec:simple_demand}

The previous subsection showed that when demand curvature rises and deviates further from CES, the pass-through of idiosyncratic productivity shocks to prices declines, which dampens price responsiveness. We now introduce an idiosyncratic demand shock \(\nu_i\) and study how it affects desired prices when demand is non-CES. Allowing the operating point \(\varphi_i\) to depend on \(\nu_i\) and on the firm’s own price, the demand pass-through in logs is

\begin{equation} \label{eq:section2_demand_pt}
    \frac{\partial \ln p_i^{*}}{\partial \ln \nu_i}
=
\frac{-\,\dfrac{d\ln\sigma(\varphi_i)}{d\ln\varphi_i}\cdot \dfrac{\partial \ln \varphi_i}{\partial \ln \nu_i}}{\;\sigma(\varphi_i)-1\;}
\cdot
\left[
1+\frac{d\ln\sigma(\varphi_i)}{d\ln\varphi_i}\cdot \frac{1}{\sigma(\varphi_i)-1}\cdot \frac{\partial \ln \varphi_i}{\partial \ln p_i^{*}}
\right]^{-1}.\footnote{See the full derivations in Appendix~\ref{app:generic_pt}.}
\end{equation}

A first implication is immediate: if \(\nu_i\) does not move the operating point \(\big(\partial \ln \varphi_i/\partial \ln \nu_i=0\big)\), then \(\partial \ln p_i^{*}/\partial \ln \nu_i=0\) regardless of demand curvature. We refer to this as a level-only placement of idiosyncratic demand shocks, akin to taste shifters placed outside the aggregator as in \cite{arkolakis2017variable} and \cite{wang2022dynamic}. In this case local elasticity and the desired price are locally invariant to the idiosyncratic demand shock.

Alternatively, if the demand shifter enters the operating point directly \(\big(\partial \ln \varphi_i/\partial \ln \nu_i>0\big)\), given downward-sloping non-CES demand, then demand pass-through to desired price is positive \(\partial \ln p_i^{*}/\partial \ln \nu_i>0\) and is larger when demand is more concave (\(|d\ln\sigma(\varphi_i)/d\ln\varphi_i|\) is larger). We refer to this as an effective-share placement. This placement is closely related to variable-markup frameworks in which desired markups depend on residual-demand conditions, often summarized by market share or by a firm-specific shifter to residual demand or desired markups, so that shocks that move these conditions translate into markup and price movements (e.g., \citet{amiti2019international}; \citet{gagliardone2025anatomy}). Under CES, where \(d\ln\sigma(\varphi_i)/d\ln\varphi_i=0\), demand pass-through is zero regardless of how \(\nu_i\) enters.

Beyond its role for demand pass-through, the degree of variable elasticity in demand and the placement of idiosyncratic shocks both affect the correlation between desired prices and firm productivity, which we characterize in Corollary~\ref{cor:corr_ps}. Although both placements of idiosyncratic demand shocks have been used in the literature, Corollary~\ref{cor:corr_ps} yields an empirically testable criterion to discipline the modeling of idiosyncratic demand shocks.

\begin{corollary}[Price--Productivity Correlation] \label{cor:corr_ps}
Consider a firm that faces orthogonal TFP and demand shocks \(z_i\) and \(\nu_i\). Log-linearized around a symmetric steady state, the firm's optimal pricing rule can be written as
\[
\widehat{p}_i^{\,*}=\alpha\,\widehat{z}_i+\beta\,\widehat{\nu}_i+\text{constant},
\]
with 
\[\alpha\equiv\dfrac{\partial \ln p_i^{*}}{\partial \ln z_i} \text{ and  } \beta\equiv\dfrac{\partial \ln p_i^{*}}{\partial \ln \nu_i}\].
\begin{enumerate}
    \item If the demand system is CES, then \(\beta=0\) and \(\mathrm{Corr}(\widehat{p}_i^{\,*},\widehat{\mathrm{TFPQ}}_i)=-1\) exactly.
    \item If the demand system is non-CES and \(\nu_i\) does not move the operating point, then \(\beta=0\) and \(\mathrm{Corr}(\widehat{p}_i^{\,*},\widehat{\mathrm{TFPQ}}_i)=-1\) exactly.
    \item If the demand system is non-CES and \(\nu_i\) moves the operating point so that \(\beta>0\), then \(\mathrm{Corr}(\widehat{p}_i^{\,*},\widehat{\mathrm{TFPQ}}_i)>-1\).
\end{enumerate}
\end{corollary}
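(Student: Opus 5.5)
The plan is to reduce the correlation to a variance calculation in the log-linear pricing rule, using the two pass-through formulas \eqref{eq:section2_cost_pt} and \eqref{eq:section2_demand_pt}. Take \(\widehat{\mathrm{TFPQ}}_i=\widehat z_i\), since with constant returns physical TFP is the productivity shifter and \(\ln mc_i=-\ln z_i\) up to a common term. Linearizing the Lerner rule of Proposition~\ref{prop:markup} around the symmetric steady state gives \(\widehat p_i^{\,*}=\alpha\widehat z_i+\beta\widehat\nu_i\) plus a constant. Here \(\alpha=-\partial\ln p_i^*/\partial\ln mc_i\), which by \eqref{eq:section2_cost_pt} equals minus the reciprocal of \(1+\frac{d\ln\sigma}{d\ln\varphi}\frac{1}{\sigma-1}\frac{\partial\ln\varphi}{\partial\ln p^*}\). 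The coefficient \(\beta\) is given by \eqref{eq:section2_demand_pt}.

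First I establish \(\alpha<0\). The bracketed term \(1+\frac{d\ln\sigma}{d\ln\varphi}\frac{1}{\sigma-1}\frac{\partial\ln\varphi}{\partial\ln p^*}\) is positive. Under CES it equals one. In the non-CES case with \(d\ln\sigma/d\ln\varphi\le 0\) and downward-sloping demand, both factors in the product are nonpositive, so the product is nonnegative and the bracket is at least one. More generally, positivity of the bracket is exactly the second-order condition for the firm's problem, i.e. \(\Pi_{11}<0\). Hence \(\alpha<0\) and \(\alpha\neq0\), and the same bracket makes \(\beta\) well defined.

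Next comes the algebra. Orthogonality of \(\widehat z_i\) and \(\widehat\nu_i\) gives \(\mathrm{Cov}(\widehat p^{\,*},\widehat z)=\alpha\,\mathrm{Var}(\widehat z)\) and \(\mathrm{Var}(\widehat p^{\,*})=\alpha^2\mathrm{Var}(\widehat z)+\beta^2\mathrm{Var}(\widehat\nu)\). Therefore
\[
\mathrm{Corr}(\widehat p_i^{\,*},\widehat z_i)=\frac{\alpha\,\sigma_z}{\sqrt{\alpha^2\sigma_z^2+\beta^2\sigma_\nu^2}},
\]
where \(\sigma_z\) and \(\sigma_\nu\) are standard deviations. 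If \(\beta=0\), this equals \(\alpha/|\alpha|=-1\). If \(\beta\neq0\) and \(\sigma_\nu>0\), the denominator strictly exceeds \(|\alpha|\sigma_z\), so the correlation lies strictly between \(-1\) and \(0\). The argument needs the nondegeneracy assumptions \(\sigma_z,\sigma_\nu>0\), which I state explicitly.

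It remains to check \(\beta\) in each case; this follows from the discussion after \eqref{eq:section2_demand_pt}. In case 1, \(d\ln\sigma/d\ln\varphi=0\), so the numerator of \eqref{eq:section2_demand_pt} vanishes and \(\beta=0\). In case 2, \(\partial\ln\varphi_i/\partial\ln\nu_i=0\), so again \(\beta=0\) whatever the curvature. For case 2 I also need that \(\alpha\) does not depend on \(\nu_i\) in a way that breaks linearity; at first order this holds automatically. In case 3, \(\beta>0\) by hypothesis, so the strict inequality applies.

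The main obstacle is the sign and nonvanishing of the pass-through bracket, which makes \(\alpha\) strictly negative. I would argue it directly from \(\Pi_{11}<0\) at the optimum instead of relying on the sign of curvature. Everything else is routine variance algebra.
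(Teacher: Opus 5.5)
Your proposal is correct and follows essentially the same route as the paper's proof in Appendix~\ref{app:corr_ps_appendix}. Both arguments write the log-linear rule and compute the covariance and variance, which gives \(\pm1\) when \(\beta=0\) and, with \(\beta>0\) and \(\sigma_\nu>0\), a denominator strictly larger than the numerator in absolute value (the paper allows a general innovation correlation \(\rho\), with gap \(\beta^2\sigma_\nu^2(1-\rho^2)\), while you specialize to the orthogonal case actually stated).

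Your added step, \(\alpha<0\) from the second-order condition together with \(\widehat{\mathrm{TFPQ}}_i=\widehat z_i\), is what pins the \(\beta=0\) value at \(-1\). The paper's appendix instead writes \(\widehat{\mathrm{TFP}}_i=-\widehat z_i\) and leaves the sign of \(\alpha\) implicit, and under that convention its formula returns \(-\alpha/|\alpha|=+1\), so your sign bookkeeping is the cleaner one.
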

\begin{proof}
    See Appendix~\ref{app:corr_ps_appendix}.
\end{proof}

\subsection{Kimball Demand System and a Parametric Example} \label{sec:simple_kimball}

As a parametric illustration consistent with the preceding analysis, we adopt a \cite{kimball1995quantitative} aggregator, which makes the local elasticity depend on a firm's effective share. Recall the operating point \(\varphi_i \equiv y_i/Y\) and define the effective share \(\tilde{\varphi}_i \equiv \nu_i\,\varphi_i\),  where \(\nu_i\) is an idiosyncratic demand primitive and \(Y \equiv S/P\). The final-good producer aggregates intermediates as
\[
1=\int_0^1 G(x_i)\,di,\qquad x_i \equiv \tilde{\varphi}_i,
\]
with \(G'(x)>0\) and \(G''(x)<0\). Under this specification the local elasticity \(\sigma(.)\) is evaluated at \(\tilde{\varphi}_i\).

Following \cite{dotsey2005implications}, we use the parametric form
\begin{equation}\label{eq:kimball_spec}
G(x)=\frac{\omega}{1+\omega\psi}\Big[(1+\psi)x-\psi\Big]^{\frac{1+\omega\psi}{\omega(1+\psi)}}+1-\frac{\omega}{1+\omega\psi},
\qquad \omega>1,\ \psi\in\mathbb{R}.
\end{equation}
Here \(\omega\) governs the elasticity level in a symmetric equilibrium and \(\psi\) governs the global curvature as summarized by the super-elasticity -- the elasticity of $\sigma$ with respect to $\tilde{\varphi}_i$. When \(\psi=0\), this specification collapses to the standard CES aggregator. We derive \(\sigma(\tilde{\varphi})\) and the super-elasticity in Appendix \ref{app:kimball_elast}. In this specification, \(\nu_i\) shifts the effective share \(\tilde{\varphi}_i\), so it changes the local elasticity and therefore moves desired markups and desired prices. By contrast, under a level-only placement in which the elasticity is evaluated at \(\varphi_i\) (rather than \(\tilde{\varphi}_i\)), \(\nu_i\) scales quantities without affecting the local elasticity and leaves desired prices unchanged. We will show in Section~\ref{sec:calibration} that the data favors the effective-share placement.

Under this parametric form, we obtain closed-form cost and demand pass-through expressions analogous to \ref{eq:section2_cost_pt} and \ref{eq:section2_demand_pt}.
\begin{corollary}\label{prop:DK_pt}
With the Kimball aggregator \eqref{eq:kimball_spec}, no pricing frictions, and constant returns to scale, cost and demand pass-through to the desired price satisfy
\[
\frac{\partial \widehat{p}_i^{\,*}}{\partial \widehat{mc}_i}=\frac{-1}{\omega\psi-1},
\qquad
\frac{\partial \widehat{p}_i^{\,*}}{\partial \widehat{\nu}_i}=\frac{\omega\psi}{\omega\psi-1},
\]
with \(\widehat{mc}_i\equiv-\widehat{z}_i\).
\end{corollary}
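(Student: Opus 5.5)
We derive the Kimball demand curve explicitly, compute the local elasticity and super-elasticity at the symmetric steady state, and then substitute into the general pass-through formulas \eqref{eq:section2_cost_pt} and \eqref{eq:section2_demand_pt}.

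\textbf{Step 1: the demand curve.} The final-good producer minimizes cost subject to $\int G(\nu_i y_i/Y)\,di=1$. Its first-order condition gives
\[
\frac{p_i}{P}=\frac{\nu_i\,G'(\tilde{\varphi}_i)}{D},
\]
where $D$ is an aggregate term ($D=\int G'(x_j)x_j\,dj$ under the usual normalization). Differentiating holding $P,Y$ (hence $D$) fixed gives $d\ln(p_i/P)=d\ln\nu_i+\frac{G''x}{G'}\,d\ln\tilde{\varphi}_i$. Hence
\[
\sigma(\tilde{\varphi})=-\frac{G'(\tilde\varphi)}{\tilde\varphi\, G''(\tilde\varphi)},
\qquad
\frac{\partial\ln\tilde\varphi_i}{\partial\ln p_i}=-\sigma,
\qquad
\frac{\partial\ln\tilde\varphi_i}{\partial\ln\nu_i}\Big|_{p_i}=\sigma .
\]
The last identity holds because $\ln\tilde\varphi_i$ moves with $\ln\nu_i$ exactly as it does with $-\ln p_i$: the demand shock enters the inverse demand just like a price cut. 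This is the effective-share placement.

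\textbf{Step 2: elasticity and super-elasticity.} For the Dotsey--King form \eqref{eq:kimball_spec}, $G'(x)=[(1+\psi)x-\psi]^{\frac{1-\omega}{\omega(1+\psi)}}$. A direct computation gives
\[
\sigma(x)=\frac{\omega\,[(1+\psi)x-\psi]}{(\omega-1)\,x}.
\]
At the symmetric steady state ($x=1$) this yields $\sigma=\omega/(\omega-1)$. Evaluating $d\ln\sigma/d\ln x=\frac{(1+\psi)x}{(1+\psi)x-\psi}-1=\frac{\psi}{(1+\psi)x-\psi}$ at $x=1$ gives a super-elasticity of $\psi$. The combination that appears in both formulas is then
\[
\frac{d\ln\sigma}{d\ln\varphi}\cdot\frac{1}{\sigma-1}=\psi(\omega-1).
\]

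\textbf{Step 3: substitution.} Plug Step 1 and Step 2 into \eqref{eq:section2_cost_pt} and \eqref{eq:section2_demand_pt}, using the slopes with respect to $\tilde\varphi$ since $\sigma$ is evaluated at $\tilde\varphi$. For cost pass-through, the bracket is $1+\psi(\omega-1)(-\sigma)=1-\psi\omega$. Therefore
\[
\frac{\partial\hat p^*}{\partial\widehat{mc}}=\frac{1}{1-\omega\psi}=\frac{-1}{\omega\psi-1}.
\]
For demand pass-through, the numerator is $-\psi(\omega-1)\cdot\sigma=-\psi\omega$ and the bracket is again $1-\omega\psi$. Therefore
\[
\frac{\partial\hat p^*}{\partial\hat\nu}=\frac{-\omega\psi}{1-\omega\psi}=\frac{\omega\psi}{\omega\psi-1}.
\]
Constant returns then give $\widehat{mc}_i=-\hat z_i$. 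As a consistency check, the same result follows from log-linearizing the Lerner rule of Proposition \ref{prop:markup} together with the demand curve directly. The markup moves by $-\frac{1}{\sigma-1}\hat\sigma=-(\omega-1)\psi\,\hat{\tilde\varphi}$, and $\hat{\tilde\varphi}=\sigma(\hat\nu-\hat p)$. Solving the resulting linear equation reproduces both coefficients.

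\textbf{Main obstacle.} The difficult step is the sign and placement bookkeeping, and there are two points to get right. First, the signs are only meaningful in the paper's intended parameter region: with $\psi<0$ (the relevant region here), $\omega\psi-1<0$, so the cost coefficient is positive. Second, the slopes $\partial\ln\varphi/\partial\ln p$ in \eqref{eq:section2_cost_pt} must be read as slopes of the effective share, including the $\nu_i$ inside the inverse demand. If one uses $\varphi$ rather than $\tilde\varphi$, the demand pass-through collapses to zero, which is the level-only case. I would verify the algebra of $G'$ and $G''$ for \eqref{eq:kimball_spec} carefully, since the exponent manipulation is the only place an error can enter.
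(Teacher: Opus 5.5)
Your proof is correct, and it differs from the paper's mainly in organization. The paper directly log-linearizes the Kimball pricing condition, meaning the Lerner rule combined with the explicit residual demand $\nu_i y_i/Y=\frac{1}{1+\psi}\bigl[(p_i/(\Lambda\nu_iP))^{\omega(1+\psi)/(1-\omega)}+\psi\bigr]$, around the symmetric steady state with $\nu_i=1$. It obtains the full linear rule $\widehat p_i^{\,*}=\frac{\omega\psi}{\omega\psi-1}(\widehat\Lambda+\widehat P+\widehat\nu_i)+\frac{1}{\omega\psi-1}\widehat z_i$ and reads off the coefficients. That is exactly what you relegate to a consistency check.

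Your main route instead specializes the generic formulas \eqref{eq:section2_cost_pt} and \eqref{eq:section2_demand_pt} by computing their ingredients at $\tilde\varphi=1$:
\begin{itemize}
\item the curvature $d\ln\sigma/d\ln\tilde\varphi=\psi$;
\item the markup wedge $1/(\sigma-1)=\omega-1$;
\item the slopes $\partial\ln\tilde\varphi/\partial\ln p=-\sigma$ and $\partial\ln\tilde\varphi/\partial\ln\nu|_{p}=\sigma$.
\end{itemize}
Your computations of $G'$, $\sigma(x)$, and these slopes are all right.

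Your route makes the result literally a corollary of Section~2. It shows why both pass-throughs share the denominator $1-\omega\psi$. It also isolates the effective-share placement, since $\nu_i$ enters the argument of $\sigma$, as the sole reason the demand coefficient is nonzero. The price is that you must read the generic slopes as slopes of $\tilde\varphi$ rather than $\varphi$, which you correctly flag.

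The paper's direct linearization is self-contained and avoids that reinterpretation. It also delivers the loading $\frac{\omega\psi}{\omega\psi-1}$ on the aggregate terms $\widehat\Lambda+\widehat P$, which is the object relevant for strategic complementarity.
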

\begin{proof}
See Appendix \ref{app:kimball_pt}.
\end{proof}

\begin{figure}[t!]
\caption{Pass-through of Idiosyncratic Demand and Productivity to the Desired Price}
\begin{center}
\includegraphics[scale=0.42]{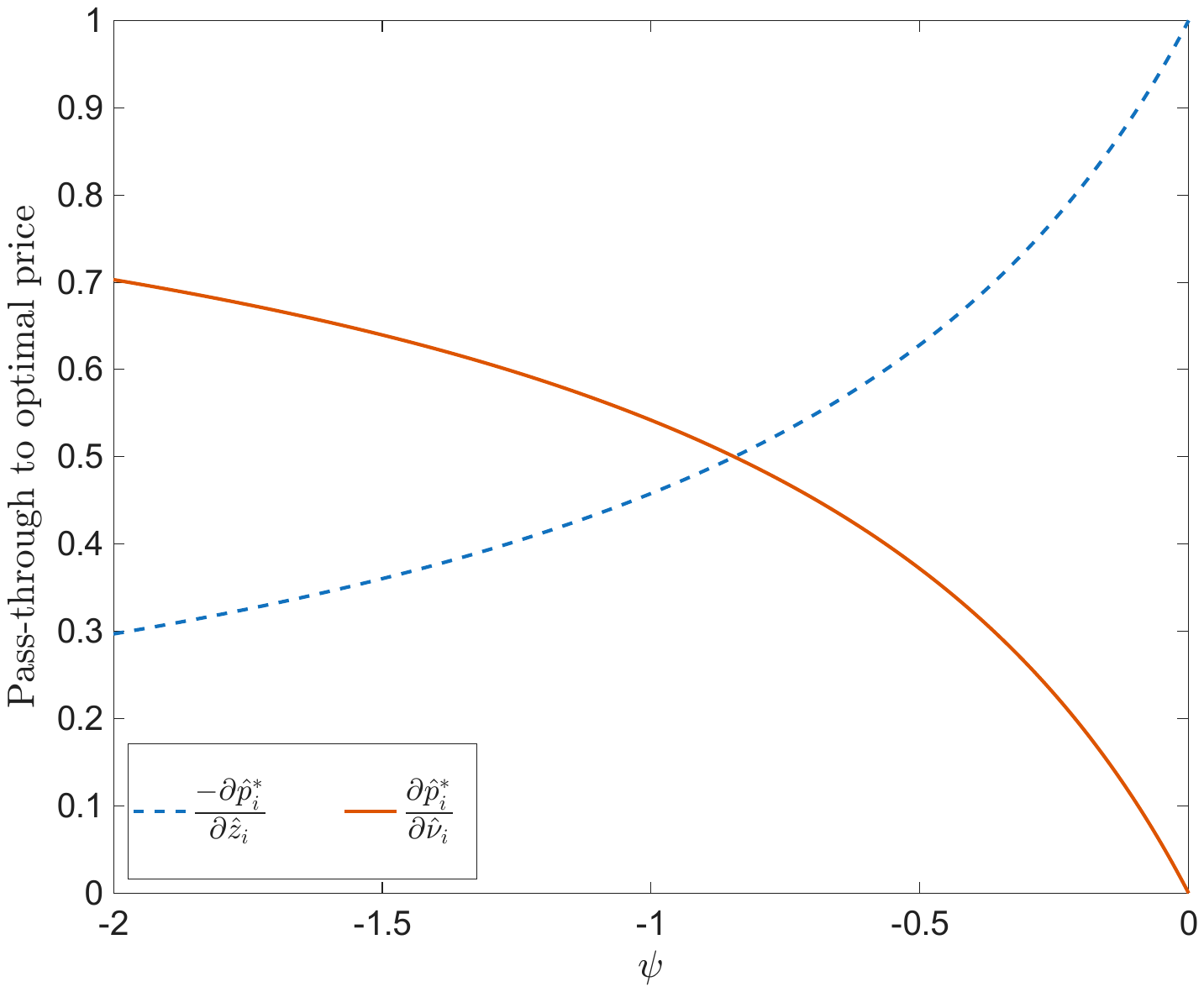}
\end{center}
\label{fig:DS_pass-through}
\footnotesize Notes. The curves plot \(-\partial \widehat{p}_i^{\,*}/\partial \widehat{z}_i=-1/(\omega\psi-1)\) and \(\partial \widehat{p}_i^{\,*}/\partial \widehat{\nu}_i=\omega\psi/(\omega\psi-1)\) as functions of \(\psi\) for a fixed \(\omega\) evaluated around a symmetric equilibrium with \(\nu_i=1\). At \(\psi=0\) (CES), productivity pass-through equals \(-1\) (in terms of \(\widehat{z}_i\)) and demand pass-through is \(0\). For \(\psi<0\), curvature increases, productivity pass-through moves toward \(0\) in absolute value, and demand pass-through becomes positive and larger in magnitude.
\end{figure}

Figure \ref{fig:DS_pass-through} plots $-\partial \widehat{p}_i^{\,*}/\partial \widehat{z}_i$ and $\partial \widehat{p}_i^{\,*}/\partial \widehat{\nu}_i$ as a function of $\psi$. When \(\psi=0\) (CES), \(\partial \widehat{p}_i^{\,*}/\partial \widehat{z}_i=-1\) and \(\partial \widehat{p}_i^{\,*}/\partial \widehat{\nu}_i=0\). For \(\psi<0\), curvature reduces the sensitivity of prices to productivity in absolute value and raises their sensitivity to idiosyncratic demand.

\section{Quantitative Menu-Cost Model}  \label{sec:quantitative}

The model consists of a measure 1 of identical households, perfectly competitive final good producers and monopolistically competitive intermediate good producers. Throughout this section, we use $x$, $x'$ and $x_{-1}$ to denote the value of a generic variable in the current, next and previous periods, respectively. Much of the detail is relegated to Appendix~\ref{app:model}, including the household's problem, which is entirely standard.\footnote{A representative household supplies labor to intermediate-variety firms in exchange for wage payments, trades a complete set of Arrow-Debreu securities, and consumes a final good. It also owns all firms in the economy and receives all accrued profits.}

\subsection{Final Good Producers} \label{sec:final_producer_model}
A representative firm combines intermediate varieties, $y^i$, to produce the final good, $Y$, using a \cite{kimball1995quantitative} aggregator $G(\cdot)$ as defined in \eqref{eq:kimball_spec}. The representative final-good producer chooses $y^i$ to maximize profits normalized by nominal expenditure $P Y$, taking as given variety prices $p^i$, the aggregate price level $P$, and idiosyncratic demand shifters $\nu^i$, solving
\begin{equation} \label{eq:final_producer_problem}
    \underset{y^i \geq 0}{\text{max}} \quad 1 - \int \limits_{0}^{1}\frac{p^i y^i}{P Y} \, di \quad \text{subject to } \int \limits_{0}^{1} G\left(\frac{\nu^i y^i}{Y}\right) di = 1.
\end{equation}
Under Kimball demand there is a relative price above which residual demand is zero, often labeled the ``choke price''. The problem above makes it clear that the intermediate variety producer's output remains non-negative, and we assume that in the case where the non-negativity constraint binds, the firm remains dormant for the period, producing no output.  Appendices \ref{app:app_final_producer} and \ref{app:app_int_producer} provide details about how the presence of the choke price and dormancy of the firms affect the analysis.



\subsection{Intermediate Variety Producers}

A continuum of intermediate-good producers, indexed by $i$, each produce a differentiated variety $y^i$ using a linear production technology with labor as the only input $y^i = z^i l^i$.\footnote{In the model, we consider only a single input in labor, but it should be interpreted as a composite input that combines labor, capital, and materials when connected to the data.} Producers are heterogeneous in their productivity, $z^i$, and face idiosyncratic demand shocks for their variety, $\nu^i$. In particular, idiosyncratic productivity, $z^i$, and idiosyncratic demand, $\nu^i$, evolve according to a VAR(1) process
\begin{equation}
\begin{pmatrix}
\log(z^i) \\ \log(\nu'^i)
\end{pmatrix}
=
\begin{bmatrix}
\rho_z & 0 \\ 0 & \rho_\nu
\end{bmatrix}
 \begin{pmatrix}
\log(z^i_{-1}) \\ \log(\nu^i_{-1})
\end{pmatrix} + u^{i} \text{ where } u^{i}  \sim  N\left( 0 , \begin{bmatrix}
    \sigma^2_z & \rho_{z\nu}\sigma_z \sigma_{\nu} \\ \rho_{z\nu}\sigma_z \sigma_{\nu} & \sigma^2_\nu
\end{bmatrix} \right). \label{eq:lom_zn}
\end{equation}
In our baseline calibration we set $\rho_{z\nu}=0$ and consider correlated innovations as a robustness exercise.

At the beginning of each period, intermediate-good producers inherit their prices from the previous period $p_{-1}^i$ and observe the realizations of $z^i$ and $\nu^i$. They then decide whether or not to adjust their nominal prices and if so, by how much. Nominal price adjustments are subject to a fixed cost, $f$, paid in labor at the contemporaneous wage.\footnote{When the non-negativity constraint in production stops binding, the firm needs to pay a fixed cost to reset its price to a level consistent with positive production.} Firms therefore face a standard discrete choice between keeping the inherited price and paying the menu cost to reset; Appendix \ref{app:app_int_producer} states the firm’s dynamic program and the implied profit function under Kimball demand.

\subsection{Aggregate Nominal Expenditure and Equilibrium}

Nominal expenditure $S = P Y$ grows deterministically at a constant rate, $\mu$. Because there is no aggregate uncertainty, we focus on a stationary equilibrium consisting of a time-invariant distribution over firms' idiosyncratic states $\left(p,\nu,z\right)$, two aggregate price indices $(P,\Lambda)$, and decision rules that jointly satisfy agents' optimality and aggregate market-clearing conditions. When we study monetary non-neutrality in Section~\ref{sec:nonneutrality}, we consider an unexpected one-time increase in nominal expenditure, $S$, after which $S$ returns to its deterministic growth path. The formal definition of the stationary equilibrium is laid out in Appendix \ref{app:eqm_definition}, and Appendix \ref{app:computational_strategy} describes the solution method.

\section{Calibration, Identification, and Micro Implications} \label{sec:calibration}

Conventional quantitative menu-cost models typically prioritize matching a small set of pricing moments, such as the frequency and size of price changes, and treat the underlying idiosyncratic driving processes as residual objects chosen to fit these moments. Our approach instead disciplines the primitives that govern firms' desired prices using direct evidence on firm dynamics. We calibrate the idiosyncratic productivity and demand processes and the curvature of demand using firm-level moments, and we target only the frequency of price changes. The remaining pricing moments are left untargeted and are reported as micro implications of the calibrated model.

There are 11 parameters to discipline. Four $(\beta,\chi,\mu,\rho_{z\nu})$ are externally calibrated. The remaining seven $(\rho_z,\sigma_z,\rho_\nu,\sigma_\nu,\omega,\psi,f)$ are determined internally. For the internal discipline we rely on \cite{foster2008reallocation}, who provide separate, empirically grounded estimates of firm-level productivity and demand shocks for U.S. manufacturing firms. Their decomposition identifies the variances and persistence of these processes and delivers additional firm-level moments that help pin down the elasticity level and curvature of the Kimball demand system.

\subsection{Externally Calibrated Parameters}

\begin{table}[t!]
    \centering
    \caption{Externally Calibrated Parameters}
    \scalebox{0.9}{
    \begin{tabular}{clcl} \toprule
        Parameter & Description & Value & Source \\ \midrule
        $\beta$ & Discount factor & 0.9966 & Annual discount rate of $4\%$ \\
        $\chi$ & Labor disutility & 1 & Normalization \\
        $\mu$  & Growth rate of $S$ & 0.002 & Annual inflation rate of 2.4\% \\ 
        $\rho_{z\nu}$ & Corr. between TFPQ and demand shocks & 0 & \cite{foster2008reallocation} \\ \bottomrule
    \end{tabular}  
    }
    \justify
    \footnotesize{Note: This table displays the externally calibrated parameters in the model.}
    \label{tab:ext_calib}
\end{table}

We begin by detailing the externally calibrated parameters. The model is calibrated to U.S. data, with each model period representing one month. The monthly discount factor $\beta$ is set to $0.9966$, corresponding to an annual discount rate of 4\%. Following standard practice in the literature \citep{midrigan2011menu}, the disutility of labor $\chi$ is normalized to 1, which pins down the units of labor and wages. The monthly growth rate of nominal expenditure $S$, $\mu$, is set at $0.2\%$, implying an annual inflation rate of approximately 2.4\%. For firm-level processes, we assume that idiosyncratic demand and productivity innovations are uncorrelated, $\rho_{z\nu}=0$.\footnote{This assumption aligns with \cite{foster2008reallocation}, where $\rho_{z\nu}=0$ is necessary for their estimation strategy, which assumes orthogonality between demand and productivity. Because TFPQ in their framework is a physical quantity-based measure---constructed from output quantities and input usage rather than revenues---it is less susceptible to contamination by demand-side variation than revenue-based alternatives. Using Colombian data and an alternative strategy that relaxes this assumption, \cite{eslava2024size} report a correlation of $-0.07$ between demand and productivity. We return to this issue in Section \ref{sec:correlated_shocks}, where we show our results are robust to allowing for correlated shocks.} Table \ref{tab:ext_calib} summarizes the externally calibrated parameters.

\subsection{Internally Calibrated Parameters} \label{sec:int_calib}

\cite{foster2008reallocation} provide direct estimates of firm-level idiosyncratic productivity and demand processes using data on U.S. manufacturing firms from 1977 to 1997.
We rely on these estimates to discipline the parameters governing the AR(1) processes for idiosyncratic firm productivity $(\rho_z,\sigma_z)$ and demand shocks $(\rho_\nu,\sigma_\nu)$ in the model. In addition, we use two moments from the same study to calibrate the parameters governing the Kimball demand system $(\omega,\psi)$. In contrast to the conventional calibration strategy in the menu cost literature, we rely on only a single pricing moment, the frequency of price changes.\footnote{Appendix \ref{app:calib_size} considers an alternative strategy where we target the average size of price changes instead of the frequency and show that the two strategies lead to very similar outcomes.}

We jointly calibrate these seven parameters to match the frequency of price changes and six firm-dynamics moments from \cite{foster2008reallocation}: the five-year autocorrelation and the cross-sectional dispersion of TFPQ and of idiosyncratic demand (measured as $\log \nu_t^i$), the correlation between price and TFPQ, and the estimated demand elasticity from an instrumental variable regression.\footnote{This refers to Equation (10) in \cite{foster2008reallocation}, which we restate in \eqref{eq:IV_reg}.} We construct these firm-dynamics moments from model-simulated data in the exact same way as \cite{foster2008reallocation} do in their empirical analysis.\footnote{Two maintained assumptions in \cite{foster2008reallocation} are flexible prices and CES demand. We compute the moments on model-generated data using the same procedure as in \cite{foster2008reallocation}, so any effects from these maintained assumptions affect the empirical and model moments through the same measurement procedure.} Details on the empirical targets and the calibration algorithm are delegated to Appendix \ref{app:calibration}.

Before presenting the results, it is useful to summarize how key target moments discipline specific parameters. The fixed cost, $f$, governs the model-implied frequency of price changes. The shock-process parameters $(\rho_z,\sigma_z,\rho_\nu,\sigma_\nu)$ discipline the persistence and dispersion of firm-level productivity and demand. The Kimball parameters $(\omega,\psi)$ govern the level and curvature of demand. Section~\ref{sec:simple_demand} shows that under CES demand, or under a level-only placement of idiosyncratic demand, desired prices are a constant markup over marginal cost and $\mathrm{Corr}(p,\mathrm{TFPQ})=-1$. With non-CES demand ($\psi<0$) and an effective-share placement of idiosyncratic demand shocks, $\mathrm{Corr}(p,\mathrm{TFPQ})>-1$. In the data, \cite{foster2008reallocation} report $\mathrm{Corr}(p,\mathrm{TFPQ})=-0.54$, favoring non-CES demand and an effective-share placement.

The parameter $\omega$ is disciplined by the IV demand elasticity estimated using productivity as an instrument for prices. Under symmetry and CES, the log-demand slope is $\omega/(1-\omega)<0$, so the absolute demand elasticity is $|\omega/(1-\omega)|$. When $\psi<0$, the log-linear specification is misspecified, but we estimate the same regression on model-simulated data, so the same mapping applies to the model and the data. Appendix~\ref{app:identification} discusses the identification of the model parameters in more detail.

\begin{table}[t!]
\centering
\caption{Internal Calibration}
\scalebox{0.9}{
\begin{tabular}{ccc} \toprule
\textbf{Moment} 				                                   & \textbf{Data}  & \textbf{Baseline Model} \\ \midrule
\multicolumn{1}{l}{Frequency of price changes}                     & 0.11 		    & {0.11}   \\ \midrule
\multicolumn{1}{l}{5-year autocorrelation of ${z_t^i}$}            & 0.31           & {0.31} \\
\multicolumn{1}{l}{Cross-sectional standard deviation of $z_t^i$}  & 0.26           & {0.25} \\
\multicolumn{1}{l}{5-year autocorrelation of $\log \nu_t^i$}       & 0.62           & {0.58} \\ 
\multicolumn{1}{l}{Cross-sectional standard deviation of $\log \nu_t^i$} & 1.16     & {1.14} \\ 
\multicolumn{1}{l}{IV coefficient}                                  & --2.40         & {--2.31} \\
\multicolumn{1}{l}{Corr. between price and TFPQ}                    & --0.54         & {--0.54}  \\ \midrule
    \textbf{Parameter} & \textbf{Description} 			                    & \textbf{Value}  \\ \midrule
    $\psi$  & \multicolumn{1}{l}{Super-elasticity} 		     	            & {--1.10}  \\
    $\omega$ & \multicolumn{1}{l}{Elasticity}	                            & {1.18}	\\
    $\rho_z$  & \multicolumn{1}{l}{Persistence of $z_t^i$}    	            & {0.98} \\
    $\sigma_z$  & \multicolumn{1}{l}{Standard deviation of $z_t^i$}     	& {0.06} \\
    $\rho_\nu$  & \multicolumn{1}{l}{Persistence of $\log \nu_t^i$}  	    & {0.998} \\
    $\sigma_\nu$  & \multicolumn{1}{l}{Standard deviation of $\log \nu_t^i$} & {0.03} \\
    $f$       & \multicolumn{1}{l}{Menu cost}                               & {0.02} \\ \bottomrule
\end{tabular}
}
\justify
\footnotesize{Note: The top panel compares the targeted moments and model-implied moments. The bottom panel reports the internally calibrated parameter values.}
\label{tab:int_calib}
\end{table}

The results of the internal calibration are reported in Table \ref{tab:int_calib}. The top panel shows the targeted moments, whose calculations are described in more detail in Appendices \ref{app:FHS_moments} and \ref{app:pricing_moments}. The bottom panel reports the seven parameters calibrated jointly. The results indicate that all seven moments are matched very closely. Under this calibration, price adjustment costs paid represent 0.18\% of total firm revenue in a given period. This is comparable to, though somewhat smaller than, the numbers reported in other quantitative menu-cost models such as \cite{golosov2007menu} (0.25\%), \cite{midrigan2011menu} (0.34\%), and \cite{vavra2014inflation} (0.5\%), as well as direct empirical estimates in \cite{levy1997magnitude} (0.7\%) and \cite{zbaracki2004managerial} (0.3\%, when considering only physical costs and managerial costs).

\begin{table}[t!]
\centering
\caption{Elasticity and Super-elasticity Specifications from the Literature}
\label{tab:compare_calib_elas}
\begin{tabular}{ccc} \toprule
     Source & Elasticity & Super-elasticity\\ \midrule
    \cite{kimball1995quantitative}    & 11 & 31.8  \\
    \cite{smets2007shocks} & 2.7 & 10  \\
    \cite{bergin2000staggered}   & 3 & 1.3 \\
    \cite{woodford2003} &    7.8 & 6.7  \\
    \cite{chari2000sticky}    & 10 & 35.7  \\
    \cite{fisher2005evaluating}    & 11 & 10   \\
    \cite{gopinath2010frequency}    & 5 & 4  \\
    \cite{klenow2016real} &    5 & 10  \\
    \cite{beck2020price} &   3.2 & 1.93  \\
    \cite{harding2022resolving}    & $\{11, 3.9\}$ & $\{134.2, 64.5\}$   \\ 
    \cite{harding2023understanding}    & 3.9 & 64.5   \\
    \cite{aruoba2025pricing}    & 4.1 & 6.9   \\  \midrule
    This paper   & 6.5 & 7.1   \\ \bottomrule
\end{tabular}
\justify
\vspace*{-0.1in}
\footnotesize{Note: This table summarizes the demand elasticity and super-elasticity used or estimated in the literature. We report two parameterizations from \cite{harding2022resolving}.
Appendix~\ref{app:compare_calib_elas_source} describes in more detail the source of elasticities and super-elasticities in the studies listed. 
For a broader survey of demand‑side real rigidity estimates, see \cite{shirota2025demand}, who compile a more extensive set of studies.} \\

\end{table}

The calibrated processes for idiosyncratic demand and productivity are both highly persistent at the monthly frequency, with autocorrelations of 0.998 and 0.98, respectively. To aid interpretation, it is useful to translate these to annual frequency. The monthly estimates imply annual AR(1) coefficients of approximately 0.78 for productivity and 0.98 for demand, and annual innovation standard deviations of approximately 0.19 for productivity and 0.10 for demand. These magnitudes are consistent with external estimates from the firm dynamics literature. For productivity, \cite{decker2020changing} use the Longitudinal Business Database and report annual persistence of revenue-based TFP in the range of 0.65 to 0.80, bracketing our implied value of 0.78. Their estimates of cross-sectional TFP dispersion (0.35 to 0.45) are somewhat larger than the 0.26 we target from \cite{foster2008reallocation}, likely because revenue-based measures confound productivity with demand variation. For demand, the implied annual persistence of 0.98 is high but consistent with evidence from marketing and applied microeconomics documenting highly durable consumer preferences. \cite{dube2010state} show that structural state dependence generates long-lasting brand loyalty, and \cite{bronnenberg2012evolution} find that geographic variation in market shares is explained by extremely persistent brand preferences that erode only slowly over time. While innovations to idiosyncratic productivity are larger at the monthly frequency, the stationary distribution of idiosyncratic demand exhibits greater cross-sectional dispersion due to its near-unit-root persistence.

Regarding the parameters governing the shape of the demand function, the calibrated values $\psi=-1.10$ and $\omega=1.18$ imply a price elasticity of demand of 6.5, a super-elasticity of 7.1 under a symmetric equilibrium. Table \ref{tab:compare_calib_elas} places these implied elasticities in the context of existing calibrations and estimates. The implied elasticity and super-elasticity in our baseline calibration lie within the range commonly used in the literature. It is also worth noting that micro-based estimates of demand elasticities and super-elasticities are rare, and \cite{beck2020price} is a notable exception.

\subsection{Micro Implications Beyond the Calibration Targets}  \label{sec:untargeted_moments}

This subsection reports moments that are not targeted in the internal calibration and are therefore implications of the calibrated model. We begin with standard micro pricing moments from CPI data, then examine the implied hazard of price adjustment. We then turn to implications for the cross-sectional distribution of markups and for cost pass-through.

\subsubsection{Pricing}

\begin{table}[t!]
    \centering
     \caption{Untargeted Pricing Moments}
    \begin{tabular}{ccc} \toprule
        Moments & Data &  Baseline \\ \midrule
        Average Size     & 0.08 &  0.06 \\
        Fraction Up      & 0.65 &  0.55 \\
        SD($\Delta p$)   & 0.08 & 0.07 \\  \bottomrule
    \end{tabular}
    \label{tab:untargeted_mom}
    \justify
\footnotesize{Note: This table shows three untargeted moments: average size of adjustment conditional on a price change, the fraction of adjustments that are positive, and the standard deviation of price changes excluding zeros, computed from the data and from model simulated data.}
\end{table}

Table \ref{tab:untargeted_mom} reports three important pricing moments that are not targeted in the baseline calibration, alongside their data counterparts computed from U.S. CPI microdata.\footnote{We borrow these estimates from \cite{vavra2014inflation}, which we review in more detail in Appendix \ref{app:pricing_moments}. One may be concerned about whether pricing facts from CPI data are the correct benchmark. However, \cite{nakamura2008five} show that the key pricing moments computed from Producer Pricing Index data do not differ significantly.} These moments include the average size of a price change conditional on a change, the fraction of adjustments that are positive, and the dispersion of non-zero price changes. The model captures the broad magnitudes of these moments despite not targeting them. This is noteworthy because the key demand parameters are disciplined using firm-level evidence, including the price--TFPQ correlation (Section \ref{sec:simple_demand}) and the \cite{foster2008reallocation} decomposition of idiosyncratic productivity and demand.

We also examine the price adjustment hazard function generated by the model, defined as the probability that a price changes  $\cal H$ periods after the last adjustment, conditional on the price spell lasting $\cal H$ periods. Empirically, the hazard is observed to be either mildly downward-sloping \citep{nakamura2008five,baley2019firm} or approximately flat \citep{klenow2008state}.\footnote{While the literature generally finds a mildly negative slope, \cite{alvarez2023consistent} show that controlling for product heterogeneity can reverse the sign of the slope. In a menu-cost model, the hazard shape is sensitive to the size and persistence of idiosyncratic driving forces. Persistent idiosyncratic shocks generate more reasons for adjustment shortly after a previous change, raising the early part of the hazard and tending to flatten or mildly tilt it downward. In our calibration, both productivity and demand shocks are disciplined by \cite{foster2008reallocation}, and the presence of two orthogonal shocks increases the scope for early adjustments relative to a one-shock environment.}

\begin{figure} [t!]
    \centering
    \caption{Hazard Function of Price Change}
    \includegraphics[scale=0.09]{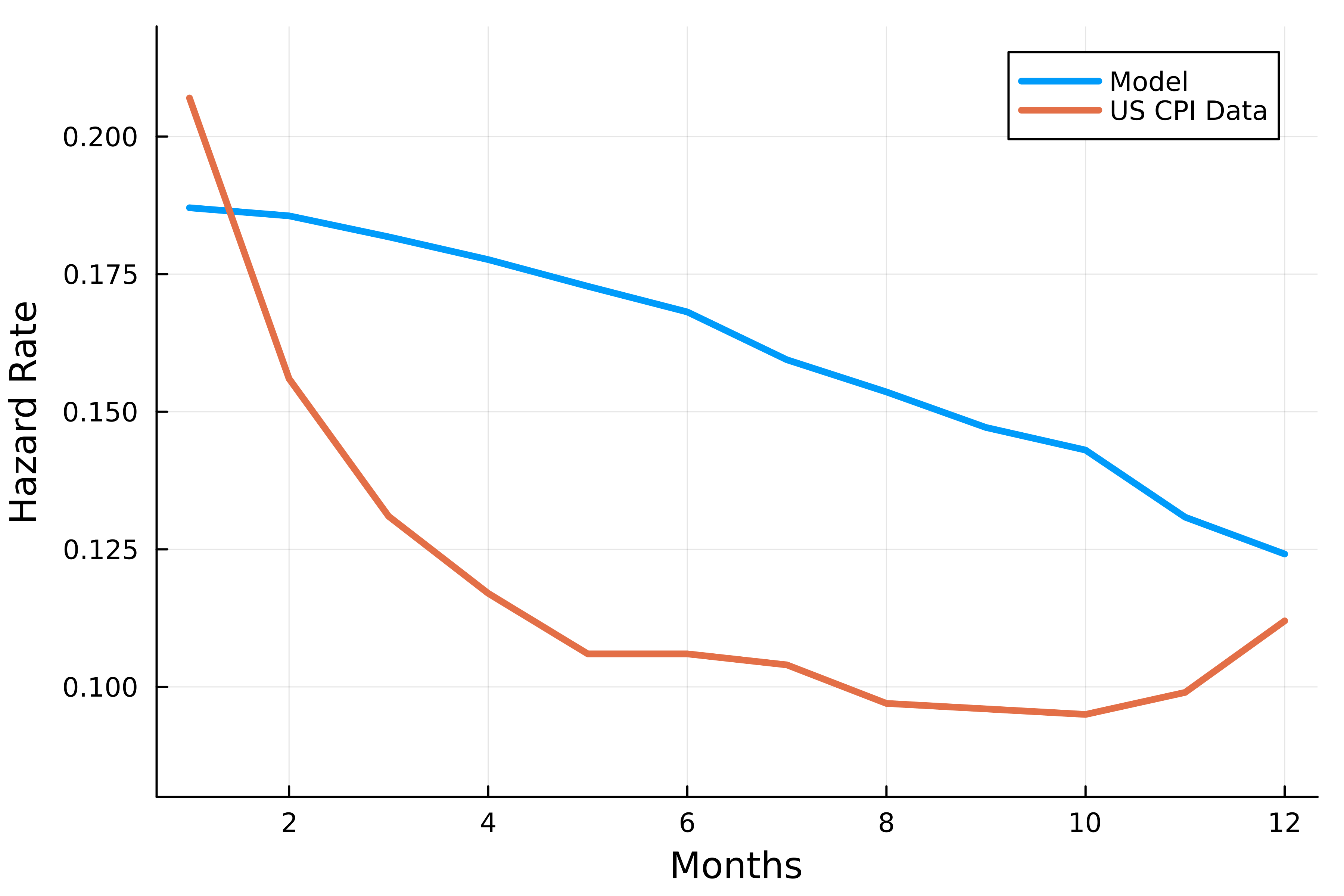}
    \label{fig:hazard}
    \justify
    \footnotesize{Note: This figure plots the pricing hazard from model generated data against the empirical hazard estimated from U.S. CPI in \cite{baley2019firm}.}
\end{figure}

Figure \ref{fig:hazard} compares the hazard implied by the model with the empirical hazard estimated by \cite{baley2019firm} using U.S. CPI data.\footnote{The shape of the hazard function reported by \cite{baley2019firm} aligns with that reported by \cite{nakamura2008five} for processed food, which they suggest is representative of the hazard function for many other product groups.} The model captures the overall mildly downward-sloping pattern, but it does not replicate the very steep initial decline in the empirical hazard. A natural extension is to add firm-level uncertainty and learning, as in \cite{baley2019firm}, which generates clustered adjustments and a high hazard immediately after a price change.

\subsubsection{Markup}

One prominent application of Kimball demand, among other non-CES systems, is in modeling variable markups. Because a firm's desired markup depends on both its idiosyncratic productivity and demand, the cross-sectional distribution of productivity and demand, combined with pricing frictions, results in a non-degenerate markup distribution in the model.

\begin{figure}[t!]
    \centering
    \caption{Cross-Sectional Distribution of Gross Markup: Model vs. Data}       
    \includegraphics[scale=0.4]{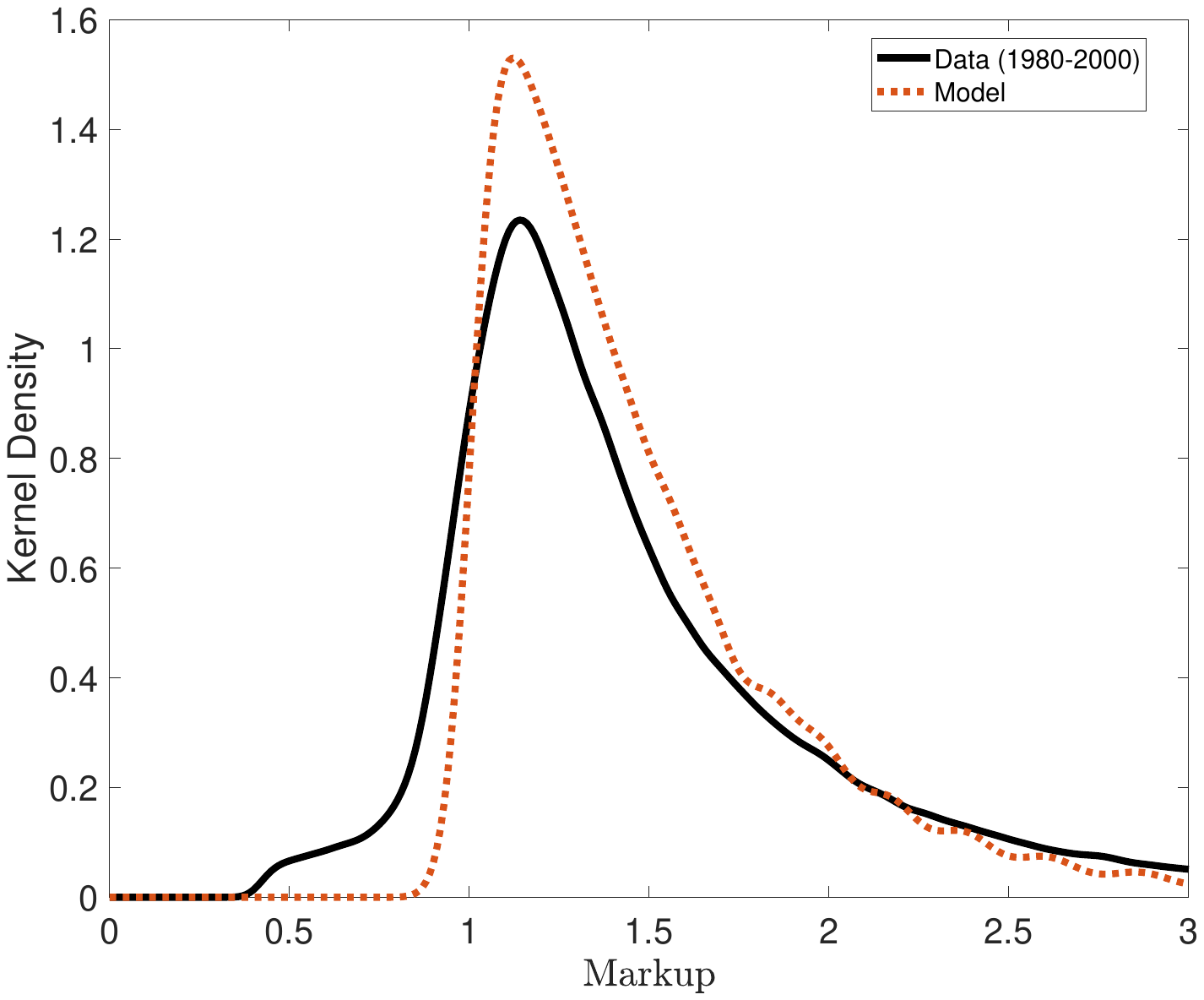}
    \label{fig:ksdensity_markup}
    \justify
    \footnotesize{The figure plots the kernel density of the empirical markup distribution from publicly traded firms in the U.S. as well as the kernel density of the markup distribution in the ergodic distribution of the model. Both kernel densities are computed using the optimal bandwidth for normal densities.}
\end{figure}

Figure \ref{fig:ksdensity_markup} plots the kernel density of the cross-sectional distribution of gross markups from both the model and the data. The empirical distribution of markups is computed using data on public firms between 1980 and 2000, following the method of \cite{de2020rise}.\footnote{Despite the potential limitations of using revenue data to estimate markups, \cite{de2024hitchhiker} show that while the estimated levels of markups can be noisy, this method provides reliable estimates of the dispersion and shape of the markup distribution. More details on the empirical markup estimation can be found in Appendix \ref{app:markup_passthrough}.} We find that the markup distribution generated by the model resembles the untargeted empirical distribution. The mean gross markup in the model is 1.49, compared to 1.56 in the data. The model-implied markup distribution exhibits lower variance than the empirical distribution, particularly in the tails. Specifically, our model generates fewer firms with gross markups exceeding 2 and does not produce markups significantly below 1. Matching these extreme tails would require additional features, such as non-Gaussian demand shocks, monopolistic firms, customer capital, or firm exit.

\subsubsection{Cost Pass-through}

We now compare the degree of cost pass-through implied by our calibrated demand system to empirical estimates from the literature. Given the calibrated values of $\omega$ and $\psi$, the model implies a desired-price cost pass-through of 43\%. The literature in both open-economy and closed-economy macroeconomics, reviewed in Appendix \ref{app:markup_passthrough}, has estimated cost pass-through using various datasets from different countries and consistently finds pass-through rates in the range of 20\% to 50\%.

\subsubsection{Firm Dynamics}

Because idiosyncratic demand and productivity processes determine the ergodic properties of firm growth, we can also compare the cross-sectional dispersion of output growth rates computed from model-simulated data with external evidence to gauge whether the estimates from \cite{foster2008reallocation} can be generalized beyond the eleven industries they study. We benchmark our estimates to \cite{davis2006volatility}, a study that utilizes the Longitudinal Business Database, providing a comprehensive measure of U.S. business dynamics. They estimate the cross-sectional standard deviation of firm revenue growth rates to be 0.39 over the period 1982--1997, while in our simulated data, this untargeted moment is 0.55.



\section{External Validity and Robustness} \label{sec:ext_valid_robust}

Section \ref{sec:calibration} disciplines demand curvature and firm-level shock processes using firm
dynamics evidence and targets only the frequency of price changes. This section evaluates how portable
these disciplined inputs are outside the narrow U.S. manufacturing industries studied by
\cite{foster2008reallocation} and how sensitive the quantitative implications are to alternative
calibration choices and modeling assumptions.

\subsection{External Validity: An Application to Colombian Manufacturing Data} \label{sec:external_val}

Drawing on empirical evidence from firm dynamics in narrowly defined U.S. manufacturing industries from
\cite{foster2008reallocation}, our baseline calibration demonstrates that a menu-cost model
incorporating micro real rigidities and disciplined supply and demand processes can align with
non-targeted firm-level pricing facts. A natural concern is whether the underlying firm-level moments
used for discipline generalize beyond the subset of U.S. industries studied by \cite{foster2008reallocation}.
Unfortunately, \cite{foster2008reallocation} remains the only U.S.-based study that systematically
estimates firm-level productivity and demand shocks, mostly because separate identification requires
firm-level price and quantity data.

To assess external validity, we turn to Colombia, where researchers have estimated closely related
moments for the universe of manufacturing firms using firm-level price and quantity data. We first
compare Colombian estimates of key moments that discipline demand curvature to the corresponding U.S.
moments. We then recalibrate the model using Colombian moments to test robustness outside the U.S.
context.

Employing largely the same methodology as \cite{foster2008reallocation}, \cite{eslava2013trade} use data
covering the universe of Colombian manufacturing firms to separately identify productivity and demand
processes at the firm level. They report similar values for the estimated IV coefficient and the
correlation between $TFPQ$ and price, which are crucial for disciplining demand curvature. Specifically,
they report Corr($TFPQ,P$) = $-0.65$ and an IV coefficient of $-2.05$, compared to the
\cite{foster2008reallocation} values of $-0.54$ and $-2.40$, respectively.\footnote{In
\cite{eslava2013trade}, the authors allow the elasticity of demand to vary by industry based on its
Herfindahl concentration index. The $-2.05$ is the two-stage least squares coefficient for the average industry, which has a Herfindahl index of 0.18.}

More recently, \cite{eslava2024size} apply an alternative approach to the same Colombian manufacturing
data, relaxing the orthogonality assumption between idiosyncratic supply and demand shocks. As we
explain in Appendix \ref{app:eslava_strategy}, rather than separately estimating production and demand
functions via an instrumental variables approach, they employ a Generalized Method of Moments (GMM)
framework to jointly estimate productivity and demand, which allows for a non-zero correlation between
productivity and demand innovations.

To test robustness outside the U.S. context, we recalibrate the baseline model using Colombian moments
and allowing productivity and demand innovations to be correlated. We follow a strategy similar to our
main analysis, selecting parameters $\{f,\rho_z,\sigma_z,\rho_\nu,\sigma_\nu,\omega,\psi\}$
to match six moments from \cite{eslava2024size} along with Colombia's price adjustment frequency. The
firm dynamics moments used to discipline the model are largely identical to those in the main analysis
with one exception. Because \cite{eslava2024size} use a GMM approach rather than the IV strategy in
\cite{foster2008reallocation}, we replace the IV estimate with the correlation between physical output
and markup. We also directly fix $\rho_{z\nu} = -0.07$, based on the estimate of \cite{eslava2024size}. For Colombia's price adjustment frequency, we draw on \cite{julio2011rigideces}, who compute micro pricing moments using Colombian CPI data. Between 1999 and 2008, the average monthly price
adjustment frequency for goods in the CPI basket ranged from approximately $10\%$ to $15\%$. In the
calibration, we target 12.5.

Table \ref{tab:colombia_calib} reports the calibration to Colombian data. Overall, the model fits
Colombian firm dynamics statistics well. Similar to what we observe in the U.S., the calibrated
idiosyncratic demand and productivity processes are both highly persistent, with a monthly autocorrelation
of 0.996 for demand and 0.988 for productivity, and idiosyncratic demand is more persistent than
idiosyncratic productivity. As in the U.S. calibration, innovations to productivity are more volatile
than innovations to demand. In terms of demand curvature, the calibrated values of $\psi$ and $\omega$
imply a cost pass-through of 51\%, slightly higher than in the U.S. calibration.

Interestingly, the model calibrated only to firm dynamics moments and price adjustment frequency also
delivers reasonable pricing implications in Colombia. In the calibrated model, the average size of
price adjustment is 20\%, consistent with Figure 8 of \cite{julio2011rigideces}, which shows median
price changes in Colombia fluctuating between 10\% and 20\% over 1999--2008.

\begin{table}[t!]
\centering
\caption{Calibration with Colombian Data}
\label{tab:colombia_calib}
\scalebox{0.9}{
\begin{tabular}{lcc}
\toprule
\textbf{Moment}                                              & \textbf{Data}  & \textbf{Model} \\ 
\midrule
Frequency of price changes                                   & 0.13     & 0.13 \\ 
\midrule
Yearly autocorrelation of $z_t^i$                            & 0.91           & 0.90 \\ 
Cross-sectional standard deviation of $z_t^i$                & 0.75           & 0.78 \\ 
Yearly autocorrelation of $\log \nu_t^i$                     & 0.98           & 0.97 \\ 
Cross-sectional standard deviation of $\log \nu_t^i$         & 0.89           & 1.03 \\ 
Corr. between price and TFPQ                           & --0.73         & --0.76 \\ 
Corr. between output and markup                        & 0.45           & 0.46 \\ 
\midrule
\textbf{Parameter (Internal)}                                & \textbf{Description}                            & \textbf{Value} \\ 
\midrule
$\psi$                                                       & \multicolumn{1}{l}{Super-elasticity}                                & --0.45 \\ 
$\omega$                                                     & \multicolumn{1}{l}{Elasticity}                                      & 2.1 \\ 
$\rho_z$                                                     & \multicolumn{1}{l}{Persistence of $z_t^i$}                          & 0.988 \\ 
$\sigma_z$                                                   & \multicolumn{1}{l}{Standard deviation of $z_t^i$}                   & 0.12 \\ 
$\rho_\nu$                                                   & \multicolumn{1}{l}{Persistence of $\log \nu_t^i$}                   & 0.996 \\ 
$\sigma_\nu$                                                 & \multicolumn{1}{l}{Standard deviation of $\log \nu_t^i$}            & 0.09 \\ 
$f$                                                          & \multicolumn{1}{l}{Menu cost}                                       & 0.04 \\ 
\midrule
\textbf{Parameter (External)}                                & \textbf{Description}                            & \textbf{Value} \\ 
\midrule
$\beta$                                                      & \multicolumn{1}{l}{Discount factor}                                 & 0.9966 \\ 
$\chi$                                                       & \multicolumn{1}{l}{Labor disutility}                                & 1 \\ 
$\mu$                                                        & \multicolumn{1}{l}{Growth rate of $S$}                              & 0.0056 \\ 
$\rho_{z \nu}$    & \multicolumn{1}{l}{Correlation of innovations} & -0.07 \\
\bottomrule
\end{tabular}
}
\justify
\footnotesize{
Note: The top panel compares the targeted moments and model-implied moments in the model
calibrated to Colombian data. The bottom panel reports the internally and externally calibrated 
parameter values. The discount factor $\beta$ implies an annual discount rate of 4\%. Labor 
disutility $\chi$ is normalized to 1. The growth rate $\mu$ is chosen to match an annual 
inflation rate of 7\%.}
\end{table}

\subsection{Robustness} \label{sec:robustness}

In this section, we explore a range of robustness exercises to our quantitative analysis.

\subsubsection{Targeting Broader Moments}
\label{sec:Model_1B}

Our baseline calibration follows the tradition in the menu-cost literature, where a parsimonious set of key moments is explicitly targeted using the same number of parameters, while the remaining moments are reported as untargeted checks. Here we consider an alternative strategy in which we choose the seven parameters $\{f,\rho_z,\sigma_z,\rho_\nu,\sigma_\nu,\psi,\omega\}$ to target a broader set of moments, including the pricing moments and markup moments reported above, in addition to the firm-dynamics moments used in the baseline.

\begin{table}[t!]
\centering
\caption{Internal Calibrations: Targeting Broader Moments} \label{tab:Model_1B}
\scalebox{0.9}{
\begin{tabular}{lcccc} \toprule
   {\textbf{Parameter}}& \textbf{Description} & \textbf{Baseline} &  \textbf{Broad Targets}\\ \midrule
    $f$       & \multicolumn{1}{l}{Menu cost}      & 0.016 & 0.019 \\
    $\rho_z$  & \multicolumn{1}{l}{Persistence of $z^i_t$}      & 0.977 & 0.979 \\
    $\sigma_z$& \multicolumn{1}{l}{Standard deviation of $z^i_t$}      & 0.06 & 0.06 \\
    $\rho_\nu$  & \multicolumn{1}{l}{Persistence of $\log \nu^i_t$}       & 0.998 & 0.998 \\
    $\sigma_\nu$& \multicolumn{1}{l}{Standard deviation of $\log \nu^i_t$}      & 0.030 & 0.032 \\
    $\psi$    & \multicolumn{1}{l}{Demand curvature}      & --1.10 & --0.93 \\
    $\omega$  & \multicolumn{1}{l}{Elasticity of substitution} & 1.18 & 1.42 \\ \midrule
    \multicolumn{1}{l}{\textbf{Narrow Moments}} & \textbf{Data} \\ \midrule
    SD($z_t^i$)                 & 0.26 & \textbf{0.25} & \textbf{0.26} \\
    5-year autocorr. of $z_t^i$ & 0.31 & \textbf{0.31} & \textbf{0.31} \\
    SD($\log \nu_t^i$)                 & 1.16 & \textbf{1.14} & \textbf{1.07} \\
    5-year autocorr. of $\log \nu_t^i$ & 0.62 & \textbf{0.58} & \textbf{0.69} \\
    IV coefficient              & --2.40 & \textbf{--2.45} & \textbf{--3.84} \\
    Corr(price, TFPQ)           & --0.54 & \textbf{--0.54} & \textbf{--0.47} \\
    Frequency of price changes  & 0.11 & \textbf{0.11} & \textbf{0.11} \\ \midrule
    \multicolumn{4}{l}{\textbf{Broader Moments}} \\ \midrule
    Growth dispersion           & 0.39 & 0.31 & \textbf{0.44} \\
    Fraction of price increases & 0.65 & 0.55 & \textbf{0.59} \\
    Average size of price change& 0.08 & 0.06 & \textbf{0.08} \\
    SD($\Delta p$)              & 0.08 & 0.07 & \textbf{0.09} \\
    Avg. markup                 & 1.56 & 1.49 & \textbf{1.71} \\
    SD(markup)                  & 0.72 & 0.47 & \textbf{0.63} \\ \midrule
    Pass-through of supply shocks & 20\%--50\% & 0.43 & 0.43 \\ \bottomrule
\end{tabular}
}
\justify
\footnotesize{This table reports the baseline calibration targeting only a narrow set of moments in column 3 and an alternative calibration in column 4 that targets both the narrow set of moments and the broader moments previously reported as untargeted validation checks. Bold-faced numbers denote targeted moments.}
\end{table}

Table~\ref{tab:Model_1B} summarizes the results. Targeting the broader set of moments improves the fit to the previously untargeted pricing and markup moments, though at the expense of a weaker fit to the IV coefficient and the correlation between price and TFPQ. The resulting parameter values remain close to the baseline calibration. In particular, both calibrations imply similar demand curvature and a similar degree of cost pass-through (0.43 in both cases).

\subsubsection{CES Demand with Decreasing Returns to Scale}

Our baseline focuses on Kimball demand as a source of micro real rigidity. An alternative source is
cost-side curvature stemming from decreasing returns to scale, which can generate strategic complementarity in
pricing even under CES demand. We explore this alternative in an environment similar to
\cite{burstein2007prices}, featuring CES demand and decreasing returns to scale at the firm level. By
applying the same calibration targets as in our baseline calibration, we show in Appendix
\ref{app:DRS} that this setup can also match several targeted and untargeted moments of the data.
However, to do so it requires a decreasing returns of 0.59, which is much lower than what is
typically documented in the empirical literature -- for example \cite{basu1997returns} reports results that are  close to constant returns to scale. This version of the model also does not replicate the untargeted markup distribution well. While we focus on demand-side curvature in the main analysis for simplicity, combining demand-
and cost-side sources of micro real rigidities is a natural direction for future work.

\subsubsection{Correlated Shocks} \label{sec:correlated_shocks}

The estimation strategy in \cite{foster2008reallocation} imposes orthogonality between idiosyncratic
productivity and demand shocks, which we adopt in our baseline calibration. Using Colombian
manufacturing data and an alternative identification strategy, \cite{eslava2024size} relax this
assumption and estimate a weak negative correlation between demand and productivity innovations of
$-0.07$.

Appendix \ref{app:correlated_shocks} reports two alternative calibrations in which we allow
$\rho_{z\nu}\neq 0$ while holding all other parameters fixed at their baseline values. The main
quantitative implications are similar even when we allow substantially larger correlations, such as
$0.4$ or $-0.4$. The correlation mainly affects the average size of price changes. When productivity
and demand shocks are positively correlated, the average absolute price change is smaller because
states with high productivity, which put downward pressure on desired prices, tend to coincide with
high demand, which pushes desired prices up, partially offsetting each other.

\subsubsection{Leptokurtic Shocks}
\label{sec:lepto}

Standard menu-cost models, including our baseline specification, typically generate a distribution of
non-zero price changes that is too thin-tailed relative to the data. Empirically, the distribution of
price changes displays substantial kurtosis (e.g., \cite{alvarez2016real}). In Appendix
\ref{app:lepto}, we follow \cite{midrigan2011menu} and introduce leptokurtic idiosyncratic demand
innovations to better match the tails of the price-change distribution. This modification improves the
fit of higher moments of $\Delta p$ while leaving the main quantitative conclusions of the paper
unchanged.

\subsubsection{Klenow and Willis (2016): Critique and the Role of Demand Shocks}
\label{sec:kw}

\cite{klenow2016real} take demand curvature as given, an elasticity level together with a super-elasticity parameter, and ask whether a menu-cost model can match the observed distribution of price changes. Their conclusion is that, under high-curvature parametrizations used in much of the macro literature, matching the price-change distribution requires a very volatile idiosyncratic residual process, with sharp implications for firm-level real outcomes. The purpose of this subsection is to restate this tension in terms of observable moments, then to clarify what is, and is not, identified in a productivity-only environment. Table~\ref{tab:kw_variants} reports three Klenow--Willis style variants featuring only idiosyncratic productivity shocks. KW1 is a broad productivity-only calibration. KW2 fixes Kimball curvature at values mapped from \cite{klenow2016real}. KW3 refits curvature to match the size of price changes and average markups. The baseline with idiosyncratic demand shocks is included for comparison.

\begin{table}[t!]
\centering
\caption{Alternative Model: Klenow--Willis Variants} \label{tab:kw_variants}
\scalebox{0.9}{
\begin{tabular}{lccccc} \toprule
    \multicolumn{1}{l}{\textbf{Parameter}} & \textbf{Description} & \textbf{Baseline} &
    \textbf{\shortstack{KW1\\broad}} &
    \textbf{\shortstack{KW2\\$(\omega,\psi)$ fixed}} &
    \textbf{\shortstack{KW3\\$(\omega,\psi)$ refit}} \\ \midrule
    $f$       & \multicolumn{1}{l}{Menu cost}                      & 0.016 & 0.027 & 0.015 & 0.023 \\
    $\psi$    & \multicolumn{1}{l}{Super-elasticity}               & --1.10 & --0.83 & --2 & -0.77 \\
    $\omega$  & \multicolumn{1}{l}{Elasticity}                     & 1.18 & 1.45 & 1.25 & 1.45 \\
    $\rho_z$  & \multicolumn{1}{l}{Persistence of $z^i_t$}         & 0.977 & 0.981 & 0.982 & 0.979 \\
    $\sigma_z$& \multicolumn{1}{l}{Std.\ dev.\ of $z^i_t$}        & 0.060 & 0.060 & 0.058 & 0.059 \\
    $\rho_\nu$  & \multicolumn{1}{l}{Persistence of $\log \nu^i_t$}         & 0.998 & -- & -- & -- \\
    $\sigma_\nu$& \multicolumn{1}{l}{Std.\ dev.\ of $\log \nu^i_t$}        & 0.030 & -- & -- & -- \\ \midrule
    \multicolumn{1}{l}{\textbf{Narrow Moments}} & \textbf{Data} &   &   &   &   \\ \midrule
    SD($z_t^i$)                 & 0.26 & \textbf{0.25} & \textbf{0.28} & \textbf{0.26} & \textbf{0.27} \\
    5-year autocorr. of $z_t^i$ & 0.31 & \textbf{0.31} & \textbf{0.32} & \textbf{0.33} & \textbf{0.30} \\
    IV demand elasticity        & --2.40 & \textbf{--2.31} & {--3.87} & {--7.11} & {--3.90} \\
    SD($\log \nu_t^i$)                 & 1.16 & \textbf{1.14} & {0.22} & {0.34} & {0.20} \\
    5-year autocorr. of $\log \nu_t^i$ & 0.62 & \textbf{0.58} & {0.08} & {0.05} & {0.05} \\
    Corr(price, TFPQ)           & --0.54 & \textbf{--0.54} & \textbf{--0.99} & {--0.97} & {--0.99} \\
    Frequency of price changes  & 0.11 & \textbf{0.11} & \textbf{0.10} & \textbf{0.11} & \textbf{0.11} \\ \midrule
    \multicolumn{6}{l}{\textbf{Broader Moments}} \\ \midrule
    Growth dispersion           & 0.39 & {0.55} & \textbf{0.33} & {0.40} & {0.33} \\
    Fraction of price increases & 0.65 & {0.55} & \textbf{0.63} & {0.64} & {0.55} \\
    Average size of price change& 0.08 & {0.06} & \textbf{0.08} & {0.04} & \textbf{0.08} \\
    SD($\Delta p$)              & 0.08 & {0.07} & \textbf{0.08} & {0.04} & {0.08} \\
    Avg. markup                 & 1.56 & {1.49} & \textbf{1.50} & {1.45} & \textbf{1.48} \\
    SD(markup)                  & 0.72 & {0.47} & \textbf{0.25} & {0.34} & {0.24} \\ \midrule
    Pass-through of supply shocks & 20\%--50\% & {0.43} & {0.45} & {0.29} & {0.47} \\ \bottomrule
\end{tabular}
}
\justify
\footnotesize{
This table reports alternative calibrations of the Kimball/menu-cost model.
Baseline is the benchmark with idiosyncratic demand shocks.
KW1 removes idiosyncratic demand shocks and recalibrates the remaining parameters to the broad set of moments used in our benchmark exercise.
KW2 fixes Kimball curvature at $(\omega,\psi)=(1.25,-2)$ and chooses remaining parameters to match the productivity-process moments and the frequency of price changes.
KW3 re-chooses $(\omega,\psi)$ to match the average size of price changes and average markups, while continuing to match frequency.
Targeted moments are boldfaced.
}
\end{table}

We start with KW1, which shuts down idiosyncratic demand shocks and recalibrates the remaining parameters to the same broad set of moments used in Section \ref{sec:Model_1B}.\footnote{Moments that require within-product demand variation, such as the persistence and dispersion of $\log \nu_t^i$ and the IV demand elasticity, are not treated as targets in KW1 because idiosyncratic demand shocks are shut down.} The goal is to assess how far a productivity-only Kimball/menu-cost model can go when confronted with the collection of pricing and firm-level moments used elsewhere in the paper. KW1 matches the idiosyncratic productivity process and several pricing moments, including the frequency of price changes, the average size of price changes, and the dispersion of $\Delta p$. At the same time, it implies Corr(price, TFPQ) essentially equal to $-1$. This conforms with Corollary~\ref{cor:corr_ps} that in a productivity-only environment, non-CES demand does not deliver a correlation between price and TFPQ meaningfully above negative unity.

KW2 restates the Klenow--Willis tension directly in observable moments. In KW2 we fix curvature at $(\omega,\psi)=(1.25,-2)$ and choose the remaining parameters to match the productivity-process moments and the frequency of price changes. This calibration matches the frequency (0.11) and the productivity-process moments, but it implies price changes that are too small. The average absolute price change is 0.04 versus 0.08 in the data, and SD($\Delta p$) is 0.04 versus 0.08. At the same time, KW2 delivers growth dispersion close to the data, 0.40 versus 0.39, even though growth dispersion is not targeted in this exercise. This indicates that the supply-side volatility embedded in the calibrated productivity process is already in the right range for an observable real-side dispersion statistic. Closing the gap in the size of price changes by increasing the volatility of idiosyncratic driving forces would therefore require departing from this disciplined productivity process, with immediate implications for observable real-side dispersion, including growth dispersion, and for the estimated productivity moments.

KW3 shows that the tension in KW2 is conditional on fixing curvature. In KW3 we allow $(\omega,\psi)$ to adjust so the productivity-only model matches the average size of price changes and the average markup, while continuing to match the frequency of price changes. Under this refit, the model reproduces the key pricing targets, but it does so with materially different curvature than in KW2; $(\omega,\psi)$ shifts from $(1.25,-2)$ to approximately $(1.45,-0.77)$. This illustrates that pricing moments alone provide weak discipline for Kimball curvature, and different combinations of curvature and other parameters can rationalize the same pricing statistics. The refit also has real-side implications, growth dispersion falls to 0.33 relative to 0.39 in the data. Finally, even when pricing moments are matched by construction, a productivity-only environment remains subject to the restriction that pushes Corr($p$, TFPQ) close to $-1$, while the empirical value is $-0.54$.

The benchmark model resolves these identification and fit issues by introducing idiosyncratic demand shocks and using micro restrictions to discipline curvature directly. With demand shocks, the IV demand elasticity provides mapped discipline for $\omega$. The correlation between price and TFPQ, together with the demand-process moments, disciplines $\psi$. Once curvature is disciplined in this way, the benchmark model matches the pricing moments in Table~\ref{tab:kw_variants} without choosing $(\omega,\psi)$ to fit them mechanically.

\section{Monetary Non-Neutrality} \label{sec:nonneutrality}

Having validated the model's ability to reconcile firm-level shocks with pricing dynamics, we now assess the degree of monetary non-neutrality generated by the calibrated model. We examine how an unanticipated nominal expenditure innovation, interpreted as a monetary policy shock, affects real output. Monetary policy is neutral if the nominal shock is fully transmitted to prices, leaving real aggregate output and consumption unchanged. If prices respond only partially, the shock has real effects.

We consider the response of real output to a one-time, unanticipated positive level shock to nominal expenditure $S$ of 0.2\%. Given that $S$ follows a trend growth rate of $\mu=0.2\%$ per month, this shock raises nominal expenditure growth to 0.4\% in the impact period.\footnote{As the model is solved non-linearly, the response of output may display non-linearity with respect to the size of the shock. We leave a systematic analysis of this nonlinearity to future work.} 

From this impulse response, we report three summary measures of non-neutrality. The first is the impact response of output as a fraction of the shock size. The second is the half-life of the impulse response, measuring the persistence of the shock's effect. Lastly, we report the cumulative impulse response (CIR) over 35 months, defined as the sum of the output response as a fraction of the shock over the horizon considered.

\begin{figure}[t!]
\centering
    \caption{Impulse Response of Real Output to a Nominal Expenditure Shock}
    \includegraphics[scale=0.5]{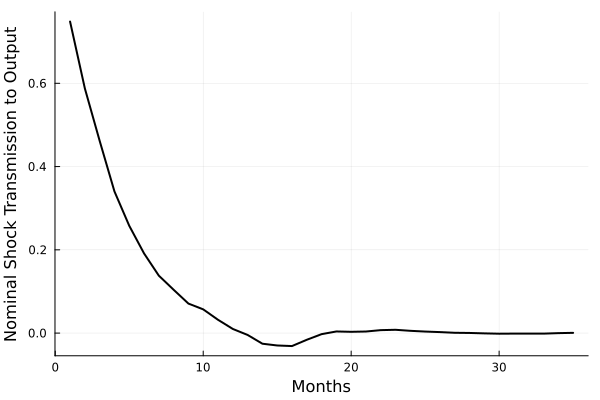}
    \label{fig:IRF_baseline}
     \justify
    \footnotesize{Note: This figure plots the impulse response of real output expressed as a fraction of the nominal expenditure shock on the vertical axis and periods elapsed since the shock on the horizontal axis.}
\end{figure}

Figure \ref{fig:IRF_baseline} displays the impulse response of real output as a fraction of the shock size for the baseline calibration reported in Table~\ref{tab:int_calib}. On impact, 75\% of the nominal expenditure increase translates into higher real output. The real effects of the shock diminish over time, with a half-life of 2.4 months, eventually dissipating after 12 months. The cumulative impulse response over the horizon considered is 2.92. The second column of Table~\ref{tab:nonneutrality_models} shows that the calibration targeting a broader set of moments (Table~\ref{tab:Model_1B}) generates a similar degree of non-neutrality, which is unsurprising given that it does not differ much from the baseline calibration.

The degree of monetary non-neutrality generated by the calibrated model lies in the upper range of values reported in the literature, as summarized by \cite{mongey2021market}. Moreover, the impact response to nominal shocks in our model is comparable to results from studies that incorporate macro real rigidities as an alternative source of real transmission. For example, \cite{nakamura2010monetary} develop a multi-sector model with production networks, introducing macro real rigidity via sticky marginal costs, and obtain an impact response of 0.80. This comparison highlights that micro real rigidity, as represented in our model, can generate a degree of non-neutrality comparable to models with macro real rigidities, while simultaneously aligning with micro-data on firm dynamics and pricing.\footnote{Table A.1 in \cite{mongey2021market} shows that menu-cost models without real rigidities that are calibrated to the U.S. economy typically generate an impact response in the range of 0.35 to 0.50. Richer models that include alternative sources of real rigidities find higher values in the 0.7--0.8 range.}

It is useful to highlight two sources of amplification in our setup relative to a simple menu-cost model \citep{golosov2007menu}. The first source is micro real rigidities, which introduce strategic complementarities in pricing via the Kimball demand system. With micro real rigidities, firms adjusting their prices tend to make smaller adjustments than they would under a CES demand system, reflecting their incentive to remain closer to the prices of competitors who are not adjusting, and this amplifies monetary non-neutrality.\footnote{This intuition is formalized in \cite{alvarez2022price}, who derive analytic results in a menu-cost model with strategic complementarities framed as a Mean Field Game. Their findings show that complementarity increases the impulse response of output to a nominal shock at every horizon. They also show that their theoretical result holds under a Calvo model of pricing.}

\begin{table}[t!]
    \centering
    \caption{Monetary Non-neutrality across Models}     \label{tab:nonneutrality_models}
    \begin{tabular}{lcccc} \toprule
                    & Baseline & Broad Moments & KW1 broad & CES  \\ \midrule
    Impact response & 0.75  & 0.69  & 0.64 & 0.65  \\
    Half-life (months) & 2.4 & 3.3 & 2.4 & 2.3 \\
    CIR             & 2.92  & 3.16  & 2.04 & 2.18  \\ \bottomrule
    \end{tabular}
    \justify
    \footnotesize{The first column refers to the baseline calibrated model as summarized in Table~\ref{tab:int_calib}. The second column refers to the calibration targeting a broader set of moments as presented in Table~\ref{tab:Model_1B}. The third column refers to the model with Kimball demand but no idiosyncratic demand shocks, described in Table~\ref{tab:kw_variants}. The last column refers to a model with CES demand and both idiosyncratic productivity and demand shocks, calibrated to the persistence and dispersion of idiosyncratic productivity and demand processes as documented in \cite{foster2008reallocation} as well as the frequency of price changes.}
\end{table}

We quantify the role of micro real rigidities in monetary non-neutrality by comparing our baseline model to a CES menu-cost model following \cite{golosov2007menu} that incorporates both idiosyncratic productivity and demand shocks. We calibrate this counterfactual model to the same firm-dynamics properties from \cite{foster2008reallocation} used in the baseline, alongside the observed frequency of price changes.\footnote{Calibration details are provided in Appendix \ref{app:ces_models}.} The last column of Table~\ref{tab:nonneutrality_models} summarizes the resulting non-neutrality, which has an impact response of 0.65, half-life of 2.3 months, and cumulative response of 2.18. Since the two models differ solely in their demand systems, this divergence reflects amplification driven by micro real rigidities via strategic pricing complementarity.

The second factor that affects the degree of monetary non-neutrality is the strength of selection in price adjustment. The real response to nominal shocks depends not only on how many prices adjust but also on which prices adjust. In menu-cost models, selection is strong, the firms that incur the fixed costs of price adjustment are those with the largest desired price changes. The introduction of idiosyncratic shocks weakens this selection effect, as firms respond to both aggregate shocks and disturbances in their individual states. \cite{nakamura2010monetary} show that, holding the frequency of price changes constant, more volatile idiosyncratic shocks lead to weaker selection and stronger non-neutrality.

In our baseline model, firms are subject to two orthogonal idiosyncratic shocks, which weakens selection. To quantify the impact of this force, we compare the baseline model with one without idiosyncratic demand shocks. Specifically, we compute the impulse response for the model with Kimball demand but no idiosyncratic demand shocks, calibrated to the same broad sets of moments used throughout the analysis.\footnote{This calibration is discussed in Section~\ref{sec:kw} and is labelled ``KW1 broad'' in Table~\ref{tab:kw_variants}.} The third column in Table \ref{tab:nonneutrality_models} reports non-neutrality statistics for this model. Moving from this model to one that includes idiosyncratic demand, both calibrated to the same broad set of moments, increases the impact response by roughly 8\%, from 0.64 to 0.69, and the cumulative response by 55\%, from 2.04 to 3.16.

Given the centrality of micro real rigidities in amplifying monetary non‑neutrality, it is natural to ask whether this channel could be equally important under other forms of nominal frictions. To this end, we consider the framework of \cite{calvo1983staggered}, due to its prevalence as a modelling tool of nominal rigidities in the macroeconomics literature. In the Calvo setup, firms can adjust prices only with an exogenous probability in each period. Because this probability is independent of the size and sign of the desired price change, price adjustment is purely time dependent rather than state dependent. As a result, there is no selection -- firms with highly misaligned prices are no more likely to adjust than firms whose prices are close to their desired levels. The lack of selection implies substantially stronger real effects of monetary shocks. Indeed, \cite{alvarez2016real} show that, when both models match the same  micro evidence on price changes, monetary non‑neutrality in a Calvo model is six times larger than in the Golosov–Lucas menu cost benchmark.

\begin{table}[t!]
    \centering
    \caption{Monetary Non-neutrality: Calvo}     \label{tab:calvo_nm}
    \begin{tabular}{ccccc} \toprule
                    & Baseline & Calvo 1 & Calvo 2 & Calvo 3  \\ \midrule
    Impact Response & 0.75  & 0.89  & 0.94 & 0.97  \\
    CIR             & 2.92  & 8.09  & 8.36 & 20.03  \\ \bottomrule
    \end{tabular}
    \justify
    \footnotesize{Calvo 1 refers to the theoretical impulse response from a Calvo model with CES demand and no idiosyncratic shocks. Calvo 2 refers to a Calvo model with CES demand $\psi=0$ and idiosyncratic demand and productivity shocks calibrated to match the five-yearly autocorrelation and dispersion of TFPQ and demand reported in \cite{foster2008reallocation}. Calvo 3 refers to a Calvo model with the same calibration as the baseline model which includes Kimball demand. All three models set the probability of price adjustment to be 0.11.}
\end{table}

We compare the degree of non-neutrality under three Calvo pricing models, which we present in Table~\ref{tab:calvo_nm}. In all three models, we set the probability of adjustment to 0.11, matching the empirical price adjustment frequency. We first study a CES model with Calvo pricing and no idiosyncratic shocks to firms. As we show analytically in Appendix~\ref{app:calvo}, this economy -- which we label Calvo 1 -- has an impact response of 0.89 and a CIR of 8.09. Calvo 2 then adds idiosyncratic TFPQ and demand shocks, with the parameters $(\rho_z,\sigma_z,\rho_{\nu},\sigma_{\nu}$) calibrated to match the four empirical moments pertaining to firm productivity and demand in Table~\ref{tab:int_calib}. The addition of idiosyncratic shocks slightly raises the degree of non-neutrality, with the impact response increasing to 0.94 and the CIR to 8.36. While firms that adjust are still chosen at random, the size and direction of their adjustments are influenced by both the aggregate nominal expenditure shock and idiosyncratic disturbances. In comparison, price adjustments are all uniform -- responding to the same aggregate shock -- in the model without idiosyncratic shocks. Lastly, we introduce micro real rigidities through Kimball demand in Calvo 3, which uses the same calibration as the baseline model in Section~\ref{sec:calibration}, replacing menu cost with the Calvo adjustment probability. In the presence of micro real rigidities, the degree of non-neutrality is substantially higher, with an impact response of 0.97 and CIR of around 20 -- more than double that of CES models. This shows that the amplification role of micro real rigidities is important regardless of the way nominal rigidities are modelled.

\section{Conclusion}  \label{sec:conclusion}

This paper revisits micro real rigidities as a source of monetary non-neutrality. The central insight in \citet{ball1990real} is not tied to any particular nominal friction. Whether prices are sticky because of menu costs, Calvo opportunities, or informational frictions, large real effects of nominal disturbances require forces that make firms want to adjust relative prices only modestly in response to aggregate conditions. In much of quantitative monetary economics, those forces have been sought in macro-level mechanisms such as sticky marginal costs, networks, or wage rigidity. By contrast, demand-based micro real rigidities, through variable markups, incomplete pass-through, and strategic complementarities in pricing, are workhorse ingredients in trade and IO, but have been used more sparingly in quantitative monetary transmission, in part because of concerns about empirical discipline and quantitative fit.

Our contribution is to show that a disciplined, demand-based micro real rigidities  framework can be quantitatively viable and portable across nominal-friction environments. We do so by combining three elements. First, we develop a simple price-setting model  that isolates how deviations from CES generate variable markups and incomplete pass-through, and why idiosyncratic demand shocks matter for desired prices only when they shift the residual-demand conditions at which local elasticities are evaluated. This yields clear, data-relevant restrictions linking the joint behavior of prices, quantities, and productivity to the key demand primitives. Second, we bring these restrictions to the data by disciplining the demand system and the firm-level demand and productivity processes using firm-level evidence that separates demand from technology. This step turns objects that are often treated as free parameters, the elasticity level, demand curvature, and the magnitude and persistence of idiosyncratic driving forces, into empirically grounded inputs. Third, we embed the disciplined demand system in a quantitative model with nominal price stickiness and show that it matches core micro pricing moments and firm-dynamics patterns largely out of sample, while generating substantial monetary non-neutrality.

The broader implication is that an empirically disciplined demand and markup framework  can serve as a bridge between literatures that are often studied separately. In monetary economics, it provides a tractable way to incorporate variable markups and incomplete pass-through into quantitative models of inflation dynamics and policy, and it delivers substantial monetary non-neutrality in a simple sticky-price environment once disciplined by firm-level evidence. In trade and IO, it offers a mapping from firm-level evidence on demand, markups, and pass-through to a demand system that can be used in macro applications. In firm dynamics, it provides a disciplined decomposition of demand and productivity that can be taken to models with richer real margins.

Several avenues for future research follow naturally from this bridge. One is to integrate the disciplined pricing and demand framework into life-cycle and firm-dynamics environments with entry and exit, investment, and workforce adjustment, to study how monetary shocks and pricing frictions interact with firm growth, selection, and reallocation. A second is to combine it with financial frictions, for example collateral constraints or working-capital needs, to understand how financing conditions affect pricing, pass-through, and the distribution of markups, and how those interactions shape monetary transmission. A third is to use the framework in open-economy settings where exchange-rate movements, imported-input costs, and global demand interact with variable markups, allowing a unified quantitative treatment of pass-through and monetary policy across domestic and international margins. More broadly, the empirical discipline developed here can be applied in any setting where prices and quantities can be used to separate demand from technology, opening the door to bringing micro evidence on demand, markups, and firm shocks into a wider class of macroeconomic models.


\section*{References}
\bibliographystyle{aernobold}
\vspace{-10mm}
\renewcommand\refname{}
\bibliography{kimball_reference}{}


\appendix

\clearpage

\renewcommand*\theequation{A-\arabic{equation}}
\setcounter{equation}{0}
\renewcommand*\thepage{A-\arabic{page}}
\setcounter{page}{1}
\renewcommand*\thetable{A-\arabic{table}}
\setcounter{table}{0}
\renewcommand*\thefigure{A-\arabic{figure}}
\setcounter{figure}{0}
\renewcommand*\thefootnote{A-\arabic{footnote}}
\setcounter{footnote}{0}

{\Large \bf Internet Appendix (For Online Publication) }

\section{Details for Section~\ref{sec:simple}} \label{app:sec2_details}

This appendix contains the derivations and proofs referenced in Section~\ref{sec:simple}.

\subsection{Real Rigidity and Strategic Complementarity} \label{app:rigidity_sc}

\paragraph{Real rigidity measure.}
Totally differentiating the firm’s first-order condition in the frictionless problem yields the local responsiveness of the desired relative price to aggregate demand,
\begin{equation}
\phi \;\equiv\; \dfrac{\partial \left (\dfrac{p_i^*}{P}\right )}{\partial \left ( \dfrac{S}{P} \right )}
\;=\; -\,\frac{\Pi_{12}\!\left(\frac{p_i^*}{P},\frac{S}{P},A_i\right)}{\Pi_{11}\!\left(\frac{p_i^*}{P},\frac{S}{P},A_i\right)}. \label{eq:app_SC1}
\end{equation}
With \(\Pi_{11}<0\) (concavity) and \(\Pi_{12}>0\) (local stability), we have \(\phi>0\).

\paragraph{Link to strategic complementarity.}
Define 
\[\zeta \equiv \dfrac{\partial \ln p_i^*}{\partial \ln P}.\] 
In a symmetric equilibrium with \(p_i^*=P\) and \(Y\equiv S/P\),
\begin{equation}
\frac{\partial \ln \left(\dfrac{p_i^*}{P} \right )}{\partial \ln \left ( \dfrac{S}{P} \right)} \;=\; 1-\zeta.  \label{eq:app_SC2}
\end{equation}
Hence, stronger real rigidity (smaller responsiveness of \(p_i^*/P\) to \(S/P\)) corresponds to stronger strategic complementarity in pricing (larger \(\zeta\)).

\subsection{Firm’s Problem and Lerner Pricing (Proof of Proposition~\ref{prop:markup})} \label{app:lerner_proof}

The firm chooses \(p_i\) to maximize \(\Pi(p_i/P,\;S/P,\;A_i)\). Let \(D(p_i/P;\,\cdot)\) denote residual demand for the firm (with aggregate objects taken as given by an atomistic firm) and $C\left(D\right)$ denote the cost of producing $D$ units. Profit is written as \(\pi_i=p_iD\left(p_i/P;\,\cdot\right) - C\left(D\left(p_i/P;\,\cdot\right)\right)\). The first-order condition is
\[
0 \;=\; D\left(\dfrac{p_i}{P}; \cdot\right) + p_i \cdot \frac{\partial D}{\partial p_i} - C'\left(D\left(\dfrac{p_i}{P}; \cdot\right)\right) \cdot \frac{\partial D}{\partial p_i} .
\]
Define the (absolute) residual-demand elasticity \(\sigma(\varphi_i)>1\) at the operating point \(\varphi_i\) by
\[
\sigma(\varphi_i) \;\equiv\; -\,\frac{\partial \ln (D)}{\partial \ln \left (\dfrac{p_i}{P} \right )}.
\]
Plugging 
\[\dfrac{\partial D}{\partial p_i} = -\dfrac{\sigma(\varphi_i)}{p_i}\,D\] into the FOC gives
\[
0 \;=\; D\left(\dfrac{p_i}{P}; \cdot\right) + \left(p_i- C'(D) \right)  \left(-\dfrac{\sigma(\varphi_i)}{p_i}\,D\right)
\quad\Rightarrow\quad
p_i^* \;=\; \frac{\sigma(\varphi_i)}{\sigma(\varphi_i)-1}\,C'(D) \;\equiv\; \mu(\varphi_i)\,C'(D),
\]
which proves Proposition~\ref{prop:markup}.



\subsection{Markup–Curvature Mapping and Generic Pass-through} \label{app:generic_pt}

Let 
\[\mu(\varphi)=\dfrac{\sigma(\varphi)}{\sigma(\varphi)-1}.\]
Differentiating,
\[
\frac{\partial \ln \mu(\varphi)}{\partial \ln \varphi}
\;=\; -\,\frac{1}{\sigma(\varphi)-1}\,\frac{d\ln \sigma(\varphi)}{d\ln \varphi}.
\]

\paragraph{Cost pass-through.}
From \(\ln p_i^*=\ln \mu(\varphi_i)+\ln mc_i\),

\begin{eqnarray*}
    \frac{\partial\ln p_{i}^{*}}{\partial\ln mc_{i}}	&=&\frac{\partial\ln\mu\left(\varphi_{i}\right)}{\partial\ln\varphi_{i}}\frac{\partial\ln\varphi_{i}}{\partial\ln mc_{i}^{*}}+1 \\
\frac{\partial\ln p_{i}^{*}}{\partial\ln mc_{i}}	&=&\left[\frac{-1}{\sigma\left(\varphi_{i}\right)-1}\frac{d\ln\sigma\left(\varphi_{i}\right)}{d\ln\varphi_{i}}\right]\frac{\partial\ln\varphi_{i}}{\partial\ln p_{i}^{*}}\frac{\partial\ln p_{i}^{*}}{\partial\ln mc_{i}}+1 \\
\frac{\partial\ln p_{i}^{*}}{\partial\ln mc_{i}}	&=&\dfrac{1}{1+\dfrac{1}{\sigma\left(\varphi_{i}\right)-1}\dfrac{d\ln\sigma\left(\varphi_{i}\right)}{d\ln\varphi_{i}}\dfrac{\partial\ln\varphi_{i}}{\partial\ln p_{i}^{*}}}
\end{eqnarray*}

Under CES, \(d\ln\sigma/d\ln\varphi=0\), so \(\partial \ln p_i^*/\partial \ln mc_i=1\).

\paragraph{Demand pass-through.}
For an idiosyncratic shifter \(\nu_i\) that can move \(\varphi_i\),
\begin{eqnarray*}
    \frac{\partial\ln p_{i}^{*}}{\partial\ln\nu_{i}}	&=&\frac{\partial\ln\mu\left(\varphi_{i}\right)}{\partial\ln\varphi_{i}}\frac{\partial\ln\varphi_{i}}{\partial\ln\nu_{i}} \\
\frac{\partial\ln p_{i}^{*}}{\partial\ln\nu_{i}}	&=&\frac{\partial\ln\mu\left(\varphi_{i}\right)}{\partial\ln\varphi_{i}}\left[\left.\frac{\partial\ln\varphi_{i}}{\partial\ln\nu_{i}}\right|_{p_{i}^{*}}+\frac{\partial\ln\varphi_{i}}{\partial\ln p_{i}^{*}}\frac{\partial\ln p_{i}^{*}}{\partial\ln\nu_{i}}\right] \\
\frac{\partial\ln p_{i}^{*}}{\partial\ln\nu_{i}}	&=&\frac{\partial\ln\mu\left(\varphi_{i}\right)}{\partial\ln\varphi_{i}}\left.\frac{\partial\ln\varphi_{i}}{\partial\ln\nu_{i}}\right|_{p_{i}^{*}}+\frac{\partial\ln\mu\left(\varphi_{i}\right)}{\partial\ln\varphi_{i}}\frac{\partial\ln\varphi_{i}}{\partial\ln p_{i}^{*}}\frac{\partial\ln p_{i}^{*}}{\partial\ln\nu_{i}} \\
\frac{\partial\ln p_{i}^{*}}{\partial\ln\nu_{i}}	&=&\frac{\dfrac{\partial\ln\mu\left(\varphi_{i}\right)}{\partial\ln\varphi_{i}}\left.\dfrac{\partial\ln\varphi_{i}}{\partial\ln\nu_{i}}\right|_{p_{i}^{*}}}{1-\dfrac{\partial\ln\mu\left(\varphi_{i}\right)}{\partial\ln\varphi_{i}}\dfrac{\partial\ln\varphi_{i}}{\partial\ln p_{i}^{*}}} \\
\frac{\partial\ln p_{i}^{*}}{\partial\ln\nu_{i}}&=&\frac{\dfrac{-1}{\sigma\left(\varphi_{i}\right)-1}\dfrac{d\ln\sigma\left(\varphi_{i}\right)}{d\ln\varphi_{i}}\left.\dfrac{\partial\ln\varphi_{i}}{\partial\ln\nu_{i}}\right|_{p_{i}^{*}}}{1+\dfrac{1}{\sigma\left(\varphi_{i}\right)-1}\dfrac{d\ln\sigma\left(\varphi_{i}\right)}{d\ln\varphi_{i}}\dfrac{\partial\ln\varphi_{i}}{\partial\ln p_{i}^{*}}}
\end{eqnarray*}


If \(\partial \ln \varphi_i/\partial \ln \nu_i=0\) (level-only placement), demand pass-through is zero. 
CES demand (\(d\ln\sigma/d\ln\varphi=0\)) also implies zero demand pass-through.

\subsection{Price–Productivity Correlation (Proof of Corollary~\ref{cor:corr_ps})} \label{app:corr_ps_appendix}

Write \(\widehat{p}_i^{\,*}=\alpha\,\widehat{z}_i+\beta\,\widehat{\nu}_i+\text{const}\) and \(\widehat{\mathrm{TFP}}_i=-\widehat{z}_i\). Let \(\sigma_z^2=\mathrm{Var}(\widehat{z}_i)\), \(\sigma_\nu^2=\mathrm{Var}(\widehat{\nu}_i)\), and \(\rho=\mathrm{Corr}(\widehat{z}_i,\widehat{\nu}_i)\). 
Then
\[
\mathrm{Cov}\big(\widehat{p}_i^{\,*},\widehat{\mathrm{TFP}}_i\big)
=-\alpha\,\sigma_z^2-\beta\,\rho\,\sigma_z\sigma_\nu,
\quad
\mathrm{Var}\big(\widehat{p}_i^{\,*}\big)
=\alpha^2\sigma_z^2+\beta^2\sigma_\nu^2+2\alpha\beta\rho\,\sigma_z\sigma_\nu,
\quad
\mathrm{Var}\big(\widehat{\mathrm{TFP}}_i\big)=\sigma_z^2.
\]
Hence
\begin{equation} \label{eq:corr_p_TFP_expression}
\mathrm{Corr}\!\left(\widehat{p}_i^{\,*},\,\widehat{\mathrm{TFP}}_i\right)=-\,\frac{\alpha\,\sigma_z+\beta\,\rho\,\sigma_\nu}
{\sqrt{\alpha^2\sigma_z^2+\beta^2\sigma_\nu^2+2\alpha\beta\rho\,\sigma_z\sigma_\nu}}.
\end{equation}

When $\beta=0$ -- that is if desired price does not respond to changes in idiosyncratic demand, $\mathrm{Corr}\!\left(\widehat{p}_i^{\,*},\,\widehat{\mathrm{TFP}}_i\right)= -1$.

When $\beta>0$ and $\vert\rho\vert<1$, we can show that $\mathrm{Corr}\!\left(\widehat{p}_i^{\,*},\,\widehat{\mathrm{TFP}}_i\right) \in (-1,1)$.
To see this, observe that the square of the numerator in Equation~\ref{eq:corr_p_TFP_expression} is strictly smaller than the square of the denominator
\[
\left(\alpha^2\sigma_z^2+\beta^2\sigma_\nu^2+2\alpha\beta\rho\,\sigma_z\sigma_\nu\right)-\left(\alpha \sigma_z + \beta \rho \sigma_{\nu}\right)^2 =\beta^2 \sigma_\nu^2\left(1-\rho^2\right) > 0.
\]
This also shows that under $\beta>0$, the correlation equals $-1$ exactly only when $\vert \rho\vert=1$, i.e. idiosyncratic demand and productivity are perfectly correlated.




\subsection{Kimball Demand: Aggregator, Elasticities, and Closed Forms} \label{app:kimball}

\subsubsection{Aggregator and inverse residual demand} \label{app:kimball_agg}
Consider the Kimball aggregator
\[
1=\int_0^1 G(x_i)\,di,\qquad x_i \equiv \tilde{\varphi}_i \equiv \nu_i\,\frac{y_i}{Y},
\]
with \(G'(x)>0\) and \(G''(x)<0\). Expenditure minimization implies inverse residual demand
\[
\frac{p_i}{P} \;=\; \Lambda\,G'(x_i),\qquad \Lambda>0.
\]
We adopt the parametric form \eqref{eq:kimball_spec}
\[
G(x)=\frac{\omega}{1+\omega\psi}\Big[(1+\psi)x-\psi\Big]^{\frac{1+\omega\psi}{\omega(1+\psi)}}+1-\frac{\omega}{1+\omega\psi},
\qquad \omega>1,\quad \psi\in\mathbb{R},
\]
which has the residual demand
\[
\frac{\nu_{i}y_{i}}{Y}=\frac{1}{1+\psi}\left[\left(\frac{p_{i}}{\Lambda \nu_{i}P}\right)^{\frac{\omega \left(1+\psi\right)}{1-\omega}}+\psi\right].
\]

\subsubsection{Elasticity and super-elasticity} \label{app:kimball_elast}
At the effective share \(x_i=\tilde{\varphi}_i\), the price elasticity $\epsilon_i$ and super-elasticity $\gamma_i$ are
\[
\epsilon_i \;\equiv\; \frac{\partial y_i}{\partial p_i}\frac{p_i}{y_i}  \;=\; \frac{\omega(1+\psi)}{1-\omega}\,
\frac{\left(\dfrac{p_i}{\Lambda\,\nu_i\,P}\right)^{\frac{\omega(1+\psi)}{1-\omega}}}{\left(\dfrac{p_i}{\Lambda\,\nu_i\,P}\right)^{\frac{\omega(1+\psi)}{1-\omega}}+\psi} \;=\; \frac{\omega}{1-\omega}\frac{(1+\psi)\tilde{\varphi_i}-\psi}{\tilde{\varphi_i}},
\]
\[
\gamma_i \;\equiv\; \frac{\partial \epsilon_i}{\partial p_i}\frac{p_i}{\epsilon_i}
 \;=\; \frac{\omega}{1-\omega}\,\frac{\psi}{\tilde{\varphi}_i} \;=\; \frac{\omega}{1-\omega}\frac{\psi(1+\psi)}{\left(\dfrac{p_i}{\Lambda \nu_i P}\right)^{\frac{\omega}{1-\omega}}+\psi}.
\]
When \(\psi=0\), the model collapses to CES with constant elasticity \(\omega/(\omega-1)\).

\subsubsection{Log-linear pricing and closed-form pass-through} \label{app:kimball_pt}
Let \(\widehat{z}_i\) and \(\widehat{\nu}_i\) denote log deviations of productivity and the idiosyncratic demand shifter, respectively. Linearizing the optimal pricing condition around a symmetric steady state with \(\nu_i=1\) yields
\[
\widehat{p}_i^{\,*}
\;=\;
\frac{\omega\psi}{\omega\psi-1}\big(\widehat{\Lambda}+\widehat{P}+\widehat{\nu}_i\big)
\;+\;
\frac{1}{\omega\psi-1}\,\widehat{z}_i.
\]
Hence the partial pass-through elasticities are
\[
\frac{\partial \widehat{p}_i^{\,*}}{\partial \widehat{z}_i}
=\frac{1}{\omega\psi-1},
\qquad
\frac{\partial \widehat{p}_i^{\,*}}{\partial \widehat{\nu}_i}
=\frac{\omega\psi}{\omega\psi-1}.
\]
Equivalently, with \(\widehat{mc}_i\equiv -\,\widehat{z}_i\),
\[
\frac{\partial \widehat{p}_i^{\,*}}{\partial \widehat{mc}_i}
= -\,\frac{1}{\omega\psi-1}.
\]
In the CES limit \(\psi=0\), these specialize to \(\partial \widehat{p}_i^{\,*}/\partial \widehat{z}_i=-1\), \(\partial \widehat{p}_i^{\,*}/\partial \widehat{\nu}_i=0\), and \(\partial \widehat{p}_i^{\,*}/\partial \widehat{mc}_i=1\).






\subsection{Parallels Outside HDIA: HIIA } \label{app:hiia}

In the HIIA class (homothetic indirect implicit additivity), direct residual demand takes the form \(\varphi=g(x)\) with \(x\equiv p_i/P\) and the price index \(P\) fixed by duality. The residual own-price elasticity is
\[
\sigma(x) \;=\; -\,\frac{d\ln g(x)}{d\ln x} \;=\; -\,\frac{x\,g'(x)}{g(z)}.
\]
The pass-through logic is parallel: under CES (\(\sigma\) constant) cost pass-through is one and demand pass-through is zero; when \(\sigma\) varies with \(z\), cost pass-through falls below one (with downward-sloping demand) and demand pass-through is nonzero if the shifter moves the argument of \(\sigma\).

\subsection{Cost-side Micro Real Rigidities} \label{app:cost_side_theory}

Let marginal cost depend on output, \(mc_i = \tilde{z}_i\,m(y_i)\), where \(\tilde{z}_i\) is an efficiency index with \(m'(\cdot)>0\). Define cost-side curvature as the elasticity of marginal cost to quantity \(\kappa(y_i)\equiv \partial \ln mc_i/\partial \ln y_i = \partial \ln m(y_i)/\partial \ln y_i\). 

From Proposition~\ref{prop:markup}, the firm’s optimal pricing rule is
\[
p_i^*=\mu(\varphi_i)\,mc_i=\mu(\varphi_i)\,\tilde{z}_i\,m(y_i),
\]
where 
\[\mu(\varphi_i)=\dfrac{\sigma(\varphi_i)}{\sigma(\varphi_i)-1}.\]
The elasticity of optimal markup to the operating point is then,
\[
\frac{\partial \ln \mu(\varphi)}{\partial \ln \varphi}
\;=\; -\,\frac{1}{\sigma(\varphi)-1}\,\frac{d\ln \sigma(\varphi)}{d\ln \varphi}.
\]

\paragraph{Cost pass-through under increasing marginal cost.}

From \(\ln p_i^*=\ln \mu(\varphi_i)+\ln mc_i\), we get the analog of Equation~\eqref{eq:section2_cost_pt}

\begin{eqnarray*}
    \frac{\partial\ln p_{i}^{*}}{\partial\ln\tilde{z}_{i}}&=&1+\frac{\partial\ln\mu\left(\varphi_{i}\right)}{\partial\ln\varphi_{i}}\frac{\partial\ln\varphi_{i}}{\partial\ln p_{i}^{*}}\frac{\partial\ln p_{i}^{*}}{\partial\ln\tilde{z}_{i}}+\frac{\partial\ln m\left(y_{i}\right)}{\partial\ln y_{i}}\frac{\partial\ln y_{i}}{\partial\ln p_{i}^{*}}\frac{\partial\ln p_{i}^{*}}{\partial\ln\tilde{z}_{i}}\\\frac{\partial\ln p_{i}^{*}}{\partial\ln\tilde{z}_{i}}&=&1+\left[\frac{-1}{\sigma\left(\varphi_{i}\right)-1}\frac{d\ln\sigma\left(\varphi_{i}\right)}{d\ln\varphi_{i}}\right]\frac{\partial\ln\varphi_{i}}{\partial\ln p_{i}^{*}}\frac{\partial\ln p_{i}^{*}}{\partial\ln\tilde{z}_{i}}+\frac{\partial\ln m\left(y_{i}\right)}{\partial\ln y_{i}}\frac{\partial\ln y_{i}}{\partial\ln p_{i}^{*}}\frac{\partial\ln p_{i}^{*}}{\partial\ln\tilde{z}_{i}}\\\dfrac{\partial\ln p_{i}^{*}}{\partial\ln\tilde{z}_{i}}&=&\dfrac{1}{1+\dfrac{1}{\sigma\left(\varphi_{i}\right)-1}\dfrac{d\ln\sigma\left(\varphi_{i}\right)}{d\ln\varphi_{i}}\dfrac{\partial\ln\varphi_{i}}{\partial\ln p^*_{i}}-\kappa\left(y_{i}\right)\dfrac{\partial\ln y_{i}}{\partial\ln p_{i}^{*}}}
\end{eqnarray*}

Even under CES demand (\(d\ln\sigma(\varphi_i)/d\ln\varphi_i=0\)), convex costs (\(\kappa\left(y_i\right)\)) generate incomplete pass through of productivity to desired price given downward-sloping demand (\(\partial \ln y_i/\partial \ln p^*_i<0\)).

\paragraph{Demand pass-through under increasing marginal cost.}

\begin{eqnarray*}
\dfrac{\partial\ln p_{i}^{*}}{\partial\ln\nu_{i}}	&=&	\dfrac{\partial\ln\mu\left(\varphi_{i}\right)}{\partial\ln\varphi_{i}}\dfrac{\partial\ln\varphi_{i}}{\partial\ln\nu_{i}}+\dfrac{\partial\ln m\left(y_{i}\right)}{\partial\ln y_{i}}\dfrac{\partial\ln y_{i}}{\partial\ln\nu_{i}} \\
\dfrac{\partial\ln p_{i}^{*}}{\partial\ln\nu_{i}}	&=&	\dfrac{\partial\ln\mu\left(\varphi_{i}\right)}{\partial\ln\varphi_{i}}\dfrac{\partial\ln\varphi_{i}} {\partial\ln\nu_{i}}+\kappa\left(y_{i}\right)\dfrac{\partial\ln\varphi_{i}}{\partial\ln\nu_{i}} \\
\dfrac{\partial\ln p_{i}^{*}}{\partial\ln\nu_{i}}	&=&	\left[\dfrac{\partial\ln\mu\left(\varphi_{i}\right)}{\partial\ln\varphi_{i}}+\kappa\left(y_{i}\right)\right]\left[\left.\dfrac{\partial\ln\varphi_{i}}{\partial\ln\nu_{i}}\right|_{p_{i}^{*}}+\dfrac{\partial\ln\varphi_{i}}{\partial\ln p_{i}^{*}}\dfrac{\partial\ln p_{i}^{*}}{\partial\ln\nu_{i}}\right] \\
\dfrac{\partial\ln p_{i}^{*}}{\partial\ln\nu_{i}}	&=&	\left[\dfrac{\partial\ln\mu\left(\varphi_{i}\right)}{\partial\ln\varphi_{i}}+\kappa\left(y_{i}\right)\right]\left.\dfrac{\partial\ln\varphi_{i}}{\partial\ln\nu_{i}}\right|_{p_{i}^{*}}+\left[\dfrac{\partial\ln\mu\left(\varphi_{i}\right)}{\partial\ln\varphi_{i}}+\kappa\left(y_{i}\right)\right]\dfrac{\partial\ln\varphi_{i}}{\partial\ln p_{i}^{*}}\dfrac{\partial\ln p_{i}^{*}}{\partial\ln\nu_{i}} \\
\dfrac{\partial\ln p_{i}^{*}}{\partial\ln\nu_{i}}	&=&	\dfrac{\left[\dfrac{\partial\ln\mu\left(\varphi_{i}\right)}{\partial\ln\varphi_{i}}+\kappa\left(y_{i}\right)\right]\left.\dfrac{\partial\ln\varphi_{i}}{\partial\ln\nu_{i}}\right|_{p_{i}^{*}}}{1+\left[\dfrac{\partial\ln\mu\left(\varphi_{i}\right)}{\partial\ln\varphi_{i}}+\kappa\left(y_{i}\right)\right]\dfrac{\partial\ln\varphi_{i}}{\partial\ln p_{i}^{*}}} \\
\dfrac{\partial\ln p_{i}^{*}}{\partial\ln\nu_{i}}&=&\dfrac{\left[\kappa\left(y_{i}\right)-\dfrac{1}{\sigma\left(\varphi_{i}\right)-1}\dfrac{d\ln\sigma\left(\varphi_{i}\right)}{d\ln\varphi_{i}}\right]\left.\dfrac{\partial\ln\varphi_{i}}{\partial\ln\nu_{i}}\right|_{p_{i}^{*}}}{1-\left[\kappa\left(y_{i}\right)-\dfrac{1}{\sigma\left(\varphi_{i}\right)-1}\dfrac{d\ln\sigma\left(\varphi_{i}\right)}{d\ln\varphi_{i}}\right]\dfrac{\partial\ln\varphi_{i}}{\partial\ln p_{i}^{*}}}
\end{eqnarray*}

Given that idiosyncratic demand moves the operating point directly (\(\partial \ln\varphi_i/\partial \nu_i\)), demand pass-through to prices is positive if marginal cost is increasing (\(\kappa(y_i)>0\)) even if demand is CES (\(d\ln \sigma(\varphi_i) / d\ln \varphi_i=0\)).

\section{Quantitative Model} \label{app:model}

\subsection{Model Details}

\subsubsection{Household} \label{app:app_household}

A representative household supplies labor,  $h_t$, to firms in exchange for wage payments, purchases a complete set of Arrow-Debreu securities, $\mathbf{B_{t+1}}$, and consumes a final good, $C_t$. It also owns all firms in the economy and receives all accrued profits. The representative household solves the following problem
\begin{equation}
    \max \limits_{C_t,h_t,\mathbf{B_{t+1}}} \quad \mathbb{E}_{0} \sum_{t=0}^{\infty} \beta^t \left[ \log\left(C_t\right) - \chi h_t \right]
\end{equation}
subject to the budget constraint
\begin{equation} \label{eq:HH_budget}
    P_t C_t + \mathbf{Q_{t}} \cdot \mathbf{B_{t+1}} \leq B_t + W_t h_t + D_t,  
\end{equation}
where  $\mathbf{Q_{t}}$ is a vector that contains the prices of the state-contingent securities, $\mathbf{B_{t+1}}$. $B_t$ represents the payoff of the state-contingent security purchased in period $t-1$ that had a non-zero payoff in period $t$. $P_t$ and $W_t$ are the price of the final good and nominal wage, respectively, both of which are taken as given by the households. $D_t$ denotes the net dividends the household receives from the producers. 

Using recusrive notation, household optimality requires
\begin{equation}
\dfrac{W}{P} = \chi C, \label{eq:HH_labor} \\    
\end{equation}
and we can also define the household's stochastic discount factor as  
\begin{equation} \label{eq:HH_SDC}
    \Xi \equiv \beta \mathbb{E}\left (\dfrac{C}{C'} \right ).
\end{equation}

\subsubsection{Final-Good Producer} \label{app:app_final_producer}

Given the Kimball aggregator in \eqref{eq:kimball_spec}, solving the final producer's cost-minimization problem yields the following demand function for each variety $i$
\begin{equation} \label{eq:kimball_demand}
\frac{\nu_{i}y_{i}}{Y}=\frac{1}{1+\psi}\left[\left(\frac{p_{i}}{\Lambda \nu_{i}P}\right)^{\frac{\omega \left(1+\psi\right)}{1-\omega}}+\psi\right],
\end{equation}
provided that $y_i > 0 $, where $\Lambda$ is the Lagrangian multiplier on the aggregator $G$ in the cost-minimization problem and $P$ is the aggregate price index. Throughout we use $\Sigma$ to  denote the mass of non-producing varieties ($y_i = 0$) and define
\[
B=\big[1-\Sigma\,(-\psi)^{\frac{1+\omega\psi}{\omega(1+\psi)}}\big]^{-1}
\]

The Lagrange multiplier $\Lambda$ is obtained by substituting the final-good producer's first-order condition into the Kimball aggregator
\begin{equation} \label{eq:lambda}
\Lambda=\left[B\int_{y_i>0}\!\left(\frac{p_i}{\nu_i P}\right)^{\frac{1+\omega\psi}{1-\omega}}di\right]^{\frac{1-\omega}{1+\omega\psi}}
\end{equation}

The ideal price index for the final-good is derived from the zero-profit condition
\begin{equation} \label{eq:price_index}
P=\frac{B^{-\frac{\omega(1+\psi)}{1+\omega\psi}}}{1+\psi}
 \left[\int_{y_i>0}\!\left(\frac{p_i}{\nu_i}\right)^{\frac{1+\omega\psi}{1-\omega}}di\right]^{\frac{1-\omega}{1+\omega\psi}}
 +\frac{\psi}{1+\psi}\int_{y_i>0}\!\frac{p_i}{\nu_i}\,di,
\end{equation}

\subsubsection{Intermediate-Good Producers} \label{app:app_int_producer}

Given the demand schedule for individual varieties, the intermediate producers' gross profit when they charge price $p^i$ is 
\begin{equation}
    \pi(p^i,z^i,\nu^i,\tilde{\mathcal{S}}) =\left(\frac{p^i}{P}-\frac{W}{z^{i} P}\right) \frac{Y}{\nu^i}\frac{1}{1+\psi}\left[\left(\frac{p^i}{\Lambda \nu^{i}P}\right)^{\frac{\omega \left(1+\psi\right)}{1-\omega}}+\psi\right],
\end{equation}
where $\tilde{\mathcal{S}} \equiv (P,W,Y,\Lambda)$ collects all the relevant aggregate state variables.

Because money supply $S=P Y$ grows deterministically, nominal prices will increase over time. To ensure that the state variables remain stationary, we normalize all nominal variables by $S$. As such, we can rewrite the firm's profit function
\begin{equation} \label{eq:firm_profit_stationary}
    \pi\left(\dfrac{p^i}{S},z^i,\nu^i,\mathcal{S}\right) = \left(\dfrac{\dfrac{p^i}{S}}{\dfrac{P}{S}}-\dfrac{\dfrac{W}{S}}{z^{i}\left(\dfrac{P}{S}\right)}\right)\dfrac{Y}{\nu^i}\dfrac{1}{1+\psi}\left[\left(\dfrac{\dfrac{p^{i}}{S}}{\Lambda\nu^{i}\left(\dfrac{P}{S}\right)}\right)^{\frac{\omega\left(1+\psi\right)}{1-\omega}}+\psi\right] 
\end{equation}
where the set of aggregate state variables is now 
\[\mathcal{S} \equiv \left(\frac{P}{S},\frac{W}{S},Y,\Lambda \right ).\]

Unlike the case of CES demand, it is possible for firms to find it unprofitable to supply positive quantities. This is attributed to the fact that quantity demanded under Kimball reaches zero at a finite price, typically called the ``choke price''. To handle this, we introduce an additional state variable $D$ that takes on value of 1 when the firm is not producing (i.e. it is dormant) and 0 when the firm is active. 

Firms choose whether or not to change their prices by solving the problem
\begin{equation}\label{eq:V}
\begin{split}
V\Bigl(
  D_i,\,
  \tfrac{p_{-1}^{i}}{S},\,
  z^i,\,
  \nu^i,\,
  \mathcal{S}
\Bigr)
&=
\begin{cases}
\max\Bigl\{
  V_{N}\bigl(\tfrac{p_{-1}^{i}}{S}, z^i, \nu^i, \mathcal{S}\bigr),\\
\qquad\qquad
  V_{A}\bigl(z^i, \nu^i, \mathcal{S}\bigr),
  V_{D}\bigl(z^i, \nu^i, \mathcal{S}\bigr)
\Bigr\}
  & \text{if } D = 0,\\[0.4em]
\max\Bigl\{
  V_{A}\bigl(z^i, \nu^i, \mathcal{S}\bigr),
  V_{D}\bigl(z^i, \nu^i, \mathcal{S}\bigr)
\Bigr\}
  & \text{if } D = 1.
\end{cases}
\end{split}
\end{equation}
where $V_N(.)$ and $V_A(.)$ are the values for the firm not adjusting and adjusting their prices, respectively; and $V_D(.)$ is the value of dormancy. 

The value of not adjusting prices is
\begin{equation} \label{eq:VN}
    V_{N}\left(\frac{p_{-1}^{i}}{S},z^i,\nu^i,\mathcal{S}\right)=\pi\left(\frac{p_{-1}^{i}}{S},z^i,\nu^i,\mathcal{S}\right)+\mathbb{E}\left[\Xi V\left(0,\frac{p^{i}}{S'},\nu^{i'},z^{i'},\mathcal{S}'\right)\right],
\end{equation}
which is equal to the flow profit evaluated at last period's price plus a continuation value. However, if the quantity the firm needs to produce at price $p_{-1}$ is negative, we set $V_N(.) = -\infty$. 

If the firm chooses to adjust its price, it pays the fixed price adjustment cost and chooses $p_t^i$ to maximize the sum of current flow profit and the present discounted value of future profits given by 
\begin{equation}\label{eq:VA}
\begin{split}
V_{A}\bigl(z^i,\nu^i,\mathcal{S}\bigr)
=
-f\,\frac{W/S}{P/S} 
\quad
+ \underset{p^{i}/S}{\max}\Biggl\{
  \pi\left(\frac{p^{i}}{S}, z^i, \nu^i, \mathcal{S}\right) 
  + \mathbb{E}\left[
      \Xi\,
      V\left(0,
        \frac{p^{i}}{S'},
        z^{i'},
        \nu^{i'},
        \mathcal{S}'
      \right)
    \right]
\Biggr\}.
\end{split}
\end{equation}
where maximization problem respects the constraint $y^i \geq 0$ and prices that violate this constraint are not considered as options. If $y^i < 0$ for any price the firm considers for a particular combination of states, then we set $V_A(.) = -\infty$. 

If the firm does not produce, it earns zero profit in the current period and the value is simply the future continuation value
\begin{equation} \label{eq:VD}
    V_D\left(z^i,\nu^i,\mathcal{S}\right) =   \mathbb{E}\left[\Xi V\left(1,\frac{p^{i}}{S'},\nu^{i'},z^{i'},\mathcal{S}'\right)\right].   
\end{equation}
Given the way we set the problem up, a firm may choose to go dormant even if its other options do not involve negative quantities but they yield sufficiently negative current profits. Note that the value functions \eqref{eq:V}-\eqref{eq:VD} imply that firms do not need to pay the adjustment cost to stop producing, but it is required for the firm to become active again.

\subsection{Stationary Equilibrium Definition} \label{app:eqm_definition}

\begin{definition}
The stationary equilibrium consists of 
\begin{enumerate}
    \item intermediate-good producer value functions $V(\cdot)$, $V_A(\cdot)$, $V_N(\cdot)$, and $V_D(\cdot)$, pricing rule $p_{i}^*/S(\cdot)$, and labor demand $l_i^{*}(\cdot)$
    \item demand function for each variety $y^i(\cdot)$
    \item household decisions $C$, $\Xi$, $h$
    \item aggregate prices $W/P$, $P/S$, $\lambda$, aggregate output $Y$, and aggregate profits $\Pi$
    \item distribution of intermediate-good producers $\Phi(z^i,\nu^i,\frac{p^i}{S})$
\end{enumerate}
such that
\begin{enumerate}
    \item Given $\Pi$ and $W/P$, households optimally choose $C$, $\Xi$ and $h$ satisfying 
    \begin{eqnarray}
        PC &=& Wh + \Pi \\
        \frac{W}{P} &=& \chi C \\
        \Xi &=& \beta
    \end{eqnarray}

    \item Given $P/S,W/P,Y,\Lambda$, intermediate-good producers' solves \eqref{eq:V}, \eqref{eq:VA}, \eqref{eq:VN}, \eqref{eq:VD}, yielding the pricing decision rule $p^{*}_i/S(\cdot)$, supply function $y^*_i(\cdot)$, and labor demand function $l^{*}_i(\cdot)$. The sum of profits net of adjustment cost yields aggregate dividend $\Pi$.

    \item The final-good producer solves \eqref{eq:final_producer_problem}, yielding intermediate variety demands $y^{i*}(\cdot)$ as well as the Lagrangian multiplier $\Lambda$ from \eqref{eq:lambda}. The zero-profit condition holds which yields the ideal price index $P$ for the final good in \eqref{eq:price_index}.

    \item Market clearing
        \begin{eqnarray}
            h & = & \int_{y_i>0}  l_i^{*}(\cdot) \Phi(\cdot) di \\
            1&=&\int_0^1 \frac{\omega}{1+\omega\psi}\Big[(1+\psi)\frac{\nu_iy_i}{Y}-\psi\Big]^{\frac{1+\omega\psi}{\omega(1+\psi)}}+1-\frac{\omega}{1+\omega\psi}di \\
            C & = & Y  \\   
            \frac{S}{P} & = & Y
        \end{eqnarray}

    \item The distribution $\Phi(\frac{p_{i,-1}}{S},D_i,z_i,\nu_i)$ is time-invariant and consistent with the optimizing decisions of the household and firms.

\end{enumerate}
\end{definition}

\subsection{Computational Strategy} \label{app:computational_strategy}

\subsubsection{Steady State} \label{app:algo_SS}

The following describes the algorithm for solving for the stationary equilibrium of the model.

\begin{enumerate}
    \item Construct discretized grids for idiosyncratic productivity $z_i$, demand $\nu_i$, and price $p_i/S$. For $z_i$ and $\nu_i$, use the Rouwenhorst method which approximates highly persistent AR(1) processes well. For $p_i/S$, use an equi-spaced grid. To accommodate trend inflation, the grid spacing is chosen such that movements caused by trend inflation exactly coincide with points on the grid, thereby avoiding interpolation across periods. Specifically, for a trend inflation rate of $\mu$, the spacing between consecutive grid points is set to $\dfrac{1}{s}\times \mu$ with $s = \{1,2,3,...\}$ being the step factor.
    \item Initialize guesses for equilibrium aggregate prices $\left(\hat{\frac{P}{S}},\hat{\Lambda}\right)$.
    \item Given $(\hat{\frac{P}{S}},\hat{\Lambda})$, solve the intermediate-good producer's optimization problem in \eqref{eq:V}-\eqref{eq:VD} using value function iteration, yielding value functions $\{V_A,V_N,V_D\}$, optimal pricing decision rule $p^*_i/S\left(p_{-1}/S,z,\nu;\hat{P},\hat{\Lambda}\right)$, firm supply $y^*_i\left(p_{-1}/S,z,\nu;\hat{P},\hat{\Lambda}\right)$, and labor demand $l^*_i\left(p_{-1}/S,z,\nu;\hat{P},\hat{\Lambda}\right)$.
    \item Initialize a distribution $\Phi_0\left(p_{-1}/S,D,z,\nu\right)$ over the idiosyncratic states. Given the optimal decision rules $\{p^*_i/S(\cdot),y_i(\cdot)\}$ and the laws of motion for $(S,z_i,\nu_i)$, simulate the distribution forward starting from $\Phi_0$ using the histogram method until the mass of firms at each state $\left(p_{-1}/S,D,z,\nu\right)$ is stationary, yielding the stationary distribution $\Phi^*\left(p_{-1}/S,D,z,\nu ; \hat{\frac{P}{S}},\hat{\Lambda}\right)$.
    \item Compute the aggregate price indices $P/S$ and $\Lambda$ implied by $\Phi^*\left(p_{-1}/S,D,z,\nu; \hat{\frac{P}{S}},\hat{\Lambda}\right)$ using \eqref{eq:price_index} and \eqref{eq:lambda}. Compute the absolute differences between the guesses $\left(\hat{\frac{P}{S}},\hat{\Lambda}\right)$ and the implied values $\left({\frac{P}{S}},{\Lambda}\right)$. If the differences are larger than a pre-determined tolerance level, update the guesses using a convex combination of the original guesses and the implied values and repeat Steps 3 to 5 until the differences are sufficiently small.
    \item \textbf{Model simulation for computing calibration statistics:} initialize a panel of $N$ firms to $T$ periods. Sample from the stationary distribution $\Phi^*\left(p_{-1}/S,D,z,\nu; \frac{P^*}{S},\Lambda^* \right)$ for the first period of the simulated panel. For period 2 onwards, simulate the idiosyncratic productivity and demand processes $\{z^i_t,\nu^i_t\}_{t=2}^T$ using the respective discretized Markov transition matrices and simulate firm prices and outputs using the optimal decision rules $p_i^*/S(\cdot)$ and $y_i^*(\cdot)$. When computing pricing moments, exclude firms that are transitioning to and from not producing ($y^i_t=0$).
\end{enumerate}

\subsubsection{Transition -- Monetary Policy Shock} \label{app:algo_tran}

The following describes the algorithm for solving for the transition following a one-time unanticipated shock to nominal expenditure $S$.

\begin{enumerate}
    \item Specify the size of the shock to $S$ and choose the length of the transition $T$.
    \item Construct discretized grids for idiosyncratic productivity $z_i$, demand $\nu_i$, and price $p_i/S$. For $z_i$ and $\nu_i$, use the Rouwenhorst method which approximates highly persistent AR(1) processes well. For $p_i/S$, use an equi-spaced grid. To accommodate trend inflation, the grid spacing is chosen such that movements caused by trend inflation exactly coincide with points on the grid, thereby avoiding interpolation across periods. Specifically, for a trend inflation rate of $\mu$, the spacing between consecutive grid points is set to $\mu/s$ with $s = \{1,2,3,...\}$ being the step factor.
    \item Solve for the stationary equilibrium using the algorithm in \ref{app:algo_SS}.
    \item Initialize the guess for the transition paths of the aggregate price indices $\left\{ \hat{\frac{P}{S},}\hat{\Lambda} \right\}_{t=1}^T$.
    \item Given $\left\{ \hat{\frac{P}{S},}\hat{\Lambda} \right\}_{t=1}^T$, solve the firm's optimization problem \eqref{eq:V}-\eqref{eq:VD} starting from period $T$. In period $T$, use the steady state value function $V$ as the continuation value to obtain value functions $\left(V_{T-1},V^A_{T-1},V^N_{T-1},V^D_{T-1}\right)$ as well as firm decision rules $\left(\frac{p^{i*}_{T-1}}{S},y^{i*}_{T-1}\right)$. For periods $t=T-1$ through $t=1$, use $V_{t+1}$ as the continuation value and obtain value functions $\left(V_{t},V^A_{t},V^N_{t},V^D_{t}\right)$ and firm decision rules $\left(\frac{p^{i*}_{t}}{S},y^{i*}_t\right)$.
    \item Starting from the stationary distribution $\Phi^*\left(p_{-1}/S,y,z,\nu\right)$, use the decision rules $\left\{\frac{p^{i*}_{t}}{S},y^{i*}_t\right\}_{t=1}^T$ and the laws of motion for $\left(S,z_i,\nu_i\right)$ to simulate forward using the histogram method from period $t=1$ onward through $t=T$ to obtain $\left\{\Phi\left(p_{-1}/S,y,z,\nu\right)\right\}_{t=1}^T$. Apply the shock to nominal expenditure shock $S$ in the first period.
    \item From $\left\{\Phi\left(p_{-1}/S,y,z,\nu\right)\right\}_{t=1}^T$, compute the implied path of aggregate price indices $\left\{ {\frac{P}{S},}{\Lambda} \right\}_{t=1}^T$. Compute the absolute difference between the guess $\left\{ \hat{\frac{P}{S},}\hat{\Lambda} \right\}_{t=1}^T$ and the implied path $\left\{ {\frac{P}{S},}{\Lambda} \right\}_{t=1}^T$. If the differences are larger than a pre-determined tolerance level, update the guesses using a convex combination of the original guesses and the implied values and repeat Steps 4 to 6 until the differences are sufficiently small.
\end{enumerate}


\section{Calibration}  \label{app:calibration}

\subsection{Empirical Moments}

\subsubsection{Firm-Level Productivity and Demand Processes}  \label{app:FHS_moments}

Using the quinquennial Census of Manufactures between 1977 to 1997, \cite{foster2008reallocation} estimate firm-level productivity and demand for eleven product markets with minimal vertical differentiation.\footnote{Examples include bread, block ice, and ready-mix concrete.} With data on sales, quantity sold, and input usage, they estimate the production function of firms assuming Cobb-Douglas technology and recover firm-level physical TFP (TFPQ) as the residual in the regression
\begin{equation}
    {{TFPQ}_{it}} = \ln q_{it} - \alpha_l \ln l_{it} - \alpha_k \ln k_{it} - \alpha_m \ln m_{it} - \alpha_e \ln e_{it}, \label{eq:tfpq}
\end{equation}
where ${TFPQ}_{it}$ is the firm-level physical TFP of firm $i$ at time $t$, $q_{it}$ is the quantity produced by the firm, $l_{it}$ is the labor input, $k_{it}$ is the capital input, $m_{it}$ represents intermediate inputs used in production, and $e_{it}$ is the energy used by the firm. \cite{foster2008reallocation} also estimate revenue-based TFP, which is derived similarly but replaces quantity produced with the firm's revenue, 
\begin{equation}
    {{TFPR}_{it}} = \ln p_{it}q_{it} - \alpha_l \ln l_{it} - \alpha_k \ln k_{it} - \alpha_m \ln m_{it} - \alpha_e \ln e_{it}.
\end{equation}
To recover firm-level idiosyncratic demand, \cite{foster2008reallocation} estimate the demand function 
\begin{equation}
    \ln q_{it} = \alpha_0 + \alpha_1 \widehat{\ln p_{it}} + \sum_t \alpha_t \text{YEAR}_t + \alpha_2 \ln(\text{INCOME})_{mt} + {\nu^{i}_{t}}
\end{equation}
using an instrumental variable regression. Here, the log of price, $\ln p_{it}$, is instrumented by the TFPQ estimate from \eqref{eq:tfpq}, which acts as a supply shifter. The regression includes time fixed effects and the average income in a plant's local market, $m$, defined using the Bureau of Economic Analysis' Economic Areas. The residual from this equation is then interpreted as a pure idiosyncratic demand shifter for the firm.

For the eleven products analyzed, \cite{foster2008reallocation} report average five-year autocorrelations of 0.31 for idiosyncratic TFPQ and 0.62 for demand. The cross-sectional dispersion of TFPQ and demand are 0.26 and 1.16, respectively. This indicates that demand shocks are more persistent and more dispersed across firms. Additionally, they report a correlation of --0.54 between firm-level prices and TFPQ.

\subsubsection{Pricing Moments} \label{app:pricing_moments}

For moments related to micro-level pricing behavior, we reference \cite{vavra2014inflation} who reports pricing moments using CPI micro-data from the Bureau of Labor Statistics spanning the period from 1988 through 2012.\footnote{The same dataset is widely used in the literature, see \cite{bils2004some} and \cite{nakamura2008five}.} Price data are at the product-outlet level and temporary sales are discarded from the analysis. In his sample, \cite{vavra2014inflation} reports a monthly frequency of a regular price change to be 11\%, of which 65\% are upward adjustments. The average size of a price change excluding non-adjustments is 7.7\%, and the standard deviation of price changes is 0.075.

\subsubsection{Markup and Pass-Through of Cost Shocks to Prices}  \label{app:markup_passthrough}

Following the methods of \cite{de2020rise}, we estimate the markup distribution of U.S. public firms using Standard and Poor's Compustat data. To be in line with the time period in \cite{foster2008reallocation}, we restrict the analysis to data between 1980 and 2000. We follow the production approach and compute firm-level markups as the ratio of sales to cost of goods sold, multiplied by the output elasticity of variable inputs estimated at the two-digit NAICS level.\footnote{Following the literature, we exclude the following two-digit industries: utilities, finance and insurance, real estate and rental and leasing, as well as public administration.} In our sample, the average markup is 56\% and the median markup is 33\%.

A major theoretical implication of a Kimball demand system is the incompleteness of cost pass-through to prices. One of the ways of capturing empirically the magnitude of cost pass-through can be found in the international finance literature. This literature looks at the pass-through of exchange rate shocks to importer prices, with the understanding that the exchange rate movements are exogenous from the viewpoint of importers. The empirical evidence is overwhelmingly in support of an incomplete pass-through of costs even in the medium and long-run: \cite{campa2005exchange} estimate the long-run pass-through in the US to be 42\% whereas \cite{bergin2009pass} report 24\%, \cite{gopinath2010frequency} find it to be between 20\% to 40\%, and \cite{gopinath2010currency} find an aggregate pass-through of 30\%. Estimation of cost pass-through is more challenging in a purely domestic setting, due to the scarcity of appropriate data and well-identified shocks. Using indirect estimates of marginal costs, \cite{de2016prices} report cost pass-through between 31\% to 41\% among manufacturing firms in India. \cite{carlsson2022dispersion} estimate that between 21\% to 33\% of innovations to firm productivity are passed through to prices using data on Swedish manufacturing firms. Using Belgian manufacturing data in structural model with Calvo pricing friction and strategic complementarity in pricing, \cite{gagliardone2025anatomy} estimate a 43\% cost pass-through conditional on adjustment and a 57\% pass-through of competitors' price changes. Using the same underlying data, \cite{amiti2019international} find a 60\% cost pass-through -- in the higher end among estimates in the literature -- and a 40\% pass-through of competitors' price changes in a static setup. Recent studies using merged data on both costs and prices recover cost pass-through estimates that are similar to the international macro evidence. Using Chilean supermarket-supplier merged data, \cite{aruoba2025pricing} find that 29\% of a supplier price change is passed onto the retail price conditional on a price change at the supermarket level. Overall, the evidence from both the open- and closed-economy literature points to incomplete cost pass-through to prices in the range of 20\% to 50\%.

\subsection{Calibration Details} \label{app:calibration_procedure}

The model-based moments we need for calibration are computed via simulation. In particular, we solve and simulate the model for a large number of Sobol quasi-random parameter draws. For each set of parameters, we simulate 20,000 firms for 700 periods and drop the first 300 periods before computing any statistics. We then find the parameters that deliver the closest fit, as measured by Euclidean distance, between the empirical and model-implied moments.

Computing moments that are monthly is straightforward. In order to compute moments that have their data counterpart in \cite{foster2008reallocation}, we aggregate the simulated data to the corresponding frequency and replicate their methodology. In particular, we aggregate the simulated monthly data into annual frequency by taking simple sums of revenue, sales, and employment. We then construct a panel dataset with the same time structure as \cite{foster2008reallocation}, namely five waves of annual observations that are five years apart. Because labor is the only input and production technology is constant returns to scale in the model, we recover firm-level TFPQ as,
\begin{eqnarray}
      {{TFPQ}_{it}} &=& \ln q_{it} - \ln l_{it}.  
\end{eqnarray}
This is equivalent to mapping our unique inputs to their basket of inputs. We estimate the demand function using the same IV specification as \cite{foster2008reallocation}, 
\begin{equation}
\label{eq:IV_reg}
    \ln q_{it} = \alpha_0 + \alpha_1 \widehat{\ln p_{it}}+ \text{Time FE} + \eta_{it},
\end{equation}
where $\ln p_{it}$ is instrumented by ${TFPQ}_{it}$, and recover firm-level demand shifters as the residuals, $\eta_{it}$. 

After recovering model-simulated firm productivity $TFPQ$ and demand $\eta$, we obtain five-yearly measures that are direct counterparts of those computed by \cite{foster2008reallocation}. In particular, the five-yearly autocorrelation is obtained by regressing firm TFP and demand on their lagged values. To be clear, in computing the model-implied moments we treat the model-generated data exactly the same way they treat actual data.

\subsection{Identification of Model Parameters}  \label{app:identification}

In order to demonstrate that the calibration targets are indeed  informative for the respective parameters we borrow an exercise from \cite{daruich2022macroeconomic}, which mimics the first stage of the multistart global optimization proposed by \cite{Guvenen_tiktak}.

The main idea is to generate variation in the parameter space and investigate how the implied calibration targets are impacted -- essentially taking a partial derivative. To do so, we first draw 2,000 parameter vectors from uniform Sobol points given a hypercube of the parameter space, which generates a quasi-random set of candidate parameter vectors.\footnote{A uniform Sobol sequence \citep{sobol1967distribution} is a sequence of points that spans the $n$-dimensional hypercube in an even and quasi-random manner. For the purpose of the exercise, using quasi-random Sobol numbers are more efficient than drawing random numbers because Sobol numbers are designed to sample the space of possibilities evenly given the total number of draws, whereas a truly random sample is subject to sampling noise.} Then, for each parameter vector, we solve and simulate the model to compute the relevant model-implied moments. This allows us to see how each of the seven parameters influences each of the seven calibration targets. 

Figure \ref{fig:identification} plots the values of three key model-implied target moments against the values of the parameter it is assigned to. In particular, we group the values of each parameter in deciles, which we plot on the horizontal axis. Then, for each decile, we show the median value of the associated moment in red circled dots and the 25$^{th}$ and 75$^{th}$ percentiles in blue down-pointing triangles and green up-pointing triangles, respectively. The slope of the scatter plot is informative about the importance of that parameter, whereas the vertical dispersion reveals the influences of all other parameters on a particular moment. The horizontal line shows the value for the data moment we use for calibration.

\begin{figure} 
\caption{Identification of Internally-Calibrated Parameters}

    \centering
    \subfloat[Price Adjustment Freq. vs. $f$]{\includegraphics[scale=0.26]{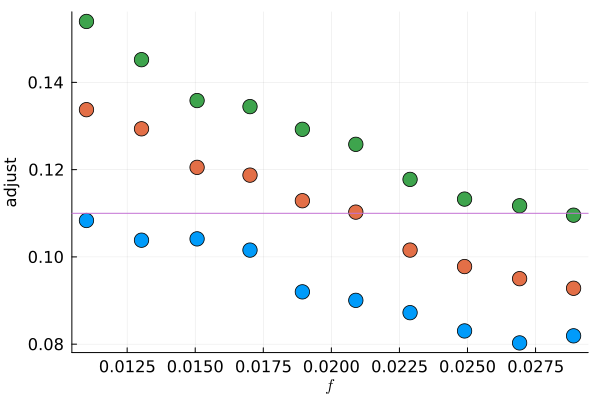}}    
    \subfloat[Corr$(TFPQ,P)$ vs. $\psi$]{\includegraphics[scale=0.26]{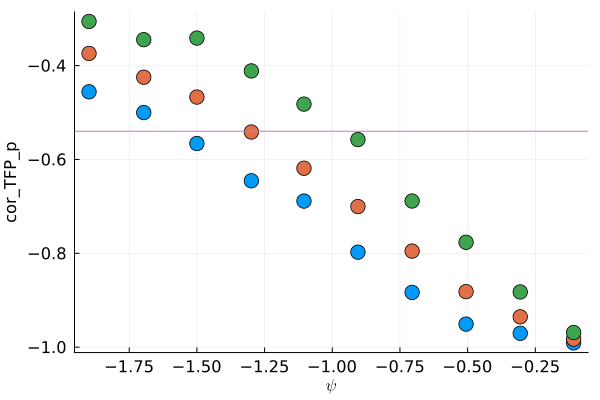}} 
    \subfloat[IV coefficient vs. $\omega$]{\includegraphics[scale=0.26]{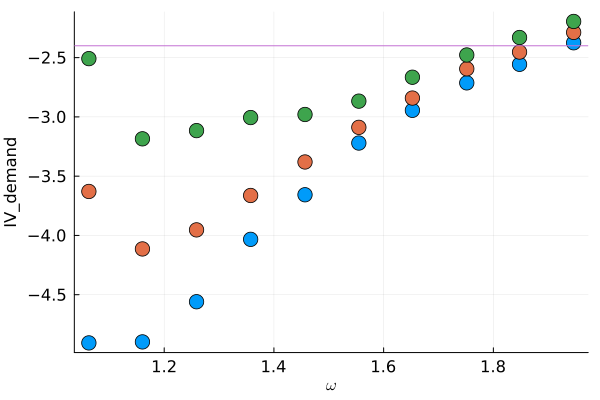}} 
\label{fig:identification}

\justify
\vspace*{-0.1in}
\footnotesize{Note: For each decile of a given parameter plotted on the horizontal axis, the red dot shows the median of the moment that is assigned to the parameter. The blue down-pointing triangles and green up-pointing triangles show the 25$^{th}$ and 75$^{th}$ percentiles respectively. The horizontal line shows the value for the data moment we use for calibration.}
\end{figure}

The frequency of price adjustment exhibits a strong negative correlation with the menu cost $f$. Meanwhile, other parameters also play a role as is evident in the vertical dispersion. For example, for a fixed value of $f$, larger idiosyncratic shocks generate more frequent price changes. Consistent with our reasoning, we recover a strong negative relationship between Corr(TFPQ,P) and $\psi$. When $\psi \approx 0$ (CES case), this correlation is $-1$ and as $\psi$ falls below 0 the correlation remains negative but weakens. 

To see the link between the IV coefficient and $\omega$, consider 
\begin{equation} \label{appeq:IV}
\log y_i
=
constant
+
\varpi \log \frac{p_i}{P}
-
(1+\varpi)\log \nu_i
+
\underbrace{\log\!\left[
1+\psi \Lambda^\varpi \nu_i^\varpi \left(\frac{P}{p_i}\right)^\varpi
\right]}_{\text{nonlinear correction}}
\end{equation}
which follows from the demand for variety $i$ given in \eqref{eq:kimball_demand} and 
\[
\varpi = \frac{\omega(1+\psi)}{1-\omega}.
\] 

For the sake of this demonstration, we ignore the nonlinear terms in \eqref{appeq:IV}. The problem with OLS estimation of an equation such as \eqref{appeq:IV} is well understood. Since $p_i/P$ and $\nu_i$ are correlated when $\psi < 0$, an OLS regression on $\log y_i$ on $\log p_i/P$ would yield a biased coefficient and will not recover the elasticity $\varpi$. The IV coefficient, however, will be unbiased and would equal to $\varpi$. This shows that given $\psi$ and ignoring nonlinearities, there is a clear link between $\omega$ and the IV coefficient.

\begin{figure}[t!]
\centering
\caption{Identification of Internally-Calibrated Parameters}
\subfloat[Five-yearly AR of TFPQ vs. $\rho_z$]{\includegraphics[scale=0.35]{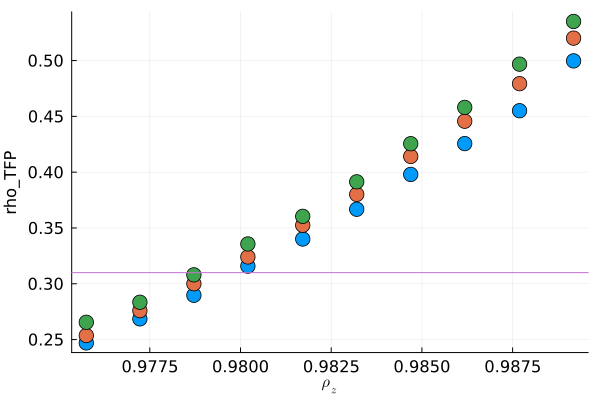}}
    \subfloat[Cross-sectional SD of TFPQ vs. $\sigma_z$]{\includegraphics[scale=0.35]{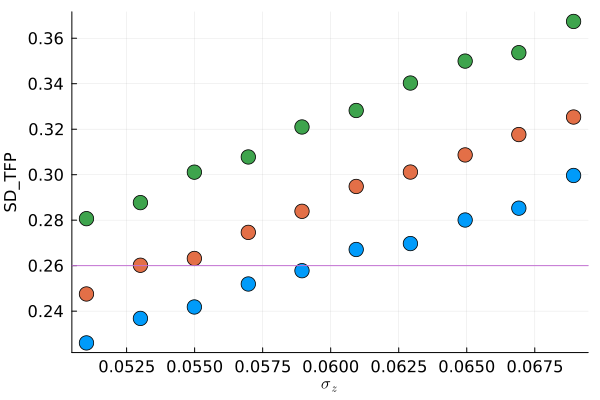}} \\
    \subfloat[Five-yearly AR of demand vs. $\rho_{\nu}$]{\includegraphics[scale=0.35]{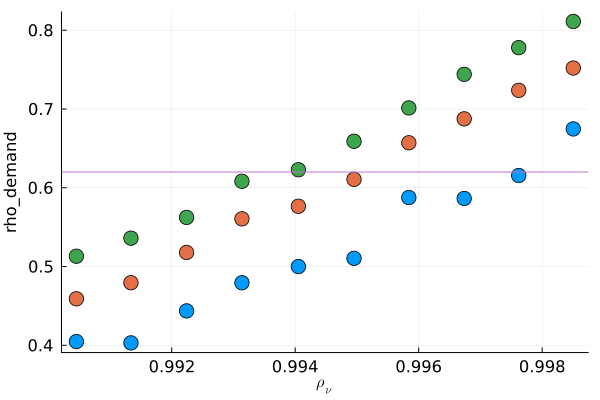}}
    \subfloat[Cross-sectional SD of demand vs. $\sigma_{\nu}$]{\includegraphics[scale=0.35]{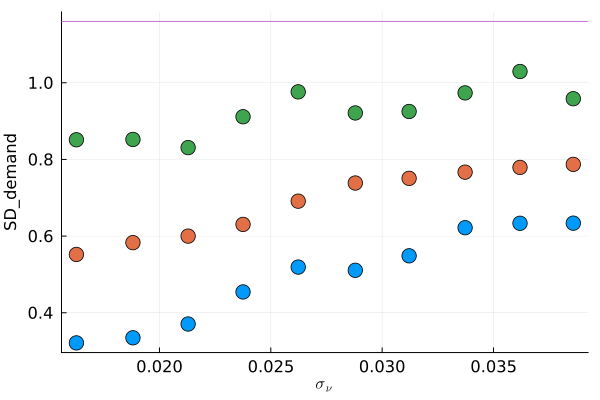}} \\
\label{fig:identification_app}
\justify
\vspace*{-0.1in}
\footnotesize{Note: For each decile of a given parameter plotted on the horizontal axis, the red dot shows the median of the moment that is assigned to the parameter. The blue down-pointing triangles and green up-pointing triangles show the 25$^{th}$ and 75$^{th}$ percentiles respectively. The horizontal line shows the value for the data moment we use for calibration.}
\end{figure}

Figure \ref{fig:identification_app} exhibits the link between the parameters governing the idiosyncratic productivity and demand processes and the corresponding empirical moments. All four pairs display a positive correlation, some stronger than others. $\rho_z$ is very strongly correlated with the five-year autocorrelation of firm productivity. $\sigma_z$, on the other hand, displays a strong correlation but other parameters, in particular $\rho_z$ has an influence as well, as can be seen in the vertical variation in the scatter plots.  Because the value of $\rho_{\nu}$ is generally very close to one, the resulting cross-sectional dispersion of demand is very sensitive to the value of $\rho_{\nu}$ in addition to $\sigma_{\nu}$. The strength of the link between $\rho_{\nu}$ and the  five-year autocorrelation of demand is weaker because demand is identified using an IV which is influenced by other parameters. Finally for the same reasons the link between $\sigma_{\nu}$ and the cross-sectional standard deviation of demand is noticably noisier, though still positive.  The key takeaway from this exercise is that the links between the parameters and moments are quite tight. 



\subsection{Details of Table~\ref{tab:compare_calib_elas}} \label{app:compare_calib_elas_source}

The section lists the source of demand elasticity and super-elasticity of the studies listed in Table~\ref{tab:compare_calib_elas}.

\begin{itemize}

    \item \cite{kimball1995quantitative} sets the demand elasticity to 11, corresponding to a markup of 1\%. The curvature of demand is chosen such that a one percent increase in market share cause a fall in the elasticity of demand to decrease from 11 to 8, which implies a super-elasticity of $31.8$.

    \item \cite{chari2000sticky} sets the demand elasticity to 10 and consider the same degree of curvature as \cite{kimball1995quantitative}. The implied super-elasticity is $35.7$.

    \item The estimated model of \cite{smets2007shocks} has a posterior mode of $1.61$ for the parameter governing the gross markup -- implying an elasticity of about $2.7$ -- and assumes a super-elasticity (Kimball curvature) of $10$.
    
    \item \cite{klenow2016real} uses a Kimball demand with the functional form
    \begin{equation}
    \Upsilon(x) = 1 + (\bar{\theta} - 1)
    \exp\!\left(\frac{1}{\varepsilon}\right)
    \varepsilon^{(\bar{\theta}/\varepsilon)-1}
    \left(
      \Gamma\!\left(\frac{\bar{\theta}}{\varepsilon}, \frac{1}{\varepsilon}\right)
      -
      \Gamma\!\left(\frac{\bar{\theta}}{\varepsilon}, \frac{x^{\varepsilon/\bar{\theta}}}{\varepsilon}\right)
    \right),
    \end{equation}
    where $\bar{\theta}$ and $\epsilon$ are the demand elasticity and super-elasticity around a symmetric steady state. They assume $\bar{\theta}=5$, implying a markup of 25\%, and $\epsilon=10$ following \cite{smets2007shocks}.

    \item \cite{bergin2000staggered} use a translog demand system characterized by the unit expenditure function 
    \[\ln P = \sum_{i=1}^N \alpha_i \ln P_i + \frac{1}{2} \sum_{i=1}^{j}\sum_{j=1}^N \gamma_{ij} \ln P_i \ln P_j.\]
    The elasticity of demand is given by $1-\frac{\gamma_{ii}}{s_{i}}$ where $s_i$ denotes the market share of good $i$. The super-elasticity is given by $\frac{\gamma_{ii}^2}{s_i^2}\frac{1}{1-\gamma_{ii}/s_i}$. Under a symmetric equilibrium where $\gamma_{ii}=-\gamma/N$ and $s_i=1/N$, demand elasticity $1+\gamma$ and super-elasticity is $\frac{\gamma^2}{1+\gamma}$. The authors chose $\gamma=2$ to match a steady markup of 1.5, which in turns imply a demand elasticity of 3 and super-elasticity of $1.33$ around a symmetric equilibrium.

    \item \cite{eichenbaum2007estimating} use $\omega=11$ to generate a markup of 10\% and assume a super-elasticity of 10.

    \item \cite{gopinath2010frequency} use the same Kimball functional form as \cite{klenow2008state}. The quantitative exercises use a demand elasticity of 5 from the trade literature and a super-elasticity of 4 to match the middle of their long-run exchange rate pass-through estimates.

    \item \cite{chari2000sticky} choose a demand elasticity of 10 to target a markup of 11\% and assume 

    \item \cite{beck2020price} use European homescan data to estimate a nested multi-nomial logit model of consumer choice. The estimated model can then be used to derive the implied demand elasticity and super-elasticity without imposing a specific functional form. Across product categories, they report a median price elasticity of demand of 3.2 and super-elasticity of 1.9.
    
    \item In the stylized model presented in Section 3 of \cite{harding2022resolving}, $\omega$ is calibrated to 1.1 to deliver a low markup of 10\% and $\psi=-12.2$ is chosen to match a Phillips curve slope of 0.0012. The full model in Section 5, adapt the \cite{smets2007shocks} model but re‑estimate the pricing block on 1965Q1–2007Q4 US data using Bayesian estimation, changing only the priors and treatment of price‑setting parameters. Specifically, they calibrate the Calvo parameter to $0.667 $, and estimate the markup parameter (corresponding to $\omega$ under our specification) and Kimball curvature parameter (which equals the super-elasticity around a symmetric equilibrium) under the respective priors of $ \mathcal{N}(1.2,0.05) $ and $ \mathcal{N}(75,25) $ respectively. The posterior mode of the markup parameter is 1.34 and that of the curvature parameter is 64.5. These imply an elasticity of 3.9 and super-elasticity of 64.5. The exact same estimation procedure is used in \cite{harding2023understanding}.
    
    \item \cite{aruoba2025pricing} calibrate a menu-cost model with Kimball demand to Chilean supermarket data. In their calibration, $\omega=1.32$ and $\psi=-1.68$ are calibrated to match the average markup and cost pass-through. These parameter values imply an elasticity of 4.1 and a super-elasticity of 6.9 around a symmetric steady state.

\end{itemize}

\subsection{Alternative Calibration Strategy} \label{app:calib_size}

We explore an alternative calibration where we target the size rather than the frequency of price changes as we did in the baseline calibration.

\begin{table}[t!]    
    \centering
    \caption{Baseline Calibration and Calibration Targeting Size} 
    \scalebox{0.85}{
    \begin{tabular}{ccccc} \toprule
    \textbf{Moment}                         & \textbf{Data}  & \textbf{Baseline} & \textbf{Target Size}  \\ \midrule
    Frequency of price changes              & 0.11           & \textbf{0.11}     & {0.12}                     \\
    Fraction of price increases             & 0.65           & 0.55              & 0.57                            \\
    Size of price changes                   & 0.08           & {0.06}            & \textbf{0.08}                      \\  
    5-year autocorr of $z_t^i$              & 0.31           & \textbf{0.31}     & \textbf{0.32}               \\
    Cross-sectional SD of $z_t^i$           & 0.26           & \textbf{0.25}     & \textbf{0.26}               \\
    5-year autocorr of $v_t^i$              & 0.62           & \textbf{0.58}     & \textbf{0.67}               \\
    Cross-sectional SD of $v_t^i$           & 1.16           & \textbf{1.04}     & \textbf{1.05}               \\
    IV Coefficient                  & --2.40           & \textbf{--2.31}     & \textbf{--2.31}                       \\
    Corr b/w price and TFPQ                 & --0.54          & \textbf{--0.54}    & \textbf{--0.54}             \\ \midrule
    \textbf{Parameter} & \textbf{Description}    &  &    \\ \midrule
    $\psi$   & Super-elasticity                      & \textbf{--1.10}   & \textbf{--0.87}       \\
    $\omega$ & Elasticity of Substitution            & \textbf{1.18}    & \textbf{1.34}           \\
    $\rho_z$ & Persistence of $z_t^i$                & \textbf{0.98}    & \textbf{0.98}           \\
    $\sigma_z$ & Standard deviation of $z_t^i$       & \textbf{0.06}    & \textbf{0.06}           \\
    $\rho_{\nu}$ & Persistence of $\nu_t^i$                & \textbf{0.998}   & \textbf{0.998}          \\
    $\sigma_{\nu}$ & Standard deviation of $\nu_t^i$       & \textbf{0.03}    & \textbf{0.03}           \\
    $f$      & Menu cost                             & \textbf{0.016}    & \textbf{0.021}           \\ \midrule
    \end{tabular}
    }
    \justify
    \footnotesize{Note: The top panel of this table compares the targeted moments and model-implied moments for the two model specifications, where the bolded numbers highlight moments that are targeted in the calibration. The bottom panel shows the parameter values for each calibration.}
     \label{tab:calib_size}
\end{table}

The results are presented in Table \ref{tab:calib_size}. The first two columns replicate the results in Table \ref{tab:int_calib}, where we continue to use boldface to emphasize the moments being targeted and parameters used to do so. The third column reports the results from the calibration where we target the size of non-zero price changes. This alternative calibration delivers a similar fit to the data and the calibrated parameters do not differ much from the baseline. At the same time, it is able to match both the frequency of price changes and fraction of price changes that are price increases -- both untargeted -- well.


\section{Robustness}  \label{app:robustness}

\subsection{Colombian Data}
\label{app:eslava_strategy}

This section summarizes the empirical strategy in \cite{eslava2024size}. 

The paper combines plant-level Colombian data with a structural model that jointly disciplines demand and production. The data are from the Annual Manufacturing Survey, a long panel census of manufacturing plants covering the time period 1982 to 2012 that reports, at a detailed product level, both quantities and values for outputs and material inputs, along with plant‑level information on employment, wages, capital, and other inputs.

The empirical framework nests a production function and a demand system that are estimated jointly at the sector level. On the production side, plants use capital, labor, and materials in a Cobb–Douglas technology with possibly non‑constant returns to scale, and the residual from this real‑quantity production function is interpreted as TFPQ. On the demand side, consumers have nested CES preferences over establishments and products, which implies a plant‑level inverse demand curve linking the plant’s output price and quantity to a common demand elasticity and to a plant‑specific “appeal” or quality shifter, which we interpret as idiosyncratic demand shifter in our context. The two sides are estimated together using a proxy‑based GMM procedure that exploits the panel structure and timing assumptions to identify production elasticities, returns to scale, and the demand elasticities. Unlike the methodology in \cite{foster2008reallocation}, the identification restrictions on productivity and demand shocks are that innovations to productivity be orthogonal to lagged demand in levels in the previous, and that innovations to demand be orthogonal to productivity in the previous period. As such, productivity and demand shocks are allowed to be correlated contemporaneously.

When calibrating the model to Colombian data, we do not implement the GMM approach in \cite{eslava2024size} due to computational constraints. Instead, we build model-implied moments using the true simulated productivity and demand processes.

\subsection{Model with CES Demand and Decreasing Returns to Scale} \label{app:DRS}

In this section, we consider an alternative source of micro real rigidity arising from decreasing returns to scale technology. To do this, we make two changes to our baseline model. Instead of Kimball demand, intermediate goods producers face CES demand with a demand shifter
\begin{equation}
    y_{it} = \nu_{it} Y_{it} \left(\frac{p_i}{P}\right)^{-\theta}
\end{equation}
In addition, the production technology exhibits decreasing returns to scale ($\alpha<1$).\footnote{Note that a setup with CES demand and increasing returns to scale leads to strategic substitution in pricing. Even though we do not impose $\alpha<1$, it is needed to match the model to data.}
\begin{equation}
    y = z_{it} l_{it}^\alpha
\end{equation}

Besides these two modifications, all other aspects of the model remain unchanged and the resulting structure is identical to the baseline model in \cite{burstein2007prices}.

This version of the model share many properties as our baseline model, which is unsurprising given the discussion in Section \ref{sec:simple} that both curvature in the demand function and marginal cost function can raise the concavity of the profit function, which is central to micro real rigidity. In particular, the two model specifications deliver similar implications for the pass-through of productivity and demand shocks to prices.

To see this, consider again the static price-setting problem of a firm under flexible prices. The first-order condition to the static profit-maximization problem is
\begin{equation} \label{eq:DRS_foc}
    \left(1-\theta\right)p_{it}^{-\theta}\nu_{it}Y_{t}P_{t}^{\theta-1}+\frac{\theta}{\alpha}\frac{W_{t}}{P_{t}}\left(\frac{\nu_{it}}{z_{it}}Y_{t}P_{t}^{\theta}\right)^{\frac{1}{\alpha}}p_{it}^{\frac{-\theta}{\alpha}-1}	=0
\end{equation}

Log-linearizing (\ref{eq:DRS_foc}) around a symmetric steady state yields the following expression for the optimal price:
\begin{equation}
\hat{p}^{*} = \frac{1-\alpha}{\alpha\theta-\theta-\alpha}\left(\hat{\nu}+\hat{Y}\right)+\frac{\alpha\theta-\theta}{\alpha\theta-\theta-\alpha}\hat{P}-\frac{\alpha}{\alpha\theta-\theta-\alpha}\hat{W}+\frac{1}{\alpha\theta-\theta-\alpha}\hat{z}
\end{equation}
where hatted variables denote log-deviations from the steady state.

The cost and demand pass-throughs under this specification are then given by
 \begin{eqnarray}   
    \dfrac{\partial \hat{p}^{*}_{i}}{\partial \widehat{mc}} &=& \frac{-1}{\alpha\theta-\theta-\alpha}   \\
    \dfrac{\partial \hat{p}^{*}_{i}}{\partial \hat{\nu}_{i}} &=& \frac{1-\alpha}{\alpha\theta-\theta-\alpha},
\end{eqnarray}

Notice that with constant returns to scale ($\alpha=1$), the cost pass-through to price is complete and the demand pass-through to price is zero. When the returns to scale parameter $\alpha$ falls below one so that technology exhibits decreasing returns to scale, the cost pass-through becomes smaller while demand begins to matter for optimal pricing as shown in Figure (\ref{fig:BH_passthrough}). 

\begin{figure}[H]
    \centering 
    \caption{Pass-through of Demand and Cost Shocks to Price}  \label{fig:BH_passthrough}
    \includegraphics[scale=0.4]{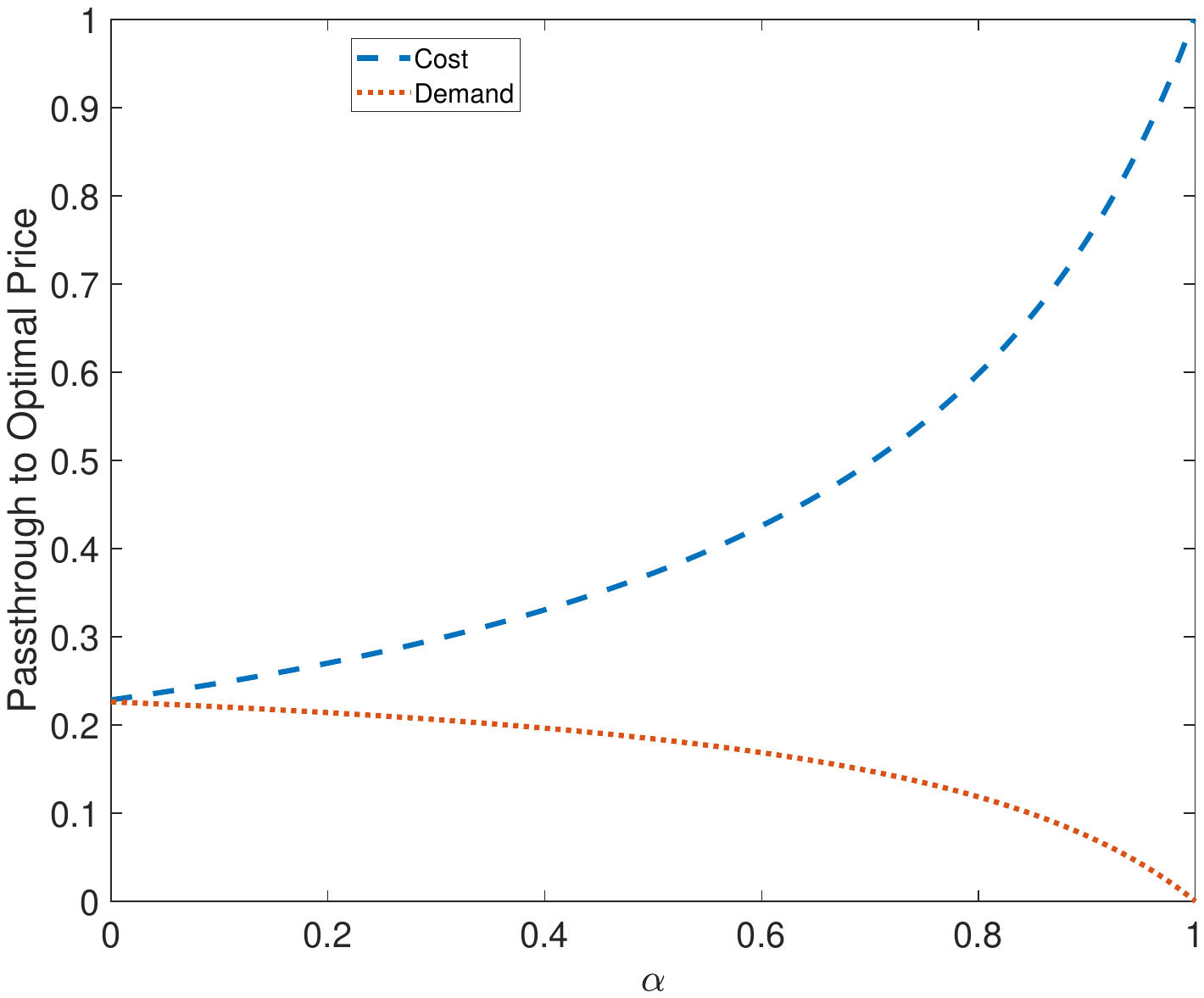}
    \justify
\footnotesize{This figure plots the cost and demand pass-through in a model with CES demand with varying levels of returns to scale $\alpha$, holding $\theta$ fixed.}
\end{figure}

Next, we explore whether cost-side micro real rigidities can also reconcile firm dynamics and pricing facts when calibrated using our strategy to the same empirical data. In doing so, we replace the two parameters pertaining to the Kimball demand system ($\omega,\psi$) with the elasticity of demand $\theta$ and return-to-scale parameter $\alpha$, while targeting the same empirical moments as in the baseline. Externally calibrated parameters are held fixed at their baseline values.

As shown in Table~\ref{tab:DRS_summary}, two variants are considered. Model A targets only the narrow firm-dynamics moments, while Model B additionally targets broader moments such as growth dispersion and the markup distribution. We first note that a model with CES demand, idiosyncratic demand shifters, and decreasing returns to scale technology can match the respective empirical moments as well as our baseline model under both calibration strategies. However, focusing on Model A -- which follows the same calibration strategy as our baseline calibration in the main text, its performance on untargeted pricing moments is worse than the baseline model using Kimball demand. In particular, it over-predicts the average size of price changes as well as the dispersion of price adjustments. The implied pass-through of supply shocks of 63\% is also higher than the baseline model calibration. In contrast, Model B -- by targeting additional moments -- delivers a much better fit. It matches the average size of price changes, fraction of upward adjustments, markup distribution closely, and implies a more plausible cost pass-through of 31\% which is within the range documented in the empirical literature.

However, the cost-side micro real rigidities framework has two shortcomings compared to our baseline model with Kimball demand and constant returns to scale. Firstly, although the average and cross-sectional dispersion of the model-implied markup distribution compare well with the data, the shape of the model-implied distribution exhibits a large hump near the frictionless desired markup which is counterfactual to the data, as shown in Figure \ref{fig:ksdensity_markup_BH}. Secondly, the calibrated returns to scale parameter of 0.59 in Model A and 0.22 in Model B are both implausibly low given that the literature typically finds constant returns to scale among manufacturing firms \citep{foster2008reallocation,eslava2024size}.

\begin{table}[t!]
\centering
\caption{Internal Calibrations: CES Demand with DRS} \label{tab:DRS_summary}
\scalebox{0.9}{
\begin{tabular}{lccc} \toprule
    {\textbf{Parameter}}& \textbf{Description} & \textbf{Model A} & \textbf{Model B} \\ \midrule
    $f$       & \multicolumn{1}{l}{Menu cost}      & 0.05 & 0.20 \\
    $\rho_z$  & \multicolumn{1}{l}{Persistence of $z^i_t$}      & 0.98 & 0.98 \\
    $\sigma_z$& \multicolumn{1}{l}{Standard deviation of $z^i_t$}      & 0.06 & 0.06 \\
    $\rho_{\nu}$  & \multicolumn{1}{l}{Persistence of $\nu^i_t$}       & 0.991 & 0.990 \\
    $\sigma_{\nu}$& \multicolumn{1}{l}{Standard deviation of $\nu^i_t$}      & 0.15 & 0.14 \\
    $\alpha$  & \multicolumn{1}{l}{Returns to scale} & 0.59 & 0.22 \\
    $\theta$  & \multicolumn{1}{l}{Elasticity of substitution}      & 2.43 & 3.84 \\ \midrule
    \multicolumn{1}{l}{\textbf{Narrow Moments}} & \textbf{Data} \\ \midrule
    SD($z_t^i$)                 & 0.26 & \textbf{0.27} & \textbf{0.31} \\
    5-year autocorr. of $z_t^i$ & 0.31 & \textbf{0.31} & \textbf{0.33} \\
    SD($\nu_t^i$)                 & 1.16 & \textbf{1.14} & \textbf{0.96} \\
    5-year autocorr. of $\nu_t^i$ & 0.62 & \textbf{0.60} & \textbf{0.58} \\
    IV coefficient              & --2.40 & \textbf{--2.45} & \textbf{--3.84} \\
    Corr(price, TFPQ)           & --0.54 & \textbf{--0.49} & \textbf{--0.37} \\
    Frequency of price changes  & 0.11 & \textbf{0.12} & \textbf{0.14} \\ \midrule
    \multicolumn{4}{l}{\textbf{Broader Moments}} \\ \midrule
    Growth dispersion           & 0.39 & 0.31 & \textbf{0.24} \\
    Fraction of price increases & 0.65 & 0.58 & \textbf{0.60} \\
    Average size of price change& 0.08 & 0.14 & \textbf{0.10} \\
    SD($\Delta p$)              & 0.08 & 0.15 & \textbf{0.10} \\
    Avg. markup                 & 1.56 & 1.75 & \textbf{1.56} \\
    SD(markup)                  & 0.72 & 0.27 & \textbf{0.68} \\ \midrule
    Pass-through of supply shocks & 20\%-50\% & 0.63 & 0.31 \\ \bottomrule
\end{tabular}
}
\justify
\footnotesize{This table reports the calibration of the model with CES demand and decreasing returns to scale technology. The top panel reports the values of the calibrated parameters. The middle panel reports the targeted moments in the data and the model. The bottom panel reports various untargeted moments in the data and their model counterparts.}
\end{table}

\begin{figure}[t!]
    \centering
    \caption{Cross-Sectional Distribution of Gross Markup: Model vs. Data}       
    \includegraphics[scale=0.4]{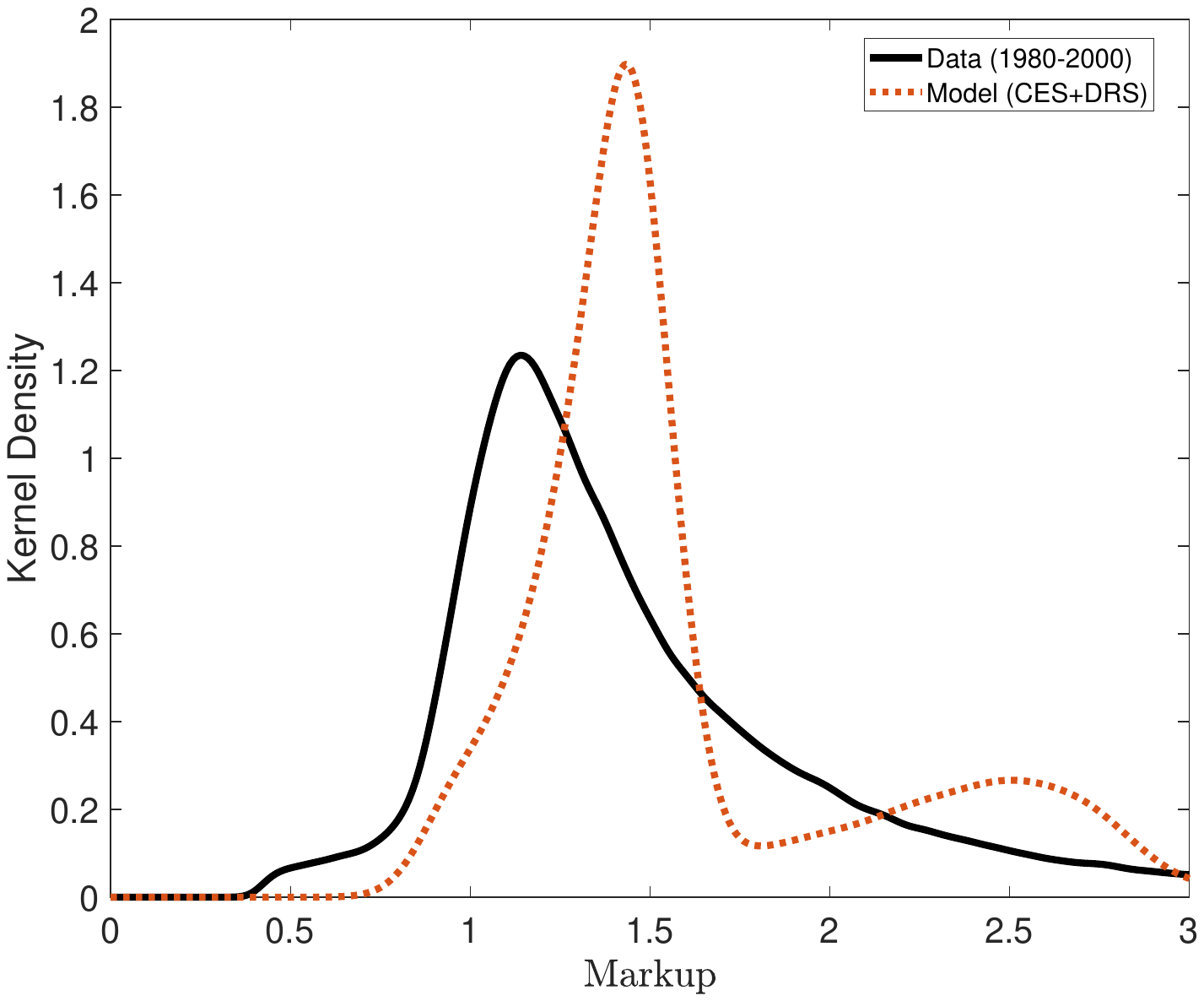}
    \label{fig:ksdensity_markup_BH}
    \justify
    \footnotesize{The figure plots the kernel density of the empirical markup distribution from publicly traded firms in the U.S. as well as the kernel density of the markup distribution in the ergodic distribution of the model with CES demand and decreasing returns to scale technology. Both kernel densities are computed using the optimal bandwidth for normal densities.}
\end{figure}

\subsection{Correlated Shocks} \label{app:correlated_shocks}

In this section, we relax the assumption of $\rho_{z\nu}=\frac{\sigma_{\nu z}}{\sigma_{z} \sigma_{\nu}}=0$ and allow stochastic innovations to firm demand and productivity to be correlated. In particular, we solve the model with two levels of correlation between demand and productivity shocks $\rho_{z\nu}=\{-0.4,0.4\}$. In doing so, we keep all other parameters fixed at their baseline values but vary the menu cost $f$ to keep the frequency of price changes identical across all specifications. The results are summarized in Table (\ref{tab:correlated}). 

\begin{table}[t!] 
\centering
\caption{Model Summary with Correlated Supply and Demand Shocks} \label{tab:correlated}
\medskip
\scalebox{0.8}{
\begin{tabular}{ccccc} \toprule
 \textbf{Moment} 	                              & \textbf{Data}  & \textbf{$\rho_{z\nu}=0$}  &  \textbf{$\rho_{z\nu}=-0.40$}  &  \textbf{$\rho_{z\nu}=0.40$} \\ \midrule
\multicolumn{1}{l}{5-year autocorr of ${z_t^i}$} &   0.31         &  \textbf{0.32}                    &  0.37                            &  0.38  \\
\multicolumn{1}{l}{Cross-sectional SD of $z_t^i$}  &   0.26         &  \textbf{0.25}                      &  0.25                           &  0.27  \\ 
\multicolumn{1}{l}{5-year autocorr of ${\nu_t^i}$} &   0.62         &  \textbf{0.62}                    &  0.62                              &  0.60  \\
\multicolumn{1}{l}{Cross-sectional SD of $\nu_t^i$}  &   1.16         &  \textbf{1.05}                    &  1.58                            &  0.92 \\
\multicolumn{1}{l}{IV Coefficient}         & --2.4           &  \textbf{--2.3}                      & --2.2                            & --1.8 \\ 
\multicolumn{1}{l}{Corr b/w price and TFPQ}        & --0.54         &  \textbf{--0.57}             & --0.69                        & --0.49  \\   \midrule
\multicolumn{1}{l}{Frequency of Price Changes}     &  0.11          &  {\bf  0.11}                   & {\bf  0.11}                  & {\bf  0.11}    \\
\multicolumn{1}{l}{Average Size of Price Changes}  &  0.08          &  0.07                          & 0.08                          & 0.04  \\   \midrule
\multicolumn{1}{l}{Average Markup}                 &  1.56          &  1.42                         &  1.44                             &  1.64 \\
\multicolumn{1}{l}{Cross-sectional SD of Markup}   &   0.72         &  0.39                          &  0.37                             &  0.64 \\ \bottomrule
\end{tabular}
}
\end{table}

\subsection{Leptokurtic Demand Shocks} \label{app:lepto}

It is well known that the distribution of price changes implied by a standard menu-cost model with Gaussian shocks exhibits negative excess kurtosis, in contrast to the positive excess kurtosis observed in the data. We therefore consider an alternative calibration with leptokurtic demand shocks, which better matches the empirical kurtosis.\footnote{Our baseline model delivers a raw kurtosis of 2.91, falling short of the U.S. estimate of 4.5, which lies in the middle of the range reported in \cite{alvarez2016real}.}

\begin{table}[t!]
    \centering
    \caption{Baseline Calibration and Calibration with Leptokurtic Demand Shocks}
    \label{tab:lepto}
    \scalebox{0.85}{
    \begin{tabular}{cccc}
    \toprule
    \textbf{Moment}                         & \textbf{Data}  & \textbf{Baseline}  & \textbf{Leptokurtic} \\ \midrule
    Frequency of price changes              & 0.11           & \textbf{0.11}      & \textbf{0.12}        \\
    Fraction of price increases             & 0.65           & 0.55               & 0.60                 \\
    Size of price changes                   & 0.08           & 0.06               & 0.08                 \\
    Raw kurtosis of price changes           & 4.50           & 2.91               & \textbf{4.48}        \\ \midrule
    5-year autocorr. of $z_t^i$             & 0.31           & \textbf{0.31}      & \textbf{0.32}        \\
    Cross-sectional SD of $z_t^i$           & 0.26           & \textbf{0.25}      & \textbf{0.24}        \\
    5-year autocorr. of $\nu_t^i$           & 0.62           & \textbf{0.58}      & \textbf{0.65}        \\
    Cross-sectional SD of $\nu_t^i$         & 1.16           & \textbf{1.04}      & \textbf{1.03}        \\
    IV coefficient                          & $-2.40$        & \textbf{$-2.31$}   & $-2.39$              \\
    Corr. between price and TFPQ            & $-0.54$        & \textbf{$-0.54$}   & \textbf{$-0.52$}     \\ \midrule
    \textbf{Parameter} & \textbf{Description} &  &  \\ \midrule
    $\psi$         & Super-elasticity                  & \textbf{$-1.10$}   & \textbf{$-0.73$} \\
    $\omega$       & Elasticity of substitution        & \textbf{1.18}      & \textbf{1.49}    \\
    $\rho_z$       & Persistence of $z_t^i$            & \textbf{0.98}      & \textbf{0.98}    \\
    $\sigma_z$     & Standard deviation of $z_t^i$     & \textbf{0.06}      & \textbf{0.06}    \\
    $\rho_{\nu}$   & Persistence of $\nu_t^i$          & \textbf{0.998}     & \textbf{0.886}   \\
    $\sigma_{\nu}$ & Standard deviation of $\nu_t^i$   & \textbf{0.03}      & \textbf{0.24}    \\
    $p_{\nu}$      & Poisson prob. for $\nu_t^i$ shock & --                 & \textbf{0.031}   \\
    $f$            & Menu cost                         & \textbf{0.016}     & \textbf{0.017}   \\ \midrule
    \end{tabular}
    }
    \justify
    \footnotesize{Note: The top panel compares the targeted moments and model-implied moments for the two specifications; boldface highlights the moments targeted in the calibration. The bottom panel reports the corresponding parameter values.}
\end{table}

The results are presented in Table \ref{tab:lepto}. The first two columns reproduce the results in Table \ref{tab:int_calib}, where we continue to use boldface to emphasize the moments targeted in the calibration and the corresponding parameter values. The last column introduces a leptokurtic shock, following \cite{midrigan2011menu}, applied to the idiosyncratic demand shock in order to match the kurtosis of non-zero price changes. Specifically, we assume that the demand shock $\nu_t^i$ follows the persistent AR(1) process in (\ref{eq:lom_zn}) with probability $p_\nu$ and remains unchanged with probability $(1-p_\nu)$ from one month to the next. We therefore add $p_\nu$ to the set of calibrated parameters and the kurtosis of non-zero price changes, equal to 4.5, to the set of targeted moments.

The calibration delivers a kurtosis of 4.48 while continuing to match the rest of the moments well. The estimate $p_\nu = 0.031$ implies that, in 96.9\% of months, a firm inherits its previous demand shock.\footnote{For comparison, \cite{vavra2014inflation} obtains $p_z = 0.13$ when applying the leptokurtic process to firm-level TFP. We instead apply it to demand rather than TFP because demand is much more persistent than TFP in \cite{foster2008reallocation}, and the infrequent-jump specification provides a convenient way to capture that persistence.} This calibration reproduces the high persistence documented in \cite{foster2008reallocation} using a low $p_\nu$, while $\rho_{\nu}$ is substantially lower and $\sigma_{\nu}$ is substantially higher. The resulting path of firm-level demand shocks is therefore flat for long periods and punctuated by occasional large jumps, which in turn generates a distribution of price changes with high kurtosis.

\subsection{Calvo Model} \label{app:calvo}

We first study a CES model with Calvo pricing and no idiosyncratic shocks. We consider the same monetary experiment as in Section~\ref{sec:nonneutrality}: a one-time, unanticipated positive level shock to nominal expenditure. This shock raises nominal expenditure growth only in the impact period; thereafter, nominal expenditure returns to its deterministic growth path. In the Calvo environment, this implies that nominal expenditure remains permanently higher by $\mu$, so that $\hat S_t=\mu$ for all $t\geq 1$.

Each period, a random fraction $\alpha$ of firms can adjust their prices freely, while the remaining fraction $1-\alpha$ keeps its previous price unchanged. For simplicity, assume that there are no aggregate risks apart from this shock and that the economy is initially in a symmetric equilibrium in which all firms set the same nominal price $\bar p$ and nominal expenditure $S=PC$ is equal to $\bar S$.

Under Calvo pricing, the aggregate price index can be written as
\begin{equation}
    P_t = \left[ \alpha X_t^{1-\theta} + \left(1-\alpha\right) P_{t-1}^{1-\theta} \right]^{\frac{1}{1-\theta}},
\end{equation}
where $X_t$ is the reset price chosen by adjusting firms.

Log-linearizing around the initial steady state, where $P_t=X_t=\bar p$, and using hatted variables to denote log deviations from steady state yields
\begin{equation}
    \hat P_t = \alpha \hat X_t + \left(1-\alpha\right)\hat P_{t-1}.
\end{equation}

Because the nominal wage is proportional to nominal expenditure and the optimal markup is constant at $\frac{\theta}{\theta-1}$ under CES demand, firms that are able to adjust choose a reset price equal to the level shift in nominal expenditure:
\begin{equation}
    \hat X_t = \mu \qquad \text{for all } t\geq 1.
\end{equation}

It follows that the aggregate price level in period 1 is
\begin{equation}
    \hat P_1 = \alpha \mu.
\end{equation}
Iterating forward, the aggregate price level in period $h$ is
\begin{equation}
    \hat P_h = \alpha \mu \sum_{i=1}^{h} \left(1-\alpha\right)^{i-1}
             = \left[1-\left(1-\alpha\right)^h\right]\mu.
\end{equation}

The response of real output is therefore
\begin{eqnarray}
    \hat C_h &=& \hat S_h - \hat P_h \\
             &=& \mu - \left[1-\left(1-\alpha\right)^h\right]\mu \\
             &=& \left(1-\alpha\right)^h \mu,
\end{eqnarray}
so the output response as a fraction of the shock is $(1-\alpha)^h$ at horizon $h$.

Given the empirical frequency of price adjustment, we set $\alpha=0.11$. This implies an initial, and peak, output response equal to 89\% of the shock and a cumulative impulse response of 8.09.\footnote{The cumulative impulse response in the Calvo model is $\sum_{t=1}^{\infty} 0.89^t = 8.09$. This also coincides with the result in \cite{alvarez2016real}, who show that the cumulative response can be expressed in terms of the kurtosis and frequency of price changes. According to their formula, the Calvo model has a cumulative impulse response of 8.09, whereas a menu-cost model \`a la \cite{golosov2007menu} has a cumulative impulse response of 1.35.}

\subsection{Model with CES Demand} \label{app:ces_models}

In this section, we present the calibration of the CES model used in Section~\ref{sec:nonneutrality}. In this calibration, we fix $\psi=0$ and $\omega=1.33$, and calibrate $(\rho_z,\sigma_z,\rho_{\nu},\sigma_{\nu})$ to match the five-year autocorrelations and cross-sectional dispersions of idiosyncratic TFPQ and demand.

\begin{table}[t!]
\centering
\caption{Internal Calibration}
\scalebox{0.9}{
\begin{tabular}{cccc}
\toprule
\textbf{Moment} & \textbf{Data} & \textbf{CES} & \textbf{Baseline} \\
\midrule
\multicolumn{1}{l}{Frequency of price changes} & 0.11 & \textbf{0.11} & \textbf{0.12} \\
\multicolumn{1}{l}{Fraction of price increases} & 0.65 & 0.58 & 0.58 \\
\multicolumn{1}{l}{Size of price changes} & 0.08 & 0.14 & 0.07 \\
\midrule
\multicolumn{1}{l}{5-year autocorr of ${z_t^i}$} & 0.31 & \textbf{0.32} & \textbf{0.32} \\
\multicolumn{1}{l}{Cross-sectional SD of $z_t^i$} & 0.26 & \textbf{0.26} & \textbf{0.25} \\
\multicolumn{1}{l}{5-year autocorr of ${\nu_t^i}$} & 0.62 & \textbf{0.62} & \textbf{0.62} \\
\multicolumn{1}{l}{Cross-sectional SD of $\nu_t^i$} & 1.16 & \textbf{1.18} & \textbf{1.05} \\
\multicolumn{1}{l}{Corr b/w TFPR and TFPQ} & 0.75 & 0.00 & \textbf{0.74} \\
\multicolumn{1}{l}{Corr b/w price and TFPQ} & --0.54 & --1.00 & \textbf{--0.57} \\
\midrule
\textbf{Parameter} & \textbf{Description} &  &  \\
\midrule
$\psi$ & \multicolumn{1}{l}{Super-elasticity} & 0 & \textbf{--1.27} \\
$\omega$ & \multicolumn{1}{l}{Elasticity} & 1.33 & \textbf{1.29} \\
$\rho_z$ & \multicolumn{1}{l}{Persistence of $z_t^i$} & \textbf{0.98} & \textbf{0.98} \\
$\sigma_z$ & \multicolumn{1}{l}{Standard deviation of $z_t^i$} & \textbf{0.05} & \textbf{0.06} \\
$\rho_{\nu}$ & \multicolumn{1}{l}{Persistence of $\nu_t^i$} & \textbf{0.992} & \textbf{0.997} \\
$\sigma_{\nu}$ & \multicolumn{1}{l}{Standard deviation of $\nu_t^i$} & \textbf{0.05} & \textbf{0.02} \\
$f$ & \multicolumn{1}{l}{Menu cost} & \textbf{0.03} & \textbf{0.03} \\
\bottomrule
\end{tabular}
}
\justify
\footnotesize{Note: The top panel of this table compares the targeted moments and model-implied moments for the two remaining model specifications (CES and Baseline). Bolded numbers highlight targeted moments. The bottom panel shows the calibrated parameter values.}
\label{tab:calib_CES_app}
\end{table}

\end{document}